\renewcommand{\p@subsection}{}
\newcommand*{\rom}[1]{\expandafter\@slowromancap\romannumeral #1@}
\newtheorem{theorem}{Theorem}[section]
\newtheorem{lemma}[theorem]{Lemma}
\newtheorem{definition}[theorem]{Definition}
\newtheorem{corollary}[theorem]{Corollary}
\newcommand{\al}{\alpha}
\newcommand{\be}{\beta}
\newcommand{\ep}{\varepsilon}
\newcommand{\td}{\tilde}
\newcommand{\bb}{\mathbb}
\newcommand{\mc}{\mathcal}
\newcommand{\Cay}{\operatorname{Cay}}
\newcommand{\spn}{\operatorname{span}}
\algnewcommand{\LongState}[1]{\State%
  \parbox[t]{\dimexpr\linewidth -\algorithmicindent}{\strut #1\strut}}
\begin{document}
	
\title{Single-shot decoding of good quantum LDPC codes}

\author{Shouzhen Gu}
\thanks{Institute for Quantum Information and Matter, California Institute of Technology, Pasadena, CA, USA}
\author{Eugene Tang}
\thanks{Center for Theoretical Physics, Massachusetts Institute of Technology, Cambridge, MA, USA}
\author{Libor Caha}
\thanks{School of Computation, Information and Technology, Technical University of Munich \&
Munich Center for Quantum Science and Technology, Munich, Germany}
\author{\\Shin Ho Choe}
\thanks{School of Computation, Information and Technology, Technical University of Munich \&
Munich Center for Quantum Science and Technology, Munich, Germany}
\author{Zhiyang He (Sunny)}
\thanks{Department of Mathematics, Massachusetts Institute of Technology, Cambridge, MA, USA}
\author{Aleksander Kubica}
\thanks{AWS Center for Quantum Computing \& California Institute of Technology, Pasadena, CA, USA}

\begin{abstract}
Quantum Tanner codes constitute a family of quantum low-density parity-check (LDPC) codes with good parameters, i.e., constant encoding rate and relative distance. In this article, we prove that quantum Tanner codes also facilitate single-shot quantum error correction (QEC) of adversarial noise, where one measurement round (consisting of constant-weight parity checks) suffices to perform reliable QEC even in the presence of measurement errors. We establish this result for both the sequential and parallel decoding algorithms introduced by Leverrier and Z\'emor. Furthermore, we show that in order to suppress errors over multiple repeated rounds of QEC, it suffices to run the parallel decoding algorithm for constant time in each round. Combined with good code parameters, the resulting constant-time overhead of QEC and robustness to (possibly time-correlated) adversarial noise make quantum Tanner codes
alluring from the perspective of quantum fault-tolerant protocols. 
\end{abstract}

\maketitle

\section{Introduction}
Quantum error correcting (QEC) codes~\cite{Shor1995,Steane1996} are the backbone of quantum fault-tolerant protocols needed to reliably operate scalable quantum computers.
Due to their simplicity, stabilizer codes~\cite{Gottesman1996}, which can be realized by measuring a set of commuting Pauli operators known as parity checks, have received much attention.
From the perspective of fault tolerance, it might be desirable to further require that qubits are placed on some lattice and to restrict parity checks to be constant-weight and geometrically local.
However, such topological QEC codes, which include the toric code~\cite{Kitaev1997,dennisTopologicalQuantumMemory2002} and the color code~\cite{Bombin2006,Bombin2007,Kubicathesis} as examples, have limited code parameters~\cite{BravyiTerhal2009,BravyiPoulinTerhal2010,Baspin2022param}. 
To avoid these limitations, one can drop the assumption about geometric locality of parity checks (while still maintaining the assumption about their constant weight) to obtain a more general family of QEC codes known as quantum low-density parity-check (QLDPC) codes; see Ref.~\cite{breuckmann2021} for a recent review.
Importantly, QLDPC codes can have essentially optimal parameters, as shown by recent breakthrough results~\cite{evra2020decodable,hastings2020fiber,PKAlmostLinear,Breuckmann2020}, culminating in the construction of (asymptotically) good QLDPC codes whose encoding rates and relative distances are constant~\cite{PK21}.
A key component of the construction of asymptotically good QLDPC codes is the presence of ``product-expanding'' local codes. Since then, a few alternative constructions of good QLDPC codes have been proposed~\cite{quantumTannerCodes, dinur2022good}.

Good parameters alone are not enough for QEC codes to be interesting beyond the theoretical realm. In order to be practically relevant and useful, QEC codes need computationally efficient decoding algorithms which process the error syndrome and identify errors afflicting the encoded information. Importantly, decoding algorithms need to operate at least at the speed at which quantum fault-tolerant protocols are being implemented; otherwise, the error syndrome will keep accumulating and one will suffer from the so-called backlog problem~\cite{Terhal2015}.
Recently, a few computationally efficient (and provably correct) decoding algorithms have been developed for good QLDPC codes~\cite{leverrier2022sequential,gu2022efficient,dinur2022good}, assuming access to the noiseless error syndrome. 

To extract the error syndrome, one usually implements appropriate quantum circuits composed of basic quantum operations, such as state preparation, entangling gates and measurements. Unfortunately, these basic operations are imperfect and, for that reason, the assumption about the noiseless error syndrome is unrealistic. In particular, practical QEC codes and decoding algorithms should exhibit robustness to measurement errors. Arguably, one of the simplest ways to achieve such robustness involves repeating measurements until a reliable account of the error syndrome is obtained~\cite{Shor1996,dennisTopologicalQuantumMemory2002}. However, this approach incurs significant time overhead since the number of repetitions needed in general grows with the code distance.

An alternative to repeated measurement rounds of the error syndrome was introduced in the form of single-shot QEC by Bomb\'in~\cite{bombinSingleshotFaulttolerantQuantum2016}.
The basic idea behind single-shot QEC is to carefully select a code for which the decoding problem has sufficient structure to reliably infer qubit errors even with imperfect syndrome measurements. The strength of this approach is that significantly fewer measurements are necessary for codes that admit single-shot decoding compared to the simple strategy of repeated measurements. 

Single-shot QEC can be considered either for stochastic or adversarial noise. In the stochastic case, one is interested in noise that afflicts a (randomly selected) constant fraction of qubits. Additional structure may be needed for both the noise and the code, since the expected weight of the errors can be far beyond the code distance. Examples of such structure include sufficiently high expansion in the associated factor graphs, e.g., quantum expander codes~\cite{Fawzi_2018}; or the presence of geometrically local redundancies among constant-weight parity checks, e.g., the 3D subsystem toric code~\cite{SingleShotSubsystemToricCode,Bridgeman2023} and the gauge color code~\cite{bombin2015gauge}. In the adversarial case, as considered by Campbell~\cite{campbellTheorySingleshotError2019}, one can realize single-shot QEC for any code by measuring a carefully chosen set of parity checks; similar ideas of exploiting a redundant set of parity checks to simultaneously handle measurement and qubit errors were also explored in Refs.~\cite{Fujiwara2014,Ashikhmin2020,Delfosse2022}. The limitation of this approach is that, even when starting with a QLDPC code, the parity checks needed for single-shot QEC may have weight growing with code length, which makes it less appealing from the perspective of quantum fault-tolerant protocols.

We remark that while stochastic noise and adversarial noise models are generally incomparable, the distinction fades for asymptotically good QEC codes. Since these codes, by definition, have constant relative distance, they have the ability to correct arbitrary errors of weight up to a constant fraction of the number of qubits. In particular, stochastic noise with sufficiently low rate is correctable with high probability. Since in the rest of the paper we focus on good QLDPC codes, it suffices to consider the case of adversarial noise.

\subsection{Main Results}

In this article, we focus on a class of asymptotically good QLDPC codes called quantum Tanner codes~\cite{quantumTannerCodes}. They admit computationally efficient decoding algorithms, such as the sequential and parallel mismatch decomposition algorithms introduced in Ref.~\cite{leverrier2022parallel} and the potential-based decoder introduced in Ref.~\cite{gu2022efficient}. The problem of decoding quantum Tanner codes has so far been considered only in the scenario with noiseless error syndrome. Here, we study the performance of the aforementioned sequential and parallel mismatch decomposition decoders in the presence of measurement errors. We show that the decoders are \emph{single-shot}, under the following definition. For a more detailed discussion of single-shot decoding, see Section~\ref{sec:singleshot}.

Suppose a data error $e$ occurs on the qubits. Let $\sigma$ be the (ideal) syndrome corresponding to the data error. Suppose that the measured syndrome is corrupted by measurement error $D$. With access to the noisy syndrome $\td \sigma = \sigma + D$ as input, the decoder tries to output a correction $\hat f$ close to the data error.

\begin{definition}[Informal Statement of Definition~\ref{def:singleshotproblem}]
    A decoder is said to be $(\alpha,\beta)$-single-shot if, for sufficiently low-weight errors, the correction $\hat{f}$ returned on input $\tilde{\sigma}$ satisfies $|e + \hat f|_R \le \alpha|e|_R + \beta|D|$, where $|e|_R$ is the stabilizer-reduced weight of $e$, i.e., the weight of the smallest error equivalent to $e$ up to the addition of stabilizers.
\end{definition}

In other words, using a single round of noisy syndrome measurement, the decoder finds and applies the correction $\hat f$, resulting in the residual error $e+\hat f$ of weight below $\al|e|_R + \be|D|$.
Let $n$ be the number of physical qubits of the quantum Tanner code. Our main theorems are as follows.

\begin{theorem}[Informal Statement of Theorem~\ref{thm:main}]
    There exists a constant $\beta$ such that the sequential decoder (Algorithm~\ref{alg:seq_decoder}) is $(\alpha=0,\beta)$-single-shot.
\end{theorem}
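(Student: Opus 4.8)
The plan is to lift the noiseless analysis of the Leverrier--Z\'emor mismatch-decomposition sequential decoder~\cite{leverrier2022parallel} (Algorithm~\ref{alg:seq_decoder}) to corrupted syndromes, tracking how the measurement error propagates. Recall the structure of that analysis: on input a genuine syndrome the decoder maintains a current estimate $\hat f$ (initially $0$) and repeatedly applies a \emph{local} move --- flipping the $O(1)$ qubits of a single region of the Cayley complex --- that strictly decreases a non-negative, integer-valued potential $U$, where $U$ controls (up to a constant factor) the weight of the syndrome discrepancy $m := \sigma(\hat f) - \td\sigma$ and satisfies $U_0 = O(|\td\sigma|)$ at initialization. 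The structural (``reduction'') lemma, powered by the product-expansion of the local codes and the expansion of the Cayley complex, guarantees that as long as $|\hat f|$ stays below a fixed fraction of the minimum distance $d$ and $m \neq 0$, a strictly $U$-decreasing local move exists; hence the decoder terminates, in the noiseless case with $m = 0$ and $|e + \hat f| < d$, so that $|e+\hat f|_R = 0$. I will use this noiseless guarantee --- and the iteration-count bound $|\hat f| = O(U_0)$ it yields --- as black boxes.

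First I would bound the weight of the correction. Since the decoder depends on the data error only through $\sigma$, which is invariant under adding stabilizers, I may assume $|e| = |e|_R$. Then $|\td\sigma| \le |\sigma| + |D| \le \Delta |e|_R + |D|$, where $\Delta = O(1)$ is the maximum check weight, so $U_0 = O(|e|_R + |D|)$; as each move decreases $U$ by at least a constant while touching $O(1)$ qubits, the decoder halts after $O(|e|_R + |D|)$ moves with $|\hat f| = O(|e|_R + |D|)$. In particular the residual error $g := e + \hat f$ has weight $O(|e|_R + |D|)$, which is below $d$ once $|e|_R$ and $|D|$ are below suitable constants.

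Next I would bound the syndrome of the residual, which is where a new ingredient is needed: a \emph{stability} version of the reduction lemma asserting that a strictly $U$-decreasing local move exists whenever $|\hat f|$ is below a fixed fraction of $d$ and $|m| > C|D|$ for an absolute constant $C$ --- that is, a fixed-weight perturbation $D$ of the target syndrome can only prevent the decoder from driving $|m|$ down to $O(|D|)$, never sooner. Granting this, at termination $|m| = O(|D|)$, and since $\sigma(g) = \sigma(\hat f) - \sigma = m + D$, we get $|\sigma(g)| = O(|D|)$. I then bootstrap by applying the noiseless guarantee to $g$ in place of $e$: since $|g|_R \le |g|$ is below the decoding threshold (for $|e|_R, |D|$ small enough), running the decoder on the genuine syndrome $\sigma(g)$ returns some $\hat f'$ with $g + \hat f'$ a stabilizer and $|\hat f'| = O(|\sigma(g)|) = O(|D|)$, whence $|g|_R \le |g + \hat f'|_R + |\hat f'| = O(|D|)$. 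Denoting the resulting constant by $\beta$, this gives $|e + \hat f|_R \le \beta|D|$ with no $|e|_R$ term --- i.e., $(\alpha = 0, \beta)$-single-shot for $|e|_R, |D|$ below fixed constants, as the informal definition requires (and consistent with exact correction when $D = 0$).

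The main obstacle is the stability lemma of the third step: one must redo the Leverrier--Z\'emor structural argument while carrying along the additional fixed-weight syndrome perturbation $D$, showing that $D$ shifts the ``no $U$-decreasing move exists'' regime only by $O(|D|)$ in $|m|$ and creates no spurious local minima of $U$. Extra care is needed because the decoder acts on one side of the CSS structure at a time and because residual weights are measured modulo stabilizers (the reason $|\cdot|_R$ rather than $|\cdot|$ appears), so one must check that stabilizer equivalences are respected throughout and --- crucially --- that all constants ($\Delta$, $C$, $\beta$, and the various weight thresholds) are independent of $n$.
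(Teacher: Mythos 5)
Your plan has the right shape (show the decoder can only stall when the residual noise-related quantity is $O(|D|)$, then convert that to a bound on $|e+\hat f|_R$), but it hinges on a ``stability lemma'' that you assert rather than prove, and that lemma is stated in terms of the wrong object. The sequential mismatch-decomposition decoder (Algorithm~\ref{alg:seq_mismatch}) does not run a potential that controls the syndrome discrepancy $m=\sigma(\hat f)-\td\sigma$; it greedily reduces the weight of the \emph{mismatch vector} $\td Z=\sum_{v\in V_1}\td\ep_v$, which is a vector in $\bb F_2^Q$ built from competing local corrections, not a syndrome. The paper itself cautions (Section~\ref{sec:discussion}) that the weight of the mismatch is in general incomparable to the weight of the syndrome, so one cannot freely swap them. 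As a result, the assertion that ``a strictly $U$-decreasing local move exists whenever $|m|>C|D|$'' has no support from the noiseless analysis, and your subsequent bootstrap (running the noiseless decoder on $g=e+\hat f$ and using $|\hat f'|=O(|\sigma(g)|)$) is built on top of this unestablished claim.

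What is actually needed, and what the paper supplies, is a chain of lemmas about the \emph{noiseless} part $Z=\td Z-Z_N$ of the mismatch: (i) a structural identification relating $Z$ to the pre-processed error $e_0$ and showing this relation is preserved by each local flip (Lemmas~\ref{lemma:V10weightedmismatchZ0} and~\ref{lemma:V10weightedmismatchinduction}); (ii) a proof that the final noiseless mismatch $Z_T$ remains $\delta$-decomposable under the hypothesized weight bounds, via the active-vertex condition of Theorem~\ref{thm:LZflipexists} (Lemma~\ref{lemma:LZdecoderworks}); (iii) an overlap-counting argument showing that when $Z_T$ is $\delta$-decomposable but no local codeword improves $\td Z_T$, one must have $|Z_T|=O(|Z_N|)=O(|D|)$ (Lemma~\ref{lem:flipexists}); and (iv) a mismatch-soundness lemma converting $|Z_T|=O(|D|)$ into $|e_T|_R=O(|D|)$ (Lemma~\ref{lem:Zsoundness}). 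Items (ii) and (iii) together are what your ``stability lemma'' is gesturing at, but they operate on the mismatch and its decomposability, and the argument is genuinely delicate (one must track active vertices of a minimal decomposition and invoke product-expansion). Your bootstrap idea for step (iv) is a plausible alternative, but without (i)--(iii) the whole argument has a hole exactly where the new work is required, so as written this is not a proof.

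One smaller point: your iteration-count bound ``$|\hat f|=O(|e|_R+|D|)$'' is essentially correct, but it follows from $|\td Z|\le 4|e|_R+\Delta^2|D|_V$ (via Lemma~\ref{lem:minimalZbound} and the fact that $\ep_v(D)\neq 0$ only on the vertex support of $D$), not from $U_0=O(|\td\sigma|)$; the mismatch weight is not directly $O(|\td\sigma|)$.
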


\begin{theorem}[Informal Statement of Theorem~\ref{thm:parallel_residual}]~\label{thm:informal_parallel}
    There exists a constant $\beta$ such that for all $\alpha > 0$, the $O(\log(1/\alpha))$-iteration parallel decoder (Algorithm~\ref{alg:noisyparalleldecoder}) is $(\alpha,\beta)$-single-shot. In particular, for $O(\log n)$ iterations of parallel decoding one obtains $\alpha=0$.
\end{theorem}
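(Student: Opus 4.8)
The plan is to isolate a single-iteration \emph{robust contraction} estimate for Algorithm~\ref{alg:noisyparalleldecoder} and then iterate it, following the noiseless analysis of the Leverrier--Z\'emor parallel decoder but carrying an extra additive term proportional to $|D|$. Write $\hat f_t$ for the cumulative correction after $t$ iterations and $e_t = e + \hat f_t$ for the effective residual data error, so that $e_0 = e$ and the decoder's output is $\hat f = \hat f_T$. Since the decoder accesses the syndrome only through the residual $\td\sigma - \sigma(\hat f_t) = \sigma(e_t) + D$, at every iteration it is in effect running the noiseless parallel decoder on an error whose \emph{true} syndrome has been perturbed by the fixed vector $D$; this reframing is what lets us reuse the noiseless machinery at each step.

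The heart of the argument is the one-iteration lemma: there are constants $0 < c < 1$, $C > 0$, and a threshold $\rho' > 0$ such that if $w(e_t) \le \rho' n$, then one iteration produces $e_{t+1}$ with
\[
w(e_{t+1}) \;\le\; c\, w(e_t) + C\,|D|,
\]
where $w(\cdot)$ is the Leverrier--Z\'emor potential that the parallel decoder is designed to decrease, which in the relevant low-weight regime is within constant factors of the stabilizer-reduced weight $|\cdot|_R$. To prove this I would compare the actual iteration, run on $\sigma(e_t) + D$, with a hypothetical iteration run on the clean syndrome $\sigma(e_t)$ from the same starting error $e_t$. Because the code is LDPC and the underlying Cayley complex has bounded degree, each flipped bit of $D$ enters the local decoding view of only $O(1)$ vertices; hence the two runs compute \emph{identical} local corrections at all but $O(|D|)$ vertices, and the local correction at any single vertex has weight $O(1)$. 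Consequently the two resulting residual errors differ by a vector of weight $O(|D|)$, while the hypothetical clean iteration obeys the noiseless single-step contraction $w(\,\cdot\,) \le c\, w(e_t)$; combining the two gives the displayed bound. One must also check that the hypothesis $w(e_t) \le \rho' n$ propagates: since $c < 1$, the bound is preserved as soon as $C|D| \le (1-c)\rho' n$, so it suffices that $w(e)$ and $|D|$ be below appropriate constant fractions of $n$, which is exactly the ``sufficiently low-weight'' hypothesis in the definition of single-shot.

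Granting the one-iteration lemma, iterating it yields
\[
w(e_T) \;\le\; c^{T} w(e) + C\bigl(1 + c + \dots + c^{T-1}\bigr)|D| \;\le\; c^{T} w(e) + \frac{C}{1-c}\,|D|,
\]
and converting back to $|\cdot|_R$ gives $|e + \hat f|_R \le A\,c^{T} |e|_R + \beta |D|$ for suitable constants $A \ge 1$ and $\beta = \tfrac{C}{a(1-c)}$, where $a$ is the lower comparison constant between $|\cdot|_R$ and $w$. For any $\alpha > 0$, choosing $T = \lceil \log_{1/c}(A/\alpha) \rceil = O(\log(1/\alpha))$ forces $A c^{T} \le \alpha$, which is the $(\alpha,\beta)$-single-shot property. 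For the last claim, $|e|_R \le n$, so $T = O(\log n)$ gives $A c^{T} |e|_R < 1$; since $|e_T|_R$ is a nonnegative integer it must equal $0$, i.e., $e + \hat f$ is a stabilizer, which is the $\alpha = 0$ statement.

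The step I expect to be the main obstacle is the noiseless single-step contraction inside the one-iteration lemma, made robust to an adversarially chosen set of corrupted local views. The noiseless parallel analysis is already the subtlest part of the Leverrier--Z\'emor framework, since many local corrections are applied simultaneously and one must rule out destructive interference among them; here one additionally needs the local-minimality and product-expansion properties that drive the potential decrease to degrade gracefully when $O(|D|)$ vertices are excised from the argument, and one needs to bound the interaction between the (correct) moves at unaffected vertices and the (possibly wrong) moves at the $O(|D|)$ affected vertices. Everything else --- the reduction to a single iteration, the geometric series, and the integrality argument for $\alpha = 0$ --- is routine, and parallels the treatment of the sequential decoder in Theorem~\ref{thm:main}.
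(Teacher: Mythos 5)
Your overall scaffold---establish a one-iteration contraction estimate with an additive $O(|D|)$ term, iterate geometrically, then convert back to $|\cdot|_R$ and appeal to integrality for the $\alpha=0$ endpoint---is structurally the right shape and does mirror the high-level architecture of the paper's Section~4.4. But the quantity you chose to contract, and the way you propose to prove the contraction, both have genuine gaps.

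First, you contract a ``potential'' $w(\cdot)$ that you assert is ``within constant factors of the stabilizer-reduced weight $|\cdot|_R$.'' The parallel mismatch-decomposition decoder (Algorithm~\ref{alg:noisyparalleldecoder}) does not decrease any such potential; what it is designed to decrease is the Hamming weight of the noisy mismatch vector $|\tilde Z|$. This quantity is \emph{not} comparable to $|e|_R$ in general: Lemma~\ref{lem:minimalZbound} gives only the one-sided bound $|Z|\le 4|e|_R$, while the reverse inequality $|Z|\gtrsim |e|_R$ (Lemma~\ref{lem:Zsoundness}) holds only under the nontrivial hypothesis that $Z$ is $\delta$-decomposable \emph{and} that the total weight is below the code distance. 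Establishing and propagating $\delta$-decomposability of the residual mismatch through the iterations is not a formality: it is the content of Lemma~\ref{lemma:LZdecoderworks} and Lemma~\ref{lem:parallelresidualmismatchbound}, and your proposal does not engage with it at all. (You may be thinking of the potential-based decoder of Ref.~\cite{gu2022efficient}, but that is a different algorithm; the one in the statement is the mismatch decoder.)

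Second, your one-iteration lemma rests on a ``noiseless single-step contraction $w\le c\,w$'' that is asserted but not established, and I do not believe it holds in the form you need. The Leverrier--Z\'emor parallel analysis (their Lemma~18, reused here as Lemma~\ref{lem:parallelmismatchreduction}) does not prove a multiplicative per-iteration reduction of any error-weight-like quantity; what it proves is that one parallel iteration removes a constant fraction of the mismatch weight \emph{that the full sequential decoder would eventually remove}, i.e., $|\tilde Z^{(k)}| - |\tilde Z^{(k+1)}| \ge \gamma\bigl(|\tilde Z^{(k)}| - |\tilde Z_T^{(k)}|\bigr)$. This becomes a geometric contraction of $|\tilde Z^{(k)}|$ only after one separately shows $|\tilde Z_T^{(k)}| = O(|D|)$ (Lemma~\ref{lem:parallelresidualmismatchbound}), which in turn uses Lemma~\ref{lem:flipexists} and $\delta$-decomposability. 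Your alternative route---compare the noisy iteration to a clean one on $\sigma(e_t)$ and argue the two runs agree except at $O(|D|)$ vertices---bounds only the \emph{difference} between the two runs, not the clean run's contraction; it also implicitly assumes the clean-iteration guarantee applies to arbitrary intermediate errors $e_t$, which are not locally reduced and may not satisfy the weight hypotheses of the noiseless theorem. In short, you have named the hard step but the outline you give for it would not close; the paper instead sidesteps it by contracting $|\tilde Z|$ throughout, and relating $|\tilde Z|$ back to $|e_T|_R$ only once, at the end, through the soundness lemma.
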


We further consider the situation where multiple rounds of qubit error, noisy syndrome measurement, and decoding occur. We show that under mild assumptions on the weights of qubit and measurement errors, repeated applications of an $(\alpha,\beta)$-single-shot decoder will keep the residual error weight bounded. Specifically, consider the case where an initial error $(e_1,D_1)$ is partially corrected by the decoder, leaving a residual error $e_1'$. A new error $(e_2,D_2)$ is then applied on top of the existing residual error, giving total error $(e_1'+e_2,D_2)$. The decoder attempts to correct using a new round of syndrome measurements (without using the syndromes of previous rounds), leaving residue $e_2'$. This process is repeated for multiple rounds. Then we have the following.

\begin{theorem}[Informal Statement of Theorem~\ref{thm:multiroundadvserialerror}]
    Consider an $(\alpha,\beta)$-single-shot decoder and multiple rounds of errors $(e_i,D_i)$ for $i=1,\cdots,M$.
    For any $c > 0$, there exists a constant $C_*>0$ such that if $\max(|e_i|,|D_i|) \le C_*n$ for all $i$, then the final residual error $e_M'$ satisfies $|e_M'|_R\le cn$.
\end{theorem}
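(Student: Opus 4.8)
The plan is to induct on the round index, tracking a single scalar quantity — the stabilizer-reduced weight $w_i := |e_i'|_R$ of the residue left after round $i$, with $w_0 := 0$ since no error is present before the first round — and to show that $w_i$ stays below a fixed fraction $\rho n$ of the block length, where $\rho$ shrinks to $0$ as $C_*$ does.

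First I would unpack the single-shot definition (Definition~\ref{def:singleshotproblem}) to name the threshold: let $\delta > 0$ be a constant such that the guarantee $|e+\hat f|_R \le \alpha|e|_R + \beta|D|$ holds whenever $|e|_R \le \delta n$ and $|D| \le \delta n$. The key structural observation is that only the \emph{reduced} weight of the data error enters this hypothesis, so passing a residue from one round into the next costs only in reduced weight: in round $i$ the data error seen by the decoder is $e_{i-1}' + e_i$, and the triangle inequality for $|\cdot|_R$ together with $|e_i|_R \le |e_i|$ gives $|e_{i-1}' + e_i|_R \le w_{i-1} + |e_i|$.

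Next, working in the regime $\alpha < 1$ — which is the relevant one, as the sequential decoder of Theorem~\ref{thm:main} has $\alpha = 0$ and the parallel decoder of Theorem~\ref{thm:parallel_residual} can reach any $\alpha > 0$ — I would set $\rho := \frac{\alpha+\beta}{1-\alpha}\, C_*$ and pick $C_* > 0$ small enough (independently of $M$ and $n$) that $\rho + C_* \le \delta$ and $\rho \le c$; both hold for all sufficiently small $C_*$ since $\rho \to 0$. The induction claim is $w_i \le \rho n$ for every $i$, with base case $w_0 = 0$. For the step, the observation above yields $|e_{i-1}' + e_i|_R \le \rho n + C_* n \le \delta n$, and $|D_i| \le C_* n \le \delta n$, so the single-shot guarantee applies in round $i$ and
$$ w_i \le \alpha\,|e_{i-1}' + e_i|_R + \beta\,|D_i| \le \alpha\bigl(w_{i-1} + C_* n\bigr) + \beta C_* n \le \alpha\rho n + (\alpha+\beta)C_* n = \rho n, $$
where the last equality uses $(1-\alpha)\rho = (\alpha+\beta)C_*$. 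This closes the induction; in particular $|e_M'|_R = w_M \le \rho n \le c n$.

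I do not expect a deep obstacle here — it is a geometric-series bound — but there are two points I would be careful about. The first is that the single-shot guarantee must stay applicable at every one of the $M$ rounds; this is precisely why one needs the \emph{uniform} bound $w_i \le \rho n$ rather than one that degrades with $i$, and why the contraction $\alpha < 1$ is essential: for $\alpha \ge 1$ the residue could in principle grow without bound over many rounds and no $C_*$ would work. The second is the distinction between reduced and raw weight: the argument above relies on the ``sufficiently low-weight'' hypothesis of Definition~\ref{def:singleshotproblem} being stated in terms of $|e|_R$, since a decoder may leave the residue in a high-raw-weight representative of its class; if instead raw weight were required, one would also have to invoke a bound on the raw weight of the residue produced by the decoders of Theorems~\ref{thm:main} and~\ref{thm:parallel_residual}, carried along as an additional invariant in the induction.
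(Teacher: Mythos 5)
Your proof is correct and follows essentially the same route as the paper's: an induction on the stabilizer-reduced weight $w_i = |e_i'|_R$ of the residue, closed by a geometric-series bound that relies on the contraction $\alpha < 1$, with the threshold condition of Definition~\ref{def:singleshotproblem} re-verified at every round via the uniform invariant $w_i \le \rho n$. The only cosmetic difference is that you collapse the paper's three constants $A,B,C$ into a single threshold $\delta$ and take $R=S=C_*$, which matches the informal statement; the resulting fixed point $\rho = \frac{(\alpha+\beta)C_*}{1-\alpha}$ coincides with the paper's bound $\frac{\alpha R + \beta S}{1-\alpha}$ under that specialization.
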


A direct implication of this result is that for the parallel decoder (Algorithm~\ref{alg:noisyparalleldecoder}), a constant number of iterations suffices to keep the residual error weight bounded at each round. This process can be repeated essentially indefinitely until ideal error correction is required, at which point the $O(\log n)$-iteration parallel decoder can be used. For more details, see the discussion at the end of Section~\ref{sec:MultipleRoundsOfDecoding}.

The rest of this paper is organized as follows. In Section~\ref{sec:qTanner}, we provide the necessary background on quantum Tanner codes. For more detailed explanations, see Refs.~\cite{quantumTannerCodes}~and~\cite{leverrier2022parallel}. In Section~\ref{sec:singleshot}, we describe the decoding problem for quantum (CSS) codes under measurement noise, and discuss the notion of single-shot decoding. We then define $(\alpha,\beta)$-single-shot decoding and derive general consequences of this definition under multiple rounds of error and decoding. The main result of this section is the proof of Theorem~\ref{thm:multiroundadvserialerror}. Section~\ref{sec:proofs} forms the bulk of the paper. There, we review the sequential and parallel decoders from Ref.~\cite{leverrier2022parallel} and prove that the decoders are single-shot in Theorems~\ref{thm:main} and~\ref{thm:parallel_residual}. Finally, we end with some discussions in Section~\ref{sec:discussion}.

\section{Quantum Tanner Codes}\label{sec:qTanner}

\subsection{Classical codes}
A classical binary linear code is a subspace $C\subseteq \bb F_2^n$. We refer to $n$ as the block length of the code. The number of encoded bits (also referred to as the code dimension) is given by $k=\dim C$ and the rate of the code is $R = k/n$. The distance of $C$ is defined as $d=\min_{x\in C\setminus\{0\}} |x|$, where $|\cdot|$ is the Hamming weight of a vector and where $0$ denotes the zero vector. A code with distance $d$ can protect against any unknown error of weight less than $d/2$. Often, it is useful to specify a code $C$ via a parity check matrix $H$. By definition, $C = \ker H$.

The dual code of a code $C$ is defined as $C^\perp = \{x\in \bb F_2^n: \langle x,y\rangle=0\ \forall y\in C\}$. The tensor product code of two codes $C_A\subseteq \bb F_2^A, C_B\subseteq \bb F_2^B$ is $C_A\otimes C_B\subseteq \bb F_2^{A\times B}$, where the codewords can be thought of as matrices such that every column is a codeword of $C_A$ and every row is a codeword of $C_B$. The dual tensor code of $C_A$ and $C_B$, denoted by $C_A \boxplus C_B$, is defined as
\[
C_A \boxplus C_B \equiv \left(C_A^\perp \otimes C_B^\perp\right)^\perp = C_A\otimes \bb F_2^B + \bb F_2^A\otimes C_B \subseteq \bb F_2^{A\times B}.
\]

A parity check matrix for $C_A\boxplus C_B$ is $H_A\otimes H_B$, where $H_A$ and $H_B$ are the parity check matrices of $C_A$ and $C_B$, respectively.

The dual tensor codes we use are required to satisfy the following robustness condition.
\begin{definition}
The code $C_A\boxplus C_B$ is said to be $\kappa$-product-expanding if any $x\in C_A\boxplus C_B$ can be expressed as $c+r$, with $c\in C_A\otimes \bb F_2^B$ and $r\in \bb F_2^A\otimes C_B$ such that
\begin{equation}
	\kappa \left(\frac{1}{|A|}\|c\|_A + \frac{1}{|B|}\|r\|_B\right)\le \frac{1}{|A||B|}|x|\, .
\end{equation}
\end{definition}

Here, $\|c\|_A$ denotes the number of non-zero columns in $c$ and $\|r\|_B$ denotes the number of non-zero rows in $r$. When it is clear from context, we will drop the subscripts on the norms. The notion of product-expansion was introduced by Panteleev and Kalachev~\cite{PK21}. It is equivalent to robust testability of tensor product codes~\cite{BS06} and agreement testability~\cite{dinur2021locally}, and also implies another notion called $w$-robustness of dual tensor codes~\cite{quantumTannerCodes}. It has been proven that random codes are product-expanding with high probability~\cite{KPTwoSided, dinur2022good}.

\begin{theorem}[Theorem 1 in Ref.~\cite{KPTwoSided}]
    Let $\rho\in (0,1)$. For any $\Delta$, let $C_A$ be a random code of dimension $\lceil\rho \Delta\rceil$ and $C_B$ be a random code of dimension $\lceil (1-\rho)\Delta\rceil$. There exists a constant $\kappa$ such that both $C_A\boxplus C_B$ and $C_A^\perp \boxplus C_B^\perp$ are $\kappa$-product-expanding with probability approaching 1 as $\Delta\to\infty$.
\end{theorem}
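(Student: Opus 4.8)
Since the statement concerns random linear codes, the natural route is a first‑moment (union bound) argument, so I will only sketch the structure. Fix $\rho$ and recall that $C_A\boxplus C_B=\ker(H_A\otimes H_B)$, where we may take $H_A$ a uniformly random $(1-\rho)\Delta\times\Delta$ matrix and $H_B$ a uniformly random $\rho\Delta\times\Delta$ matrix, both of full rank with probability $1-o(1)$; this realizes $C_A,C_B$ as uniformly random codes of block length $\Delta$ and the stated dimensions. Viewing a word $x$ as an $A\times B$ matrix, that is, as a linear map $\bb F_2^B\to\bb F_2^A$, one reads off from $C_A\boxplus C_B=(C_A^\perp\otimes C_B^\perp)^\perp$ that $x\in C_A\boxplus C_B$ iff $x$ maps $C_B^\perp$ into $C_A$ (note $\dim C_B^\perp=\rho\Delta=\dim C_A$). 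Since $C_A^\perp,C_B^\perp$ are again independent uniformly random codes of complementary dimensions, it suffices to find $\kappa=\kappa(\rho)>0$ such that $C_A\boxplus C_B$ fails to be $\kappa$-product-expanding with probability $o(1)$: the dual $C_A^\perp\boxplus C_B^\perp$ is then covered by the same bound with $\rho\mapsto 1-\rho$, and a union over the two events finishes the proof.

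The plan is to union-bound over potential \emph{witnesses} to the failure of $\kappa$-product-expansion, a witness being a nonzero $x\in C_A\boxplus C_B$ whose cheapest decomposition $x=c+r$ obeys $\|c\|_A\,|B|+\|r\|_B\,|A|>|x|/\kappa$. Index a witness by its shape -- the set $S\subseteq B$ of the $a:=\|c\|_A$ nonzero columns of $c$, the set $T\subseteq A$ of the $b:=\|r\|_B$ nonzero rows of $r$ -- and its weight $\omega:=|x|$. Outside the ``plus-shaped'' region $(A\times S)\cup(T\times B)$ the word $x$ vanishes, so $x$ is supported on a set of size $\le(a+b)\Delta$ and the witness inequality reads $\omega<\kappa(a+b)\Delta$; hence for a fixed shape the number of candidate words is $\le\binom{(a+b)\Delta}{\omega}\le 2^{h(\kappa)(a+b)\Delta}$ with $h$ the binary entropy function, while the numbers of shapes $\binom{\Delta}{a}\binom{\Delta}{b}$ and of weights contribute only a $2^{O(\Delta)}$ prefactor, negligible below. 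Against this, the probability that a \emph{fixed} nonzero $x$ of rank $t$ lies in $C_A\boxplus C_B$ is, up to lower-order factors, about $2^{-c_\rho\,\Delta\,t}$ with $c_\rho:=\min(\rho,1-\rho)$: membership forces, up to rare further degeneracies, an $\approx t$-dimensional subspace into a uniformly random code of codimension $\rho\Delta$ or $(1-\rho)\Delta$, an event of probability $\approx 2^{-\rho\Delta t}$ or $2^{-(1-\rho)\Delta t}$. Thus, provided every witness satisfies $\operatorname{rank}(x)\ge\epsilon(a+b)$ for a constant $\epsilon=\epsilon(\rho)>0$, each shape contributes at most $2^{O(\Delta)}\cdot 2^{h(\kappa)(a+b)\Delta}\cdot 2^{-c_\rho\epsilon\,\Delta(a+b)}$, and taking $\kappa$ small enough that $h(\kappa)<c_\rho\epsilon$ makes the total $o(1)$. (Taking $\kappa$ below a constant multiple of the Gilbert--Varshamov distances $d(C_A)/\Delta,d(C_B)/\Delta$ already excludes witnesses with small $\max(a,b)$, so one may assume $a,b$ are comparable to $\Delta$.)

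The main obstacle is precisely the rank lower bound $\operatorname{rank}(x)\gtrsim a+b$ assumed above, and it is also where two-sidedness bites: a low-rank $x$ can lie in $C_A\boxplus C_B$ with far higher probability -- in the extreme, if several of its columns lie in $C_A$ then $x\in C_A\otimes\bb F_2^B\subseteq C_A\boxplus C_B$ with no constraint at all -- so the probability estimate is useless for such $x$ and they must be excluded structurally. The required statement is a combinatorial lemma: a \emph{low-weight}, low-rank element of $C_A\boxplus C_B$ supported on a small plus-shaped region already decomposes cheaply, being assembled from a bounded number of low-weight codewords of $C_A$ placed in full columns and of $C_B$ placed in full rows (each of weight linear in $\Delta$, by the Gilbert--Varshamov distance of the random codes); hence such an $x$ is not a witness, and the union bound may be restricted to high-rank $x$. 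Establishing this lemma with good constants, and then tuning $c_\rho,\epsilon,\kappa$ so that the exponents come out favorable \emph{simultaneously} for $C_A\boxplus C_B$ (governed by $\rho$) and $C_A^\perp\boxplus C_B^\perp$ (governed by $1-\rho$) -- competing demands, which is exactly why genuinely two-sided product-expansion needed a dedicated argument rather than following from prior one-sided robust-testability results -- is where essentially all of the work lies, the bound having to be secured uniformly over every admissible shape $(a,b)$ and weight $\omega$.
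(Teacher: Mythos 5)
This statement is not proved in the paper; it is quoted as Theorem 1 of Ref.~\cite{KPTwoSided} and invoked as a black box to supply the product-expanding local codes used in the quantum Tanner code construction. There is therefore no in-paper argument for you to have matched or departed from.

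On the merits of the sketch itself: the union-bound scaffolding is the natural approach, and you correctly identify the crux --- low-rank members of $C_A\boxplus C_B$ (most extremely, elements of $C_A\otimes\mathbb{F}_2^B$) are far too likely for a direct union bound and must be excluded structurally. But your argument stops exactly there. The ``combinatorial lemma'' you invoke --- that low-weight, low-rank members of $C_A\boxplus C_B$ supported on a small plus-shaped region already decompose cheaply into full columns of $C_A$ and full rows of $C_B$ --- is asserted, not proved, and you yourself flag it as ``where essentially all of the work lies.'' That is a genuine gap rather than a proof: the sketch reduces the theorem to an open structural claim whose verification carries the actual content, and the surrounding estimates (the rank-conditioned membership probability $\approx 2^{-c_\rho\Delta t}$, which conflates a joint constraint on $H_A$ and $H_B$ with a single subspace-in-random-code event, and the claim that small $\kappa$ rules out all witnesses with small $\max(a,b)$ rather than just the degenerate ones in $C_A\otimes\mathbb{F}_2^B$ or $\mathbb{F}_2^A\otimes C_B$) are heuristics that would need to be made precise once the lemma is in hand. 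As a plan of attack it is reasonable; as a proof it is incomplete.
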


\subsection{Quantum codes}\label{subsec:qcodes}
An $n$-qubit quantum code is a subspace $\mc C$ of an $n$-qubit Hilbert space, i.e., $\mc C\subseteq \left(\bb C^2\right)^{\otimes n}$. We are interested in stabilizer codes, which are codes that can be expressed as the simultaneous $+1$-eigenspace of an abelian subgroup $\mc S$ of the $n$-qubit Pauli group satisfying $-I \not\in\mc S$. If $\mc S$ can be generated by two sets $\mc S_X$ and $\mc S_Z$ comprising, respectively, Pauli $X$-type and $Z$-type operators, then we refer to the corresponding stabilizer code as a Calderbank-Shor-Steane (CSS) code~\cite{cssCalderbankShor,cssSteane}. By ignoring the phase factors for such $X$-type and $Z$-type operators, we can identify them with their supports as vectors in $\bb F_2^n$.

For any CSS code stabilized by $\mc S=\langle \mc S_X, \mc S_Z\rangle$, we can define two $n$-bit classical codes $C_X = \ker H_X$ and $C_Z = \ker H_Z$, where each row in $H_X$ and $H_Z$ is the support of a stabilizer generator in $\mc S_X$ and $\mc S_Z$, respectively. The dimension of a CSS code is $k = k_X+k_Z-n$, where $k_X$ and $k_Z$ are the dimensions of $C_X$ and $C_Z$, respectively. The distance is $d = \min(d_X, d_Z)$, where $d_X = \min_{x\in C_Z\setminus C_X^\perp} |x|$ and $d_Z = \min_{x\in C_X\setminus C_Z^\perp} |x|$. A quantum code of distance $d$ can protect against any unknown error of weight less than $d/2$. A quantum code $\mc C\subseteq (\bb C^2)^{\otimes n}$ of dimension $k$ and distance $d$ is said to be an $[[n,k,d]]$ code. A family of CSS codes is said to be low-density parity-check (LDPC) if $H_X$ and $H_Z$ are sparse, i.e., have at most a constant number of non-zero entries in every column and row.

\subsection{Quantum Tanner code construction}
We now describe the construction of quantum Tanner codes. The code is placed on a geometric object called the left-right Cayley complex. Let $G$ be a finite group and $A=A^{-1}, B=B^{-1}$ be two symmetric generating sets of $G$. The left-right Cayley complex $\Cay_2(A,G,B)$ is a two-dimensional object with vertices $V$, edges $E$, and faces $Q$ defined as follows:
\begin{itemize}
	\item $V = V_{00}\sqcup V_{01}\sqcup V_{10}\sqcup V_{11}$, where $V_{ij} = G\times \{(i,j)\}$ for $i,j\in \{0,1\}$\,,
	\item $E = E_A\sqcup E_B$, where $E_A = \{\{(g,i0), (ag,i1)\}: g\in G, a\in A, i\in \{0,1\}\}$ and $E_B = \{\{(g,0j), (gb, 1j)\}: g\in G, b\in B, j\in \{0,1\}\}$\,,
	\item $Q = \{\{(g,00), (ag, 01), (gb, 10), (agb, 11)\}: g\in G, a\in A, b\in B\}$\,.
\end{itemize}
Let $Q(v)$ denote the set of faces incident to a given vertex $v$. Each face incident to $v$ can be obtained by choosing an $A$-type edge and a $B$-type edge incident to $v$ and completing them into a square. Therefore, $Q(v)$ is in bijection with the set $A\times B$, and can be thought of as a matrix with rows indexed by $A$ and columns indexed by $B$ (Figure~\ref{fig:local_view}). Similarly, the set of faces incident to a given $A$-edge is in bijection with $B$ and the set of faces incident to a given $B$-edge is in bijection with $A$.

\begin{figure}
    \centering
    \includegraphics[width=.8\linewidth]{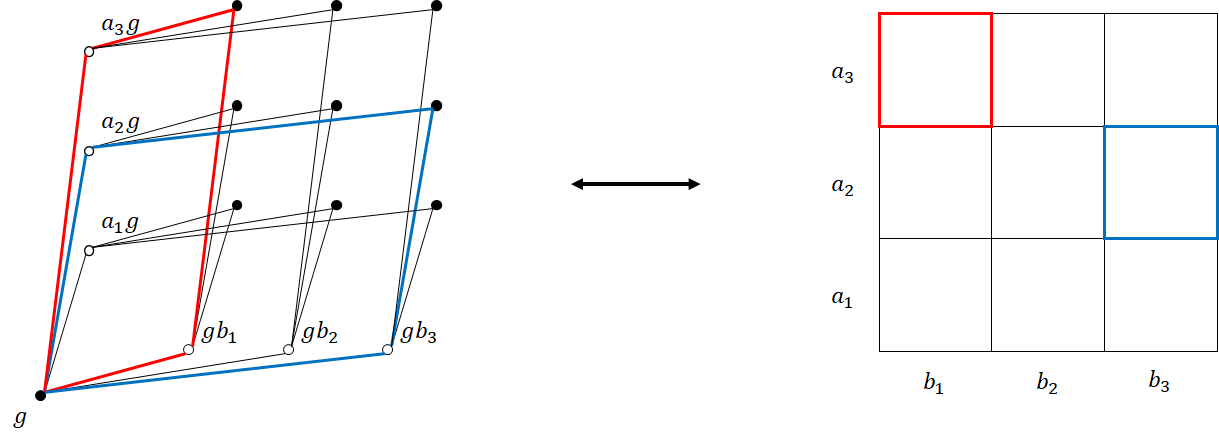}
    \caption{The local structure of the left-right Cayley complex around a vertex labelled by $g\in G$. The incident faces $Q(v)$ has a natural bijection with $A\times B$. As examples, the red and blue faces in the complex are mapped to the squares of the same colors in the matrix given by $A\times B$.}
    \label{fig:local_view}
\end{figure}

Consider the usual Cayley graph $\Cay(A,G)$ with the vertex set $G$ and the edge set $\{\{g,ag\}: g\in G, a\in A\}$. Ignoring the $B$ edges from the complex, we have that $(V,E_A)$ is the disjoint union of two copies of the bipartite cover of $\Cay(A,G)$. Similarly, $(V,E_B)$ is the disjoint union of two copies of the bipartite cover of $\Cay(G,B)$.\footnote{We denote the Cayley graph with left group action by $\Cay(A,G)$ and the Cayley graph with right group action by $\Cay(G,B)$. Note that the right Cayley graph $\Cay(G,B)$ with edges $\{g,gb\}$ is isomorphic to the left Cayley graph $\Cay(B,G)$ by mapping every $g$ to $g^{-1}$.} We say that a $\Delta$-regular graph is Ramanujan if the second largest eigenvalue of its adjacency matrix is at most $2\sqrt{\Delta-1}$, and we will consider left-right Cayley complexes with component Cayley graphs $\Cay(A,G)$ and $\Cay(G,B)$ that are Ramanujan. Explicitly, Ramanujan Cayley graphs can be obtained by taking $G = \operatorname{PSL}_2(q^i)$, where $q$ is an odd prime power and $A, B$ are (appropriately chosen) symmetric generating sets of constant size $\Delta = |A|=|B|=q+1$~\cite{dinur2021locally}.

Quantum Tanner codes are CSS codes defined by placing qubits on the faces of a left-right Cayley complex. We fix two classical codes, $C_A$ of length $|A|$ and $C_B$ of length $|B|$, which are used to define a pair of local codes providing the parity checks of the quantum code. An $X$-type stabilizer generator is defined as a codeword from a generating set of $C_0 = C_A\otimes C_B$, with support on the faces incident to a given vertex in $V_0 = V_{00}\cup V_{11}$. More precisely, there is an $X$-type stabilizer generator $s(x,v)$ for every generator $x\in C_A\otimes C_B$ and every vertex $v\in V_0$. Identifying $Q(v)$ with $A\times B$ using the bijection explained earlier, the support of $s(x,v)$ is the subset of $Q(v)$ defined by the support of $x$; see Figure~\ref{fig:stabilizer_example} for an illustration. Similarly, the $Z$-type stabilizers are generated by codewords of $C_1 = C_A^\perp\otimes C_B^\perp$ on the faces incident to vertices of $V_1 = V_{01}\cup V_{10}$. The fact that $X$ and $Z$ parity checks commute is because $X$ and $Z$ generators are either disjoint or overlap on the faces incident to a single edge. On this set of faces, isomorphic to either $B$ or $A$, the supports of the $X$ and $Z$ operators are codewords of either $C_B$ and $C_B^\perp$, respectively, or $C_A$ and $C_A^\perp$, respectively. It is clear that a family of quantum Tanner codes is QLDPC if the degrees of the component Cayley graphs are bounded.

\begin{figure}
    \centering
    \includegraphics[width=.8\linewidth]{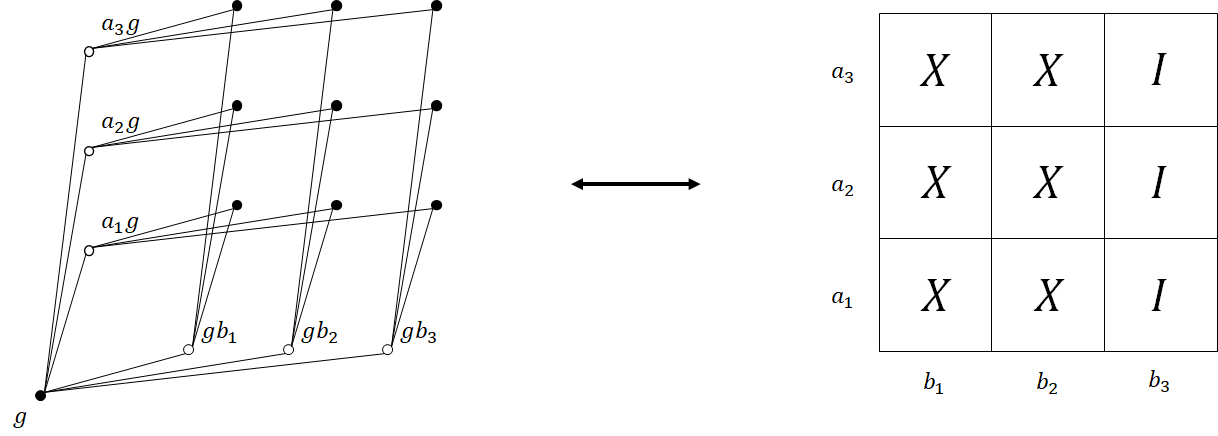}
    \caption{An example of a stabilizer generator with local codes $C_A = \spn\{111\}$ and $C_B=\spn\{110,011\}$. The codeword $x=111\otimes 110 \in C_A\otimes C_B$ has support as shown on the right. Identifying that matrix with the faces incident to a $V_0$ vertex gives an $X$-type stabilizer generator.}
    \label{fig:stabilizer_example}
\end{figure}

Leverrier and Z\'emor showed that quantum Tanner codes defined on expanding left-right Cayley complexes using product-expanding local codes have good parameters~\cite{quantumTannerCodes,leverrier2022parallel}.

\begin{theorem}[Theorem 1 in Ref.~\cite{leverrier2022parallel}]
    \label{thm:codeparameters}
    Let $\rho, d_r, \kappa\in (0,1)$ and $\Delta$ be a sufficiently large constant. Let $C_A, C_B\subseteq \bb F_2^\Delta$ be classical codes of rates $\rho$ and $(1-\rho)$ respectively, such that the distances of $C_A, C_B, C_A^\perp, C_B^\perp$ are all at least $d_r\Delta$, and such that $C_A\boxplus C_B$ and $C_A^\perp\boxplus C_B^\perp$ are both $\kappa$-product-expanding. Using a family of $\Delta$-regular Ramanujan Cayley graphs $\Cay(A,G)$ and $\Cay(G,B)$, define the left-right Cayley complex $\Cay_2(A,G,B)$. Then the quantum Tanner codes defined using the components above have parameters
    \begin{equation}
        \left[\!\left[n,k\ge (1-2\rho)^2n, d\ge \frac{d_r^2\kappa^2}{256\Delta}n\right ]\!\right]\, .
    \end{equation}
\end{theorem}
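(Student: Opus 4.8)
The plan is to handle the two parameter bounds separately. The dimension is a direct count: there are $n=|Q|=|G|\,|A|\,|B|=|G|\Delta^2$ qubits, and each $X$-stabilizer generator is specified by a vertex of $V_0=V_{00}\sqcup V_{11}$ together with an element of a fixed generating set of $C_A\otimes C_B$, so there are at most $|V_0|\cdot\dim(C_A\otimes C_B)=2|G|\cdot\rho(1-\rho)\Delta^2$ of them, and symmetrically at most $2|G|\cdot\rho(1-\rho)\Delta^2$ $Z$-stabilizer generators. Since the $X$- and $Z$-checks commute --- as recalled above --- the construction is a genuine CSS code, so $k=n-\operatorname{rank} H_X-\operatorname{rank} H_Z\ge n-4\rho(1-\rho)|G|\Delta^2=\bigl(1-4\rho(1-\rho)\bigr)n=(1-2\rho)^2n$.

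For the distance, I would use the symmetry of the hypotheses under $V_0\leftrightarrow V_1$ and $(C_A,C_B)\leftrightarrow(C_A^\perp,C_B^\perp)$ to reduce to lower-bounding $d_Z=\min_{z\in C_X\setminus C_Z^\perp}|z|$: concretely, I would show that every $z\in\FF_2^Q$ with trivial $X$-syndrome and $|z|<\frac{d_r^2\kappa^2}{256\Delta}n$ lies in $C_Z^\perp$, and, after replacing $z$ by a minimum-weight representative of its $Z$-stabilizer coset, the task becomes showing $z=0$. Having trivial $X$-syndrome says precisely that the local view $z_v:=z|_{Q(v)}$ lies in $(C_A\otimes C_B)^\perp=C_A^\perp\boxplus C_B^\perp$ for every $v\in V_0$, so by $\kappa$-product-expansion of $C_A^\perp\boxplus C_B^\perp$ (with $|A|=|B|=\Delta$) I can write $z_v=c_v+r_v$ with $c_v\in C_A^\perp\otimes\FF_2^B$, $r_v\in\FF_2^A\otimes C_B^\perp$ and $\|c_v\|+\|r_v\|\le|z_v|/(\kappa\Delta)$. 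Under $Q(v)\cong A\times B$, the nonzero columns of $c_v$ and rows of $r_v$ name, respectively, $B$-type and $A$-type edges incident to $v$; collecting these over all $v\in V_0$ yields edge sets $\mathcal B$ and $\mathcal A$ with $|\mathcal A|+|\mathcal B|\le\sum_{v\in V_0}|z_v|/(\kappa\Delta)=2|z|/(\kappa\Delta)$ and with $\supp(z)$ contained in the faces incident to some edge of $\mathcal A\cup\mathcal B$.

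The crux is to show that, once $|z|$ --- hence $|\mathcal A|$ and $|\mathcal B|$ --- is below the stated threshold, a nonzero $Z$-reduced $z$ with trivial $X$-syndrome cannot be supported on faces touching such small edge sets. This is where the Ramanujan property should enter, via the expander mixing lemma for $\Cay(A,G)$ and $\Cay(G,B)$: a small $A$-edge set and a small $B$-edge set share very few common faces, so away from that small mutual region the support of $z$ organizes into the ``dangling'' faces off $\mathcal A$ or off $\mathcal B$; on those, the local orthogonality constraints force $z$, row-by-row (resp.\ column-by-column), to look like a codeword of $C_B^\perp$ (resp.\ $C_A^\perp$) over an induced subgraph of a component Cayley graph, and since those local codes have distance $\ge d_r\Delta$ while the relevant subgraph is expanding, such a codeword is either empty or heavy --- the latter contradicting the weight bound on $z$. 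Pushing this back forces $z$ onto the small mutual region, and a second pass of the product-expansion and local-distance estimates there (using $Z$-reducedness to eliminate the residual configurations, and accounting for the squares $d_r^2\kappa^2$ and the denominator $256$) forces $z=0$.

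The dimension bound is routine; the real obstacle is this last step --- quantitatively marrying the local product-expansion decompositions to the global expander mixing lemma so as to exclude low-weight non-stabilizer codewords, all while keeping the constants under control through the chain of estimates. The most delicate point is getting the interplay between ``$z$ is reduced modulo $Z$-stabilizers'' and ``$z$ has trivial $X$-syndrome'' right, so that enough structure is simultaneously available on the $V_0$ and $V_1$ sides.
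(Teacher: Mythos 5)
The statement you're trying to prove is not proven in this paper at all: Theorem~\ref{thm:codeparameters} is quoted verbatim from Leverrier and Z\'emor (Ref.~\cite{leverrier2022parallel}, Theorem~1), so there is no proof here to compare against — only the original reference.

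Your dimension bound is correct and is the standard argument: count faces to get $n=|G|\Delta^2$, count generators on each side to get at most $2|G|\rho(1-\rho)\Delta^2$ of each type, and use $k \geq n - \operatorname{rank} H_X - \operatorname{rank} H_Z$ together with commutation of the checks. Nothing to add there.

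The distance argument, however, has a genuine gap, and you essentially concede it yourself at the end. What you have written down correctly is the \emph{setup}: reduce $z$ modulo $Z$-stabilizers, observe that trivial $X$-syndrome means $z|_{Q(v)} \in C_A^\perp \boxplus C_B^\perp$ for $v \in V_0$, invoke $\kappa$-product-expansion to decompose each local view into a column part and a row part with small support, and observe that nonzero columns/rows of these parts are nonzero codewords of $C_A^\perp$/$C_B^\perp$, hence heavy (weight $\geq d_r\Delta$). Those are indeed the ingredients Leverrier and Z\'emor use. But the crux — which is the entire content of their proof — is the step you wave at as ``pushing this back'' and ``a second pass of the product-expansion and local-distance estimates.'' Concretely, the missing work includes: (i) resolving the inconsistency between the two independent decompositions of $z|_{Q(v)}$ coming from the $V_{00}$ side and the $V_{11}$ side (they are separate product-expansion decompositions that need not agree on shared faces, and the argument must control their discrepancy); (ii) translating minimality of $z$ modulo $Z$-stabilizers into a usable local statement (why a light $V_0$-vertex cannot have a nonzero row part, say, without contradicting minimality — this requires producing an explicit $Z$-stabilizer that lowers the weight, and the $Z$-stabilizers live on $V_1$ local views, not $V_0$ ones, so this is not immediate); and (iii) the actual expander-mixing computation, with the second-eigenvalue bound $2\sqrt{\Delta-1}$ fed in, that produces $d \geq \frac{d_r^2\kappa^2}{256\Delta}n$ with the specific powers and the constant $256$. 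Without (i)--(iii) the proposal is a plausible outline of the known route rather than a proof; as written, it does not establish the bound.
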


\section{Single-shot Decoding}\label{sec:singleshot}
\subsection{Decoding CSS codes}\label{subsec:decodeCSS}

Let us now formally define the decoding problem for quantum (CSS) codes. After we encode logical information in a quantum code, errors will occur on the physical system. We are interested in how to ``undo'' these errors and, subsequently, recover the original logical state. Specifically, consider a logical state $|\psi\rangle$ of a stabilizer code $\mc C$. A Pauli error $E$ occurs, and we gain information about the error by measuring a set of stabilizer generators $\{S_i\}$. This gives a syndrome $\sigma$, a bit string whose values $\sigma_i$ correspond to the eigenvalues $(-1)^{\sigma_i}$ of the stabilizers measured. Thus, $\sigma_i=0$ whenever $S_i$ commutes with $E$ and $\sigma_i=1$ when it anticommutes. The task of decoding is to use $\sigma$ to determine a correction $\hat{F}$ such that $\hat{F}E|\psi\rangle = |\psi\rangle$. In other words, $\hat{F}E$ should be a stabilizer of the code. When $\mc C$ is a CSS code, we can express the problem as follows.

\begin{definition}
	Let $\mc C$ be a CSS code specified by two parity check matrices $H_X\in \bb F_2^{r_X\times n}$ and $H_Z\in \bb F_2^{r_Z\times n}$. Let $e=(e_X, e_Z)\in \bb F_2^{2n}$ be an error with corresponding syndrome $\sigma = (\sigma_X, \sigma_Z)\in \bb F_2^{r_Z+r_X}$, where $\sigma_Z = H_Xe_Z$ and $\sigma_X = H_Ze_X$. Given input $\sigma$, the task of decoding is to find corrections $\hat{f} = (\hat{f}_X, \hat{f}_Z)\in \bb F_2^{2n}$ such that $e_X+\hat{f}_X\in C_X^\perp$ and $e_Z+\hat{f}_Z\in C_Z^\perp$.
\end{definition}

In the definition above, we associate the bit string $e = (e_X,e_Z)$ with the Pauli errors $E=E_XE_Z$ where $E_X$ and $E_Z$ are Pauli $X$ and $Z$ operators with support $e_X$ and $e_Z$, respectively (ignoring phase information). The correction $\hat{f}$ is similarly associated with a Pauli operator $\hat{F}$.

We note that for CSS codes, the decoding problem can be split into two separate problems for the $X$ and $Z$ codes that can be solved independently. For quantum Tanner codes in particular, there is symmetry between the $X$ and $Z$ codes, as can be seen by switching $V_0$ and $V_1$ labels and switching $C_A, C_B$ with $C_A^\perp, C_B^\perp$. Therefore, it suffices to give an algorithm for decoding one type of error. In the remainder of the paper, we will consider solely the case where $X$-errors occur, with $Z$-errors treated analogously. For convenience, we will often drop subscripts, for example writing $e$ for $e_X$ or $H$ for $H_Z$.

The above discussion assumes that the ideal syndrome is accessible to the decoder. Let us now consider the case when the syndrome measurements are unreliable, motivated by the fact that the quantum circuits implementing the parity checks are necessarily imperfect. Suppose that the ideal syndrome $\sigma_X$ of an error $e_X$ is corrupted by measurement error $D_X$, so that the actual noisy syndrome readout is $\td\sigma_X = \sigma_X + D_X$. A naive decoding of the syndrome $\td \sigma_X$ may result in a correction $\hat{f}_X$ which does not bring the state back to the code space, i.e., $e_X+\hat{f}_X \not\in C_X^\perp$. Furthermore, there may be no guarantee that $e_X+\hat{f}_X$ is close to $C_X^\perp$.

One of the standard procedures to account for measurement errors is to repeatedly measure the stabilizer generators in order to gain enough confidence in their measurement outcomes~\cite{Shor1996, dennisTopologicalQuantumMemory2002}.
This will incur large time overhead.
Alternatively, syndrome measurements can be performed fault-tolerantly by preparing special ancilla qubit states offline~\cite{Knill2005, Steane1997}.
This will incur large qubit overhead.
It would be ideal if we could avoid both overheads at the same time.

\subsection{Single-shot decoding}\label{subsec:singleshot}

Bomb\'in~\cite{bombinSingleshotFaulttolerantQuantum2016} introduced \emph{single-shot} decoders as an alternative approach. These decoders take in a noisy syndrome as input and, even in the presence of syndrome noise, return a correction that can be used to reduce the data error. Most likely, there will be some resulting residual error, but its weight is bounded by some function of the syndrome noise. In more detail, the single-shot property posits that it suffices to perform $O(n)$ parity check measurements (in the context of QLDPC codes, one further requires constant weight of measured parity checks), and, using \emph{only} these measurement outcomes, one can perform reliable QEC that keeps the residual noise at bay.

In our analysis, we need the following definition.

\begin{definition}
Let $\mc C$ be an $n$-qubit CSS code and $e\in\bb F_2^n$ be a Pauli $X$ error.
The stabilizer-reduced weight $|e|_R$ of $e$ is defined as the weight of the smallest error equivalent to $e$ up to the addition of stabilizers of $\mc C$, i.e., $|e|_R = \min_{e'\in C_X^{\perp}} |e+e'|$.
The stabilizer-reduced weight of a Pauli $Z$ error is defined analogously.
\end{definition}

The stabilizer-reduced weight of an error is a convenient theoretical measure of
how detrimental the error really is.
Note that since stabilizers do not change the code state, errors are only well-defined up to the addition of stabilizers. As such, any bound on the performance of the decoder is unambiguously defined using the stabilizer-reduced weight, which can be significantly smaller than the original weight.

Since we focus on asymptotically good QLDPC codes, it is enough to consider single-shot decoding for adversarial noise. Campbell~\cite{campbellTheorySingleshotError2019} captures adversarial single-shot decoding as follows. Let both the data error $e$ and the syndrome noise $D$ be sufficiently small. A decoder is single-shot if it outputs a correction such that the weight of the residual error is bounded by a polynomial of $|D|$. In this work, we would like to consider constant-time decoding using the parallel decoder (Algorithm~\ref{alg:noisyparalleldecoder}) for quantum Tanner codes. This setting does not directly fit into the previous definition since the residual error could depend on $|e|$ in addition to $|D|$. To allow for nontrivial dependence on $|e|$, we give the following definition, which is relevant for asymptotically good codes where the residual error size is at most linear in $|e|$ and $|D|$.

\begin{definition}\label{def:singleshotproblem}
    Let $\mc C$ be a CSS code specified by parity check matrices $H_X\in \bb F_2^{r_X\times n}$ and $H_Z\in \bb F_2^{r_Z\times n}$. Let $e=(e_X, e_Z)\in \bb F_2^{2n}$ be a data error, $D = (D_X, D_Z)\in \bb F_2^{r_Z+r_X}$ be a syndrome error, and $\td\sigma = (\td\sigma_X, \td\sigma_Z)\in \bb F_2^{r_Z+r_X}$ be the corresponding noisy syndrome, where $\td\sigma_X = H_Ze_X + D_X$ and $\td\sigma_Z = H_Xe_Z + D_Z$. A decoder for $\mathcal{C}$ is $(\al,\be)$-single-shot if there exist constants $A,B,C$ such that, for $P\in\{X,Z\}$,
    whenever
    \begin{align}
        A|e_P|_R + B|D_P| \leq Cn,
    \end{align}
    the decoder finds a correction $\hat{f}_P\in \bb F_2^{n}$ from given input $\td \sigma_P$ such that 
    \begin{align}
        |e_P+\hat{f}_P|_R\le \alpha|e_P| + \beta|D_P|\, .
    \end{align}
\end{definition}

This definition, combined with Theorems~\ref{thm:main} and~\ref{thm:parallel_residual} below, gives the following results for the sequential and parallel decoders of the quantum Tanner codes.

\begin{theorem}[Summary]
There exist constants $A,B,C,\beta > 0$ (dependent on the parameters of the quantum Tanner code) such that if $A|e|_R + B|D| \le Cn$, then the following conditions hold:
\begin{enumerate}
\item The sequential decoder (Algorithm~\ref{alg:seq_decoder}) is $(\alpha=0,\beta)$-single-shot.

\item The parallel decoder (Algorithm~\ref{alg:noisyparalleldecoder}) with $k$-iterations is $(\alpha = 2^{-\Omega(k)},\beta)$-single-shot.
\end{enumerate}
\end{theorem}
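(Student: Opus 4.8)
The plan is to analyze both decoders inside the mismatch-decomposition framework of Leverrier and Z\'emor, carrying the syndrome error $D$ through as a controlled additive perturbation. By the CSS decomposition and the $X$--$Z$ symmetry of quantum Tanner codes noted in Section~\ref{subsec:decodeCSS}, it suffices to treat an $X$-error $e$ with noisy syndrome $\td\sigma = He + D$ and to bound $|e+\hat f|_R$; the $Z$ case is identical. The conceptual starting point is that neither decoder ever insists on exactly realizing the input syndrome: each maintains a running correction $f$ and greedily decreases a potential function $U$ built from the local views $(\td\sigma - Hf)|_{Q(v)}$ at vertices $v$. So the analysis reduces to three ingredients: (i) a \emph{sandwich bound} relating the potential to the stabilizer-reduced residual weight, of the form $c_1|e+f|_R - c_2|D| \le U(\td\sigma - Hf) \le c_3|e+f|_R + c_4|D|$; (ii) a \emph{progress lemma} guaranteeing that whenever $U$ exceeds a constant multiple of $|D|$ there is a local move — a flip of a bounded-weight pattern of faces around some vertex — that decreases $U$ quantitatively; and (iii) the bookkeeping that the hypothesis $A|e|_R + B|D|\le Cn$ keeps the entire run inside the regime where (i) and (ii) apply, which works because $U$ only decreases along the run (up to the $O(|D|)$ floor).

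For the sequential decoder (Algorithm~\ref{alg:seq_decoder}), which applies one reducing move at a time until none exists, ingredient (ii) is the heart. In the noiseless analysis the existence of a reducing move when $e+f$ is a nonzero stabilizer-reduced error follows from combining $\kappa$-product-expansion of the local codes $C_A\boxplus C_B$ and $C_A^\perp\boxplus C_B^\perp$ (equivalently, $w$-robustness of the dual tensor codes) with the Ramanujan expansion of $\Cay(A,G)$ and $\Cay(G,B)$, in the same spirit as the distance bound of Theorem~\ref{thm:codeparameters}. I would redo this argument with a corrupted syndrome: at most $|D|$ of the vertex-local views are spuriously perturbed, so the expansion and robustness estimates pick up an additive $O(|D|)$ slack, which still forces a reducing move to exist as long as $|e+f|_R > \beta|D|$ for a suitable constant $\beta$. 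Hence at termination $|e+\hat f|_R \le \beta|D|$, i.e.\ $\alpha=0$, which is Theorem~\ref{thm:main}. Along the way one checks that the potential starts at $O(|e|+|D|)$ and that choosing $A,B,C$ appropriately guarantees $|e+f|_R$ never leaves the range where product-expansion is usable.

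For the parallel decoder (Algorithm~\ref{alg:noisyparalleldecoder}), the corresponding input from the noiseless theory is that a single round of simultaneous non-conflicting moves contracts the potential by a constant factor. I would upgrade this to $U^{(t+1)} \le \lambda\, U^{(t)} + c_5|D|$ for some constant $\lambda\in(0,1)$, valid while $U^{(t)}$ stays admissible; unrolling gives $U^{(t)} \le \lambda^{t} U^{(0)} + \frac{c_5}{1-\lambda}|D|$. Feeding this into the sandwich bound (i) and absorbing lower-order terms, after $k$ iterations one gets $|e+\hat f|_R \le \lambda^{k}\cdot O(|e|) + \beta|D| = 2^{-\Omega(k)}|e| + \beta|D|$, which is Theorem~\ref{thm:parallel_residual}; since $|e|\le n$, taking $k=\Theta(\log n)$ drives the first term strictly below $1$, hence to $0$, recovering the $\alpha=0$ statement of Theorem~\ref{thm:informal_parallel}. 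The threshold condition is propagated exactly as in the sequential case, using that $U^{(t)}$ is non-increasing up to the $O(|D|)$ floor.

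The main obstacle is ingredient (ii) — the progress and contraction lemmas under noisy syndromes. The noiseless arguments are delicate balances between the product-expansion constant $\kappa$, the local distances $d_r\Delta$, and the Ramanujan bound $2\sqrt{\Delta-1}$, and one must verify that injecting a syndrome error contaminates only $O(|D|)$ worth of local views and that this contamination enters the relevant inequalities \emph{additively}, with a constant genuinely independent of $n$, rather than cascading through the chained expansion estimates. Making the constants $A,B,C,\beta,\lambda$ compose coherently so that the admissible regime is preserved for the full run — both for the one-at-a-time sequential dynamics and for the synchronous parallel dynamics, whose move scheduling must itself remain well defined — is the technical core carried out in Section~\ref{sec:proofs}.
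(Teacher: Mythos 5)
Your skeleton has the right shape, and in particular the parallel-decoder recurrence $U^{(t+1)}\le\lambda U^{(t)} + c_5|D|$, unrolled to $U^{(t)}\le\lambda^t U^{(0)} + O(|D|)$, is exactly what the paper ends up proving (compare Eq.~\eqref{eq:parallel_weight_bound}). But there are two substantive issues. First, you identify the potential $U$ as a function of the local \emph{syndrome} views $(\td\sigma - Hf)|_{Q(v)}$, whereas Algorithms~\ref{alg:seq_decoder}--\ref{alg:paralleldecoder} greedily reduce the Hamming weight $|\hat Z|$ of the \emph{mismatch vector}, a vector on the faces $Q$ obtained by summing locally minimal corrections $\td\ep_v$ over $V_1$ and then adding local codewords. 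These are not the same object, and the discussion in Section~\ref{sec:discussion} explicitly warns that syndrome weight and mismatch weight are incomparable, with mismatch soundness (Lemma~\ref{lem:Zsoundness}) being the notion actually needed. If you set up ingredient~(i) with syndrome weight, you would be analyzing a different algorithm and would need a syndrome-soundness bridge that the paper deliberately avoids.

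Second, your sandwich bound is stated as an a priori inequality $c_1|e+f|_R - c_2|D|\le U\le c_3|e+f|_R + c_4|D|$, but the lower-bound direction (soundness, the nontrivial side) is \emph{conditional}. Lemma~\ref{lem:Zsoundness} only holds when the noiseless residual mismatch $Z_T$ is $\delta$-decomposable and the implied flipped weight stays below the code distance; otherwise a small mismatch could hide a logical error. The paper must therefore prove decomposability (Lemma~\ref{lemma:LZdecoderworks}) and the distance bound (Lemma~\ref{lem:Z_Tsoundness}) \emph{before} soundness can be invoked, and this is precisely where the hypothesis $A|e|_R + B|D|\le Cn$ is consumed — the logic is not a clean sandwich-then-progress but a circular dependency that has to be broken carefully. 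Relatedly, the paper splits $\td Z = Z + Z_N$ and proves a quantitative statement (Lemma~\ref{lem:flipexists}) that a weight-$\Omega(|Z|)-O(|Z_N|)$ portion of a $\delta$-decomposition of the noiseless $Z$ remains a good flip for the noisy $\td Z$ — this is the precise form of your "contamination enters additively" claim and is the crux that makes the sequential termination bound $|Z_T| = O(|Z_N|)$ work. You'd also need the pre-processing step $\td e_0 = e + \td Z^{01}$ and the $V_{10}$-weighted invariance (Lemmas~\ref{lemma:V10weightedmismatchZ0},~\ref{lemma:V10weightedmismatchinduction}) to correctly co-evolve the error and mismatch, which your sketch doesn't account for. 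None of this is irreparable, but as written the proposal would stall at exactly the point you flag as "the technical core."
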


Note that the runtime of the sequential decoder is $O(n)$, and each iteration of the parallel decoder is constant time. For the parallel decoder, $\alpha$ decreases exponentially with the number of parallel decoding iterations $k$, and the results of this section will hold when $k$ is a sufficiently large constant. It suffices to take $k = O(\log n)$ for $\alpha = 0$ in the parallel decoder.

Finally, we remark that we may increase the robustness to measurement errors and improve the overall performance of single-shot decoding by leveraging redundancies among parity checks, similar to the ideas explored in Refs.~\cite{Fujiwara2014,Ashikhmin2020,Delfosse2022}.
We can apply this approach to quantum Tanner codes without compromising their QLDPC structure, which is a crucial difference between our setting and the aforementioned works. Specifically, stabilizer generators of quantum Tanner codes are supported on local neighborhoods, defined by the local codes $C_0$ and $C_1$.
We may apply the technique of adding redundancy to each set of local checks separately. Since the local codes are of length $\Delta^2$, any redundant check in a fixed local neighborhood will not have weight more than $\Delta^2$, which is comparable to the weight of the original checks.

\subsection{Multiple rounds of decoding}\label{sec:MultipleRoundsOfDecoding}

In this section, we discuss what happens after multiple rounds of errors, noisy measurements, and decoding. We show that under the assumptions of Definition~\ref{def:singleshotproblem}, there exists a variety of noise models such that, as long as the overall noise level is sufficiently small, the encoded quantum information will persist for an exponential number of rounds.

The results proven in this section hold for any decoder that can solve the single-shot decoding problem under Definition~\ref{def:singleshotproblem}. More precisely, we assume that if the decoder is given the noisy syndrome from data error $e\in \bb F_2^n$ and syndrome error $D\in \bb F_2^{r_Z}$ satisfying
\begin{equation}
    A|e|_R + B|D| \le Cn\,,
\end{equation}
then it outputs a correction $\hat f$ such that the residual error satisfies
\begin{equation}
    |e+\hat f|_R\le \alpha |e| + \beta |D|\, .
\end{equation}
We will assume that $\beta$ is constant and that $\alpha$ is a parameter in the decoder that can be made arbitrarily small. For our analysis, we let $R,S$ be constants such that
\begin{equation}
    R\le \frac{(1-\alpha)C}{2A} \quad \text{and} \quad S\le \frac{(1-\alpha)C}{2\left({\beta A}+(1-\alpha)B\right)}\, .\label{eq:ABCRSinequality}
\end{equation}

We prove that as long as the data and syndrome errors in each round are sufficiently small, the total error can be kept small indefinitely.

\begin{theorem}
    \label{thm:multiroundadvserialerror}
    Consider errors $(e_i,D_i)$ that occur on rounds $i=1, 2, \cdots$, with decoding in between each round using new syndrome measurements (i.e., without using the previous syndromes). If the errors satisfy $|e_i|\le Rn$ and $|D_i|\le Sn$ for every round $i$, then the residual error $e_i'$ after each round $i$ satisfies
    \begin{equation}
        \label{eq:totalerrorbound}
        |e_i'|_R\le \frac{\alpha R+\beta S}{1-\alpha}n\, .
    \end{equation}
\end{theorem}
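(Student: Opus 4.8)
The plan is to prove the bound in \eqref{eq:totalerrorbound} by induction on the round number $i$, with the induction hypothesis being exactly that $|e_i'|_R \le \frac{\alpha R + \beta S}{1-\alpha}n$. The key conceptual point is that at the start of round $i$, the ``incoming'' data error presented to the decoder is not just the freshly applied $e_i$, but the sum $e_{i-1}' + e_i$ of the previous residual error and the new error. So I would set $\hat e_i := e_{i-1}' + e_i$ (with $e_0' = 0$) and track how its stabilizer-reduced weight behaves: by the triangle inequality for $|\cdot|_R$ (which follows from the triangle inequality for Hamming weight applied to the minimizing representatives), $|\hat e_i|_R \le |e_{i-1}'|_R + |e_i|_R \le |e_{i-1}'|_R + |e_i|$.

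The first step is to verify that the decoder's precondition $A|\hat e_i|_R + B|D_i| \le Cn$ holds at every round, so that we are entitled to invoke the $(\alpha,\beta)$-single-shot guarantee. Using the induction hypothesis and the hypotheses $|e_i| \le Rn$, $|D_i| \le Sn$, we get
\begin{equation}
    A|\hat e_i|_R + B|D_i| \le A\left(\frac{\alpha R + \beta S}{1-\alpha}n + Rn\right) + BSn = \left(A\cdot\frac{R + \beta S}{1-\alpha} + BS\right)n.
\end{equation}
Now I would plug in the defining inequalities \eqref{eq:ABCRSinequality}: $R \le \frac{(1-\alpha)C}{2A}$ gives $\frac{AR}{1-\alpha} \le \frac{C}{2}$, and $S \le \frac{(1-\alpha)C}{2(\beta A + (1-\alpha)B)}$ gives $\frac{A\beta S}{1-\alpha} + BS = S\cdot\frac{\beta A + (1-\alpha)B}{1-\alpha} \le \frac{C}{2}$. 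Summing, the right-hand side is at most $Cn$, as needed.

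The second step is the main computation: apply the single-shot guarantee to conclude that the decoder returns $\hat f_i$ with residual $e_i' = \hat e_i + \hat f_i$ satisfying
\begin{equation}
    |e_i'|_R \le \alpha|\hat e_i| + \beta|D_i|.
\end{equation}
Here I would be slightly careful: the guarantee as written bounds the residual by $\alpha|\hat e_i| + \beta|D_i|$ where $|\hat e_i|$ might mean the raw weight, but since the decoder only ever sees the syndrome, which depends only on the stabilizer class, one can (and the paper implicitly does) take the representative of minimal weight, so $|\hat e_i|$ may be replaced by $|\hat e_i|_R$. Using $|\hat e_i|_R \le |e_{i-1}'|_R + |e_i| \le \frac{\alpha R + \beta S}{1-\alpha}n + Rn$ and $|D_i| \le Sn$, we obtain
\begin{equation}
    |e_i'|_R \le \alpha\left(\frac{\alpha R + \beta S}{1-\alpha} + R\right)n + \beta S n = \left(\frac{\alpha(\alpha R + \beta S) + \alpha(1-\alpha)R + (1-\alpha)\beta S}{1-\alpha}\right)n = \frac{\alpha R + \beta S}{1-\alpha}n,
\end{equation}
which closes the induction. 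The base case $i=1$ is the same computation with $e_0' = 0$.

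The only real subtlety — and the step I would flag as the one to get right — is the bookkeeping around which weight ($|\cdot|$ versus $|\cdot|_R$) appears where, and in particular justifying that the accumulated residual $e_{i-1}'$ can be fed back in through its stabilizer-reduced weight rather than its (possibly much larger) raw weight. This is legitimate because the decoder acts only through the syndrome and the single-shot bound is stated in terms of $|\cdot|_R$ on the output side, so applying a minimal-weight representative of the stabilizer class of $e_{i-1}' + e_i$ before the next round loses nothing. Everything else is the triangle inequality and substituting \eqref{eq:ABCRSinequality}; the algebra telescopes cleanly precisely because $R$ and $S$ were chosen to make the induction self-sustaining.
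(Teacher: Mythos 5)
Your proof is correct and matches the paper's argument step for step: the same induction hypothesis, the same check that the decoder's precondition $A|\cdot|_R + B|D_i| \le Cn$ is maintained via \eqref{eq:ABCRSinequality}, and the same telescoping algebra giving $\alpha\frac{\alpha R+\beta S}{1-\alpha}+\alpha R+\beta S=\frac{\alpha R+\beta S}{1-\alpha}$. Your explicit justification for replacing $|\hat e_i|$ by $|\hat e_i|_R$ on the right-hand side of the single-shot guarantee — since the decoder acts only through the syndrome and so produces the same $\hat f_i$ for any stabilizer-equivalent representative — is a useful clarification of a substitution the paper makes without comment.
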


\begin{proof}
    Initially, $e_0' = 0$, which satisfies the bound. Suppose after round $i-1$, the residual error $e_{i-1}'$ satisfies~\eqref{eq:totalerrorbound}. The new total error is $e_{i-1}' + e_i$, and we have
    \begin{align}
        A|e_{i-1}'+e_i|_R + B|D_i| &\le A|e_{i-1}'|_R + A|e_i| + B|D_i|\\
        &\le A\frac{\alpha R+\beta S}{1-\alpha}n + ARn + BS n\\
        &\le Cn\, ,
    \end{align}
    where the last inequality follows since
    \begin{equation}
    A\frac{R+\beta S}{1-\alpha} + BS \le C
    \end{equation}
    for $R$ and $S$ satisfying~\eqref{eq:ABCRSinequality}.Therefore, the decoder returns a correction $\hat f$ with residual error
    \begin{align}
        |e_i'|_R &\le \alpha |e_{i-1}' + e_i|_R + \beta |D_i|\\
        &\le \alpha |e_{i-1}'|_R + \alpha |e_i| + \beta |D_i|\\
        &\le \alpha \frac{\alpha R+\beta S}{1-\alpha}n + \alpha Rn + \beta Sn\\
        &= \frac{\alpha R+\beta S}{1-\alpha}n\, ,
    \end{align}
    where the third inequality uses the inductive hypothesis. 
\end{proof}

From this result, we can immediately analyze the stochastic setting in which large errors are unlikely.

\begin{corollary}
    \label{cor:multiroundstochasticerror}
    Let $\{(e_i, D_i)\}_{i=1}^M$ be randomly distributed data and syndrome errors (with possible correlations) such that
    \begin{equation} \label{eq:approximateadversarial}
        \Pr(|e_i| > Rn) \le e^{-an}, \quad \text{and} \quad \Pr(|D_i| > Sn) \le e^{-bn}\, ,
    \end{equation}
    for constants $a,b > 0$. Suppose the decoder is run after each round of errors using new syndrome measurements (i.e., without using the syndromes of previous rounds).  Then the final residual error $e_M'$ satisfies
    \begin{equation}
        \Pr\left(|e_M'|_R > \frac{\alpha R+\beta S}{1-\alpha}n\right) \le M(e^{-an} + e^{-bn})\, .
    \end{equation}
\end{corollary}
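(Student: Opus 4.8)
The plan is to reduce the probabilistic statement to the deterministic guarantee of Theorem~\ref{thm:multiroundadvserialerror} via a union bound, so that the correlations among the errors become irrelevant. First I would define the ``good event''
\[
    \mc G = \bigcap_{i=1}^{M}\Big\{|e_i|\le Rn\Big\}\cap\bigcap_{i=1}^{M}\Big\{|D_i|\le Sn\Big\}\, ,
\]
on which every round satisfies exactly the hypotheses $|e_i|\le Rn$ and $|D_i|\le Sn$ required by Theorem~\ref{thm:multiroundadvserialerror}. The point to emphasize here is that the quantities being controlled, $|e_i|$ and $|D_i|$, are the \emph{freshly applied} data and syndrome errors of round $i$ (not the accumulated residuals), which is precisely the form in which Theorem~\ref{thm:multiroundadvserialerror} takes its input; the residual errors $e_i'$ are then produced internally by that theorem's inductive argument.

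The second step is to observe that, conditioned on $\mc G$ (indeed, on any realization of the errors lying in $\mc G$), Theorem~\ref{thm:multiroundadvserialerror} applies verbatim and yields $|e_M'|_R\le \tfrac{\alpha R+\beta S}{1-\alpha}n$ deterministically. Hence the event $\{|e_M'|_R > \tfrac{\alpha R+\beta S}{1-\alpha}n\}$ is contained in the complement $\mc G^c$.

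The third step is the union bound: by~\eqref{eq:approximateadversarial},
\[
    \Pr(\mc G^c)\le \sum_{i=1}^{M}\Pr(|e_i|>Rn) + \sum_{i=1}^{M}\Pr(|D_i|>Sn)\le M\big(e^{-an}+e^{-bn}\big)\, ,
\]
where crucially no independence is used, so the stated possible correlations among the $(e_i,D_i)$ cause no difficulty. Combining with the second step gives $\Pr(|e_M'|_R > \tfrac{\alpha R+\beta S}{1-\alpha}n)\le \Pr(\mc G^c)\le M(e^{-an}+e^{-bn})$, which is the claim.

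There is no real obstacle in this argument; the only thing to be careful about is the bookkeeping in the first step—making sure the per-round error bounds feed correctly into Theorem~\ref{thm:multiroundadvserialerror} and that the event whose probability we want is genuinely a subset of $\mc G^c$—after which the union bound is immediate and robust to correlations.
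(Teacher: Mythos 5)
Your proposal is correct and is essentially identical to the paper's (one-line) proof: both apply Theorem~\ref{thm:multiroundadvserialerror} deterministically on the event that all per-round error bounds hold, and bound the complement by a union bound over the $2M$ tail events, which requires no independence. You simply spell out the bookkeeping more explicitly, which is fine.
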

\begin{proof}
    This follows immediately from Theorem~\ref{thm:multiroundadvserialerror} after using a union bound on the probability of a large data or syndrome error at every round.
\end{proof}

As a sample application of Corollary~\ref{cor:multiroundstochasticerror}, we analyze the case of $p$-bounded noise~\cite{gottesman2014faulttolerant, Fawzi_2018}, although any model of errors with sufficiently suppressed tails will give the same conclusions.
 
\begin{definition}[$p$-bounded noise]\label{def:pbounded}
Let $p \in [0,1)$. Let $A$ be a set and let $2^A$ be its power set. We say that a probability distribution $E: 2^A \rightarrow [0,1]$ is \emph{$p$-bounded} if for any $B \subseteq A$ we have
\begin{align}
\sum_{B' \supseteq B}E(B') \le p^{|B|}\ .
\end{align}
\end{definition}

\begin{corollary} Let $\{(e_i,D_i)\}_{i=1}^M$ be data and syndrome errors where each of the marginal distributions of $e_i$ and $D_i$ are $p$- and $q$-bounded, respectively. Suppose the decoder is run after each round of errors using a new round of syndrome measurements (without using the syndromes of previous rounds). Then, the final residual error $e_M'$ satisfies
\begin{align}
    \Pr\left(|e_M'|_R > \frac{\alpha R+\beta S}{1-\alpha}n\right) \le M\left(e^{-n\ln (2^{-H(R)}p^{-R})} +e^{-n\ln (2^{-\varrho H(S/\varrho)}q^{-S})}\right) \ ,
\end{align}
where $H(\tau) = -\tau\log_2 \tau - (1-\tau)\log_2(1-\tau)$ is the binary entropy function, and $\varrho = r_Z/n$.
\end{corollary}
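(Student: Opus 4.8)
The plan is to derive the claim directly from Corollary~\ref{cor:multiroundstochasticerror} by supplying explicit decay rates $a,b$. The only genuine input needed is a tail bound for $p$-bounded distributions: if $E$ is a $p$-bounded distribution on subsets of a set of size $N$, then for any real $t$,
\[
\Prb_{B\sim E}\big(|B|\ge t\big)\ \le\ \binom{N}{\lceil t\rceil}\,p^{\lceil t\rceil}\ \le\ 2^{N H(\lceil t\rceil/N)}\,p^{\lceil t\rceil}.
\]
This follows from a union bound: since $|B|$ is an integer, $\{|B|\ge t\}=\bigcup_{|B_0|=\lceil t\rceil}\{B_0\subseteq B\}$, and by the defining inequality of $p$-boundedness each term has probability $\sum_{B'\supseteq B_0}E(B')\le p^{\lceil t\rceil}$; there are $\binom{N}{\lceil t\rceil}$ such $B_0$, and $\binom{N}{k}\le 2^{NH(k/N)}$ is standard.

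First I would apply this with $N=n$, $t=Rn$, and the (by hypothesis $p$-bounded) marginal of $e_i$, and with $N=r_Z=\varrho n$, $t=Sn$, and the ($q$-bounded) marginal of $D_i$. Using $\lceil Rn\rceil\ge Rn$ together with $p<1$ to replace $p^{\lceil Rn\rceil}$ by $p^{Rn}$ (and likewise for $q$), this gives $\Prb(|e_i|>Rn)\le 2^{nH(R)}p^{Rn}=e^{-an}$ with $a=\ln\!\big(2^{-H(R)}p^{-R}\big)$, and $\Prb(|D_i|>Sn)\le 2^{\varrho n H(S/\varrho)}q^{Sn}=e^{-bn}$ with $b=\ln\!\big(2^{-\varrho H(S/\varrho)}q^{-S}\big)$, where we use $Sn/r_Z=S/\varrho$. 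These are exactly the hypotheses~\eqref{eq:approximateadversarial}, so Corollary~\ref{cor:multiroundstochasticerror} applies verbatim and yields
\[
\Prb\!\left(|e_M'|_R>\frac{\alpha R+\beta S}{1-\alpha}n\right)\ \le\ M\big(e^{-an}+e^{-bn}\big),
\]
which is the claimed bound after substituting $a$ and $b$.

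The only obstacle is the standard technical wrinkle of treating $Rn$ and $Sn$ as integers in the binomial estimate $\binom{n}{Rn}\le 2^{nH(R)}$: when $Rn,Sn\in\mathbb{Z}$ the bounds hold on the nose, and otherwise $\binom{n}{\lceil Rn\rceil}\le 2^{nH(\lceil Rn\rceil/n)}$ differs from $2^{nH(R)}$ by at most an $n$-independent constant factor (for $R$ bounded away from $0$ and $\tfrac12$), which is harmless since it is absorbed by the exponential decay. Implicitly one also assumes $p$ and $q$ are small enough that $a,b>0$, i.e.\ $2^{H(R)}p^{R}<1$ and $2^{\varrho H(S/\varrho)}q^{S}<1$ (and that $S\le\varrho$ so that $H(S/\varrho)$ is defined); this is the regime in which the statement is meaningful, and where the constant-factor slack above is available.
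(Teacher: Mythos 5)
Your proposal is correct and follows essentially the same argument as the paper: a union bound over subsets of size $\lceil Rn\rceil$ (resp.\ $\lceil Sn\rceil$) combined with $p$-boundedness (resp.\ $q$-boundedness) and the binary-entropy bound on the binomial coefficient, then plugging the resulting decay rates $a,b$ into Corollary~\ref{cor:multiroundstochasticerror}. The only difference is that you explicitly track the ceiling in $\lceil Rn\rceil$ and note the (harmless, $n$-independent) constant-factor slack this incurs, whereas the paper treats $Rn$ and $Sn$ as integers implicitly; this is a fair point but does not change the argument.
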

\begin{proof}
Let us first upper bound $\Pr(|e_i| > R n)$. We have
\begin{align}
        \Pr(|e_i| > R n) &= \sum_{\substack{|e|> Rn}} \Pr(e_i=e)\le \sum_{|e|= Rn}\Pr(e_i\supset e)
        \le \sum_{|e|= Rn}p^{|e|} \le \binom{n}{R n}p^{Rn}\, ,
    \end{align}
    where the last inequality follows by $p$-boundedness. Using the binary entropy bound for the binomial coefficient, we then have
    \begin{align}
        \Pr(|e_i| > R n) &\le \binom{n}{R n}p^{Rn}
        \le 2^{nH(R)}p^{Rn}
        =e^{-n\ln (2^{-H(R)}p^{-R})}\, .    
    \end{align}
    Similarly, we have
    \begin{align}
        \Pr(|D_i|> S n) \le e^{-n\ln (2^{-\varrho H(S/\varrho)}q^{-S})}\, .
    \end{align}
    Applying  Corollary~\ref{cor:multiroundstochasticerror} gives the result.
\end{proof}

In particular, there exist thresholds $(p_*, q_*) = (2^{-H(R)/R}, 2^{-\varrho H(S/\varrho)/S})$ below which errors are kept under control for an exponential number of rounds of single-shot QEC with high probability.

Finally, we comment on the last round of QEC. In a typical setting of fault tolerance, we choose to measure logical qubits in the computational basis, which for a CSS code can be accomplished by measuring each physical qubit (also in the computational basis).
We then apply one final round of QEC, where the $Z$-stabilizer eigenvalues are inferred by multiplying the $Z$-measurement outcomes from those qubits in the stabilizer supports. Note that in this final round, any measurement error can be treated as an $X$ data error immediately before the measurement. We run the decoder with $\alpha$ sufficiently small so that by the guarantee on the decoder, $|e+\hat f|_R=0$, i.e., we completely correct the error. We can then infer the logical information by combining the corrected single-qubit $Z$-measurement outcomes making up the $Z$-logical operators. Therefore, fault tolerance may be achieved by using a faster (e.g., constant-time) decoder with larger $\alpha$ value in the middle of the computation, and only applying the full decoder (e.g., logarithmic-time) with $\alpha=0$ at the end of the computation.

\section{Proofs of Single-shot Decoding of Quantum Tanner Codes}\label{sec:proofs}

\subsection{Decoding Algorithms}
We consider the decoding problem for quantum Tanner codes with parameters as in Theorem~\ref{thm:codeparameters}.  We first provide an overview of how the decoder works. As before, we will work exclusively with $X$-type errors, with $Z$-errors being analogous. Suppose that the code state experiences data error $e$, and the measurements experience syndrome error $D$. The decoder is consequently given as input the noisy syndrome $\td\sigma = \sigma + D = H_Ze + D$. Due to the structure of the code, the global syndrome $\td\sigma$ can equivalently be viewed as a set of noisy local syndromes $\{\td\sigma_v\}_{v\in V_1}$, where $\td \sigma_v$ denotes the restriction of $\td \sigma$ to the checks associated with the local code $C_1^\perp$ at vertex $v$. At each $V_1$ vertex, the decoder computes a minimal weight correction $\td\ep_v\subseteq Q(v)$ based on the local syndrome $\td\sigma_v$, i.e,
\begin{align}
\td\ep_v = \operatorname{arg min} \{|y|: y\subseteq Q(v), \sigma_v(y)=\td\sigma_v\}\, .
\end{align}
Note that this is a completely local operation which can be done without consideration of the syndrome state of the other vertices. Each square $q\in Q$ contains two $V_1$ vertices, say $v\in V_{01}$ and $v' \in V_{10}$. These two vertices are each associated with their own local corrections, $\td\ep_v$ and $\td\ep_{v'}$, which may disagree on whether there is an error on $q$. If there is no disagreement on any square $q\in Q$, then a global correction $\hat f\in \bb F_2^Q$ can be unambiguously defined by 
\begin{align}
\hat f = \bigsqcup_{v\in V_{01}}\td\ep_v = \bigsqcup_{v'\in V_{10}}\td\ep_{v'}.
\end{align}
However, this will usually not be the case. The disagreement between the different candidate local corrections is captured by a ``noisy mismatch vector'' defined as
\begin{equation}
	\label{eq:noisymismatchdef}
	\td Z = \sum_{v\in V_1}\td\ep_v\, .
\end{equation}

The goal of the main part of the algorithm is to reduce the size of $\td Z$ by successively updating the best local corrections on the $V_1$ vertices. For example, it is possible that for a given $v\in V_1$, replacing $\td\ep_v$ with $\td\ep_v + x$ for some $x\in C_1^\perp$ in~\eqref{eq:noisymismatchdef} would significantly decrease $|\tilde Z|$. In general, we attempt to decompose $\td Z$ by adding codewords $x\in C_1^\perp$ on local views $Q(v)$ of vertices $v\in V$.\footnote{In the presence of measurement errors, a full decomposition of $\tilde{Z}$ into local codewords may not be possible. See Definition~\ref{def:mismatch} and related comments before and after.} We keep track of the decomposition process through quantities $\hat C_0, \hat C_1, \hat R_0, \hat R_1\subseteq \bb F_2^Q$, which are initially 0 and updated as follows. Suppose $x = c + r$ is supported on a $V_{ij}$ local view ($i,j\in \{0,1\}$), where $c\in C_A\otimes \bb F_2^B$ and $r\in \bb F_2^A\otimes C_B$. Then we add $c$ to $\hat C_j$ and $r$ to $\hat R_i$. The interpretation is that $\hat C_1 + \hat R_0$ is the total change made to the local corrections $\td\ep_v$ from the $V_{01}$ vertices, and $\hat C_0 + \hat R_1$ is the total change made to those from the $V_{10}$ vertices. Therefore, at the end of the procedure, we output a guess for the error, which from the perspective of the $V_{01}$ vertices is
\begin{equation}
	\hat f = \sum_{v\in V_{01}}\td \ep_v + \hat C_1 + \hat R_0\, .
\end{equation}

The algorithm can run either sequentially (Algorithm~\ref{alg:seq_decoder}) or in parallel (Algorithm~\ref{alg:noisyparalleldecoder}), with the corresponding $\td Z$ decomposition subroutines presented in Algorithm~\ref{alg:seq_mismatch} and Algorithm~\ref{alg:paralleldecoder} respectively.

\begin{algorithm}[H]
	\caption{Sequential decoder for quantum Tanner codes with parameter $\varepsilon$}
	\label{alg:seq_decoder}
	\begin{flushleft}
		\textbf{Input:} A noisy syndrome $\tilde\sigma$ arising from data error $e$ and syndrome error $D$.\\
		\textbf{Output:} A correction $\hat{f}$ that approximates $e$.
		\begin{algorithmic}[1]
			\State $\td\ep_v\gets \operatorname{arg min} \{|y|: y\subseteq Q(v), \sigma_v(y)=\td\sigma_v\}$ (or $\td\ep_v\gets 0$ if no such $y$ exists) for all $v\in V_1$
			\State $\td Z\gets \sum_{v\in V_1}\td\ep_v$
			\State $(\hat{C}_0,\hat{C}_1,\hat{R}_0,\hat{R}_1) \gets \textsc{Mismatch}_\varepsilon(\tilde{Z})$
			\State $\hat{f} \gets\sum_{v\in V_{01}}\td\ep_v + \hat{C}_1 + \hat{R}_0$
			\State \Return $\hat{f}$
		\end{algorithmic}
	\end{flushleft}
\end{algorithm}

\begin{algorithm}[H]
	\caption{Sequential mismatch decomposition with parameter $\varepsilon$}
	\label{alg:seq_mismatch}
	\textbf{Input:} A vector $Z \in \mathbb{F}_2^Q$.\\
	\textbf{Output:} A collection $(\hat{C}_0,\hat{C}_1, \hat{R}_0, \hat{R}_1) \equiv \textsc{Mismatch}_\varepsilon(Z)$.
	\begin{algorithmic}[1]
    \State Set $\hat{C}_0 = \hat{C}_1 = \hat{R}_0 = \hat{R}_1 = 0$ and $\hat{Z} = Z$.
    \While {$\hat{Z} \neq 0$} 
      \If{$\exists v \in V_{ij}$ and $0\neq x_v\in C_1^\perp$ in $Q(v)$ such that $|\hat{Z}| - |\hat{Z} + x_v| \ge (1-\ep)|x_v|$}
      \LongState{Find $r_v\in \bb F_2^A\otimes C_B$ and $c_v\in C_A\otimes \bb F_2^B$ such that $\|c_v\| + \|r_v\|$ is minimal among \newline all $c_v, r_v$ such that $r_v + c_v = x_v$}
      \State $\hat{C}_j \gets \hat{C}_j + c_v$
      \State $\hat{R}_i \gets \hat{R}_i + r_v$
      \State $\hat{Z} \gets \hat{Z} + c_v + r_v$
      \Else
      \State \Return ($\hat{C}_0,\hat{C}_1,\hat{R}_0,\hat{R}_1$)
      \EndIf
      \EndWhile
      \State \Return $(\hat{C}_0,\hat{C}_1, \hat{R}_0, \hat{R}_1)$
    \end{algorithmic}
\end{algorithm}

\begin{algorithm}[H]
	\caption{Parallel decoder for quantum Tanner codes with $k$ iterations}
	\label{alg:noisyparalleldecoder}
		\textbf{Input:} A noisy syndrome $\tilde\sigma$ from a data error $e$ and syndrome error $D$, and an integer $k>0$.\\
		\textbf{Output:} A correction $\hat{f}$ that approximates $e$.
		\begin{algorithmic}[1]
            \ParallelForEach{$v \in V_{1}$}
                \State $\td\ep_v\gets \operatorname{arg min} \{|y|: y\in Q(v), \sigma_v(y)=\td\sigma_v\}$ (or $\td\ep_v\gets 0$ if no such $y$ exists)
                \State $\td Z\gets \sum_{v\in V_1}\td\ep_v$
                \State $\hat{f} \gets \sum_{v\in V_{01}}\td\ep_v$
            \EndParallelForEach
            \State $(\hat{C}_0,\hat{C}_1,\hat{R}_0,\hat{R}_1)\gets \textsc{ParMismatch}^{(k)}(\tilde{Z})$
            \State $\hat{f} \gets \hat{f} + \hat{C}_1 + \hat{R}_0$ \quad\quad {\color{gray} // update $\hat{f}$ in parallel for each vertex $v\in V_{01}$}
            \State \Return $\hat{f}$
		\end{algorithmic}
\end{algorithm}

\begin{algorithm}[H]
	\caption{Parallel mismatch decomposition procedure with $k$ iterations}
	\label{alg:paralleldecoder}
		\textbf{Input:} A vector $Z \in \mathbb{F}_2^Q$ and integer $k > 0$. \\
		\textbf{Output:} A collection $(\hat{C}_0,\hat{C}_1, \hat{R}_0, \hat{R}_1) \equiv \textsc{ParMismatch}^{(k)}(Z)$.
        
		\begin{algorithmic}[1]
            \State Set $\hat{C}_0 = \hat{C}_1 = \hat{R}_0 = \hat{R}_1 = 0$ and $\hat{Z} = Z$.
            \RepeatN{$k$}
            \For{$(i,j) \in \{0,1\}^2$}
            \ParallelForEach{$v \in V_{ij}$}
            \If{there exists $0\neq x_v\in C_1^\perp$ in $Q(v)$ such that $|\hat{Z}| - |\hat{Z} + x_v| \geq |x_v|/2$ }
            \State Choose $x_v$ such that $|x_v|$ maximal among all possible choices 
            \LongState{Find $r_v\in \bb F_2^A\otimes C_B$ and $c_v\in C_A\otimes \bb F_2^B$ such that $\|c_v\| + \|r_v\|$ is minimal \\ among all $c_v, r_v$ such that $r_v + c_v = x_v$}
            
            \State $\hat{C}_j \gets \hat{C}_j + c_v$
            \State $\hat{R}_i \gets \hat{R}_i + r_v$
            \State $\hat{Z} \gets \hat{Z} + c_v + r_v$

            \EndIf
            \EndParallelForEach
            \EndFor
            \EndRepeatN
            \State \Return $(\hat{C}_0,\hat{C}_1, \hat{R}_0, \hat{R}_1)$
		\end{algorithmic}
\end{algorithm}

These algorithms were analyzed in the scenario with perfect measurement outcomes in Ref.~\cite{leverrier2022parallel}, giving the following results:

\begin{theorem}[Theorem 13 in Ref.~\cite{leverrier2022parallel}]\label{thm:lev1}
    Let $\ep\in (0,1)$. Suppose Algorithm~\ref{alg:seq_decoder} with parameter $\ep$ is given as input the noiseless syndrome $\sigma = H_Ze$ of an error $e\in \bb F_2^Q$ of weight
    \begin{equation}
        |e| \le \frac{1}{2^{11}}\min\left(\frac{\ep^3}{16},\kappa\right)(1-\ep)d_r^2\kappa^2\frac{n}{\Delta}\, .
    \end{equation}
    Then it will output a correction $\hat f$ such that $e+\hat f\in C_X^\perp$  in time $O(n)$.
\end{theorem}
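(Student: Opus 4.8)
The plan is to prove that the subroutine $\textsc{Mismatch}_\ep$ called inside Algorithm~\ref{alg:seq_decoder} fully decomposes the mismatch vector (terminates with $\hat Z=0$), and that a full decomposition automatically produces a correction equivalent to $e$ modulo stabilizers. First I would record the elementary consequences of the syndrome being noiseless. For $v\in V_1$ write $e_v=e|_{Q(v)}$; since $\sigma_v(e_v)=\td\sigma_v$ exactly, $e_v$ is a feasible local correction at $v$, so the minimum-weight choice obeys $|\td\ep_v|\le|e_v|$. Each face of the complex lies in $Q(v)$ for exactly one $v\in V_{01}$ and exactly one $v\in V_{10}$, hence $|\td Z|\le\sum_{v\in V_1}|\td\ep_v|\le\sum_{v\in V_1}|e_v|=2|e|$, and moreover $\td\ep_v+e_v\in C_1^\perp$ for every $v$, so $\td Z=\sum_{v\in V_1}(\td\ep_v+e_v)$ is a genuine sum of local codewords of small total weight. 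Granting for the moment that the subroutine ends with $\hat Z=0$: each accepted move contributes a piece $c_v+r_v=x_v$ with $|\hat Z|-|\hat Z+x_v|\ge(1-\ep)|x_v|$, so these gains telescope to $\sum_{\text{moves}}|x_v|\le(1-\ep)^{-1}|\td Z|$; using minimality of $\|c_v\|+\|r_v\|$ together with the $\kappa$-product-expansion of $C_A\boxplus C_B$ (a hypothesis of Theorem~\ref{thm:codeparameters}), each $|c_v|+|r_v|\le\kappa^{-1}|x_v|$, whence $|\hat C_1+\hat R_0|\le\kappa^{-1}(1-\ep)^{-1}|\td Z|\le2\kappa^{-1}(1-\ep)^{-1}|e|$. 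Then $\hat f=\sum_{v\in V_{01}}\td\ep_v+\hat C_1+\hat R_0$ satisfies $H_Z\hat f=\sigma$ by the bookkeeping invariant of the mismatch procedure (when $\hat Z=0$ the $V_{01}$- and $V_{10}$-perspectives of the current correction agree), so $e+\hat f\in\ker H_Z=C_Z$; and $|e+\hat f|\le|e|+|\hat f|\le|e|\bigl(2+2\kappa^{-1}(1-\ep)^{-1}\bigr)$, which under the weight hypothesis on $e$ is strictly below $d_X\ge d\ge\frac{d_r^2\kappa^2}{256\Delta}n$, forcing $e+\hat f\in C_X^\perp$.

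It remains to prove that $\textsc{Mismatch}_\ep(\td Z)$ reaches $\hat Z=0$, which reduces to the following structural statement — the heart of the argument. There is a threshold $t=\Theta\!\bigl(\min(\ep^3,\kappa)(1-\ep)\,d_r^2\kappa^2\,n/\Delta\bigr)$ such that every nonzero $\hat Z\in\bb F_2^Q$ with $|\hat Z|\le t$ is \emph{not} $(1-\ep)$-reduced: there exist a vertex $v\in V_{ij}$ and a nonzero $x_v\in C_1^\perp$ supported on $Q(v)$ with $|\hat Z|-|\hat Z+x_v|\ge(1-\ep)|x_v|$. I would prove this by contradiction: assume $\hat Z\neq0$ is $(1-\ep)$-reduced and examine the restrictions $\hat Z|_{Q(v)}$ across the four vertex classes. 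The Ramanujan (spectral-expansion) property of the two component Cayley graphs, fed through the expander mixing lemma, prevents the weight of $\hat Z$ from concentrating on a small set of local views; consequently many local views carry a nonzero but still ``light'' pattern. At such a vertex the $\kappa$-product-expansion of $C_A\boxplus C_B$ and $C_A^\perp\boxplus C_B^\perp$ (equivalently, robustness of the local dual-tensor codes) guarantees a local codeword $x_v$ whose addition removes at least a $(1-\ep)$-fraction of its own weight from $\hat Z$, contradicting reducedness. Balancing the expansion loss against the product-expansion constant $\kappa$ and the relative distance $d_r$ is precisely what produces the $\tfrac{1}{2^{11}}\min(\ep^3/16,\kappa)(1-\ep)d_r^2\kappa^2$ prefactor in the stated weight bound.

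With the structural statement in hand, the conclusion and the running time follow quickly. Starting from $\hat Z=\td Z$ with $|\td Z|\le2|e|\le t$, every iteration of the while-loop strictly decreases the nonnegative integer $|\hat Z|$ — the gain is a positive integer since it is at least $(1-\ep)|x_v|>0$ — so the loop halts after at most $|\td Z|=O(n)$ iterations, and $|\hat Z|$ never exceeds its initial value, staying $\le t$ throughout. Hence the loop cannot terminate through the \textbf{else} branch (that would require a nonzero $(1-\ep)$-reduced $\hat Z$ of weight $\le t$), so it exits with $\hat Z=0$, completing the first paragraph. For the $O(n)$ runtime, flipping $\hat Z$ on the $O(\Delta^2)=O(1)$ faces of one $Q(v)$ changes only the constantly-many local views sharing a face with $Q(v)$; keeping a worklist of vertices that currently admit an improving move, only those entries are refreshed per iteration, and each refresh is a search over the constant-size local code $C_1^\perp$. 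Thus each of the $O(n)$ iterations costs $O(1)$ amortized, and the whole decoder, including the linear preprocessing to compute the $\td\ep_v$, runs in $O(n)$. The main obstacle is the structural statement of the second paragraph: that is where the geometry of the complex (Ramanujan expansion) must be quantitatively combined with the algebra of the local codes (product-expansion), and pinning down the constants so that the expansion loss is absorbed by the $(1-\ep)$ slack is the delicate part; everything else is bookkeeping and a comparison with the code distance.
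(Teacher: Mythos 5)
First, note that the paper does not prove this statement at all: it is quoted verbatim as Theorem~13 of Ref.~\cite{leverrier2022parallel}, so the only meaningful comparison is with the Leverrier--Z\'emor argument whose key ingredient the paper does restate (Theorem~\ref{thm:LZflipexists}). Your outer scaffolding is consistent with that argument and is essentially correct: with a noiseless syndrome, $\td Z=\sum_{v\in V_1}(\td\ep_v+e_v)$ is a sum of local $C_1^\perp$ codewords of weight at most $2|e|$; the telescoping gain condition plus $\kappa$-product-expansion bounds the total flipped weight by $O(\kappa^{-1}(1-\ep)^{-1}|e|)$; and zero syndrome plus weight below $d$ forces $e+\hat f\in C_X^\perp$. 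The termination and $O(n)$ runtime bookkeeping is also fine.

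The genuine gap is the ``structural statement'' you reduce everything to: that \emph{every} nonzero $\hat Z\in\bb F_2^Q$ (or even every genuine mismatch vector) of Hamming weight at most $t$ admits a local codeword $x_v$ with $|\hat Z|-|\hat Z+x_v|\ge(1-\ep)|x_v|$. As stated for arbitrary vectors this is false: a single face has no improving flip, since any nonzero $x_v\in C_1^\perp$ has $|x_v|\ge d_r\Delta$ and the condition forces $|x_v\cap\hat Z|\ge(1-\ep/2)|x_v|>1$. Restricting to mismatch vectors does not rescue the plan, because the actual flip-existence theorem (Theorem~\ref{thm:LZflipexists}) is conditioned not on the Hamming weight $|\hat Z|$ but on the number of \emph{active vertices of a minimal decomposition} of $\hat Z$ (equivalently on $\|\hat Z\|$ in the sense of Definition~\ref{def:mismatch}), and there is no known implication from ``$|\hat Z|$ small'' to ``$\|\hat Z\|$ small'': that would be a soundness-type property of the mismatch which is exactly the subtle point (the paper's own noisy-syndrome analysis must assume $\delta$-decomposability in Lemma~\ref{lem:Zsoundness} and instead controls the active-vertex count \emph{along the execution}, via Lemma~\ref{lem:minimalZbound} for the initial canonical decomposition plus robustness bounds on each added $x_i$, as in Lemma~\ref{lemma:LZdecoderworks}). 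So the correct argument tracks the decomposition norm of $\hat Z$ through the history of Algorithm~\ref{alg:seq_mismatch}, starting from the smallness of $|e|$, rather than invoking a weight-only criterion at the moment the loop might stall; your sketched expander-mixing-plus-robustness proof of the weight-only criterion does not engage with this and would not close, since the improving flip is found at an active vertex of a controlled decomposition, not merely at a ``light'' local view of a low-weight vector.
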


\begin{theorem}[Theorem 20 in Ref.~\cite{leverrier2022parallel}]
    Let $\ep\in (0,1/6)$. Suppose Algorithm~\ref{alg:noisyparalleldecoder} is given as input the noiseless syndrome $\sigma = H_Ze$ of an error $e\in \bb F_2^Q$ of weight
    \begin{equation}
        |e| \le \frac{1}{2^{12}}\min\left(\frac{\ep^3}{16},\kappa\right)d_r^2\kappa^2\frac{n}{\Delta}\, .
    \end{equation}
    Then it will output a correction $\hat f$ such that $e+\hat f\in C_X^\perp$ in time $O(\log n)$.
\end{theorem}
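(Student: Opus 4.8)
The plan is to reduce the statement to an analysis of the mismatch-decomposition subroutine \textsc{ParMismatch}, following Leverrier and Z\'emor. First I would record two structural facts about the mismatch vector $\td Z = \sum_{v\in V_1}\td\ep_v$ built from the \emph{noiseless} syndrome. Since the syndrome is exact, for every $v\in V_1$ the restriction $e|_{Q(v)}$ is itself a valid local correction, so $\td\ep_v$ exists and $z_v := \td\ep_v + e|_{Q(v)}\in C_1^\perp$ is supported on $Q(v)$, with $z_v = 0$ whenever $e|_{Q(v)}=0$. Summing over $v\in V_1$ and using that every face meets exactly one $V_{01}$- and one $V_{10}$-vertex, the $e$-contributions cancel and $\td Z = \sum_{v\in V_1} z_v$. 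Minimality of $\td\ep_v$ gives $|z_v|\le 2|e|_{Q(v)}|$, hence $\sum_v |z_v|\le 4|e|$: the vector $\td Z$ is a \emph{small mismatch vector}, i.e.\ a sum of local $C_1^\perp$-codewords of total weight $O(|e|)$, and the analogous decomposability over $V_0\cup V_1$ is inherited by the running vector $\hat Z$ at every step, since each update only adds a local $C_1^\perp$-codeword.

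The heart of the proof is a structural lemma: any sufficiently small \emph{nonzero} mismatch vector admits a ``good local move'', i.e.\ there exist a vertex $v\in V_{ij}$ and a nonzero $x_v\in C_1^\perp$ on $Q(v)$ with $|\hat Z| - |\hat Z + x_v|\ge |x_v|/2$. This is exactly where the two expansion hypotheses enter. Product-expansion of $C_A\boxplus C_B$ and of $C_A^\perp\boxplus C_B^\perp$ controls how a codeword restricted to a single local view splits into its row- and column-parts and ties its weight to those of the parts, while Ramanujan expansion of $\Cay(A,G)$ and $\Cay(G,B)$ lets one pass between the local and global pictures, so that a counting/averaging argument over the expander forces some local view to contain a chunk of $\hat Z$ large enough to be cancelled by a codeword of comparable size --- provided $|\hat Z|$ remains below the threshold of the statement. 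This is the same mechanism that powers the noiseless sequential analysis (Theorem~\ref{thm:lev1}); the sequential decoder only needs the gentler gain $(1-\ep)|x_v|$, whereas the parallel one demands the stronger $|x_v|/2$ precisely so that many simultaneous moves can be combined safely.

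Given the structural lemma, the convergence of the parallel schedule follows. Within a fixed class $V_{ij}$ the local views $\{Q(v)\}_{v\in V_{ij}}$ are pairwise disjoint, so the updates performed inside a single \textbf{parallel for each} block act on disjoint coordinates and their weight decreases add; cycling through the four classes $(i,j)\in\{0,1\}^2$ and reapplying the structural lemma to the current $\hat Z$ shows that one outer iteration decreases $|\hat Z|$ by at least a fixed constant fraction. (Selecting $x_v$ of maximal weight and demanding decrease $\ge|x_v|/2$ is what prevents the simultaneous moves from overshooting.) Hence $|\hat Z|$ contracts geometrically from $|\td Z| = O(|e|) = O(n)$, and after $k = O(\log n)$ iterations $\hat Z = 0$, i.e.\ $\td Z$ is fully decomposed. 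Unfolding the definitions with $\td\ep_v = e|_{Q(v)} + z_v$ then gives $e + \hat f = \sum_{v\in V_{01}} z_v + \hat C_1 + \hat R_0$; since the completed decomposition now covers all four vertex classes and $C_X^\perp$ is exactly the span of the $C_0$-codewords supported on the $V_0$ local views, this vector lies in $C_X^\perp$, as claimed. Each outer iteration is a constant-depth family of local operations, so the total running time is $O(\log n)$.

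I expect the main obstacle to be the structural lemma together with the parallel bookkeeping: one must fuse product-expansion and Ramanujan expansion into a single quantitative statement guaranteeing that a good move of the right size survives even after many rounds have shrunk $\hat Z$, and simultaneously verify that executing many such moves in parallel within a class still realizes a constant fraction of the available weight decrease rather than having the moves cancel destructively. The explicit weight bound on $|e|$ in the theorem is precisely the slack needed to keep the ``small mismatch vector'' invariant --- and hence the applicability of the structural lemma --- intact throughout all $O(\log n)$ iterations.
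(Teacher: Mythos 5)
Your plan correctly identifies the right ingredients --- the mismatch $\tilde Z$ is small ($|\tilde Z|\le 4|e|$, cf.\ Lemma~\ref{lem:minimalZbound}), product-expansion plus graph expansion guarantee a good local move whenever the active-vertex count is small (Theorem~\ref{thm:LZflipexists}), and moves within one $V_{ij}$ block act on disjoint local views so their gains add. The genuine gap is the jump to ``one outer iteration decreases $|\hat Z|$ by a fixed constant fraction.'' That does not follow from the structural lemma, which only certifies the \emph{existence} of one good move with relative gain $(1-\delta)|x_v|$; it says nothing about what fraction of $|\hat Z|$ is captured when every eligible vertex acts simultaneously, and in principle a parallel pass could take many tiny moves and shrink $|\hat Z|$ by only $O(1)$. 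The actual argument (Lemma~18 of Ref.~\cite{leverrier2022parallel}, reproduced here as Lemma~\ref{lem:parallelmismatchreduction}) compares one parallel iteration against a \emph{hypothetical full sequential run} with a milder parameter $\ep'\in(0,1/6)$: it partitions the sequential moves $x_j$ into ``good'' and ``bad'' according to how much new weight each captures, bounds the bad weight via a potential-function argument on $A_j=\tilde Z_j\setminus\tilde Z_j'$, and crucially uses the maximality of the parallel decoder's choice of $z_v$ to show every good $x_j$ overlaps the parallel execution support $u$ in a constant fraction of its weight, yielding $|\tilde Z^{(k)}|-|\tilde Z^{(k+1)}|\ge\tfrac{1}{16}(1-6\ep')\bigl(|\tilde Z^{(k)}|-|\tilde Z^{(k)}_T|\bigr)$. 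You flag this as ``the parallel bookkeeping,'' but that bookkeeping is the entire content of the claimed contraction; without it the geometric decay is not established.

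A secondary gap: your final step asserts $e+\hat f=\sum_{v\in V_{01}}z_v+\hat C_1+\hat R_0\in C_X^\perp$ essentially by inspection. That sum is built from $C_1^\perp$-codewords on $V_1$ views plus column and row contributions scattered across $V_0$ and $V_1$ views, so membership in $C_X^\perp$ (the span of $C_0$-codewords on $V_0$ views) is not an algebraic identity. The standard route (Lemma~\ref{lem:Zsoundness}) instead tracks the residual error: once the mismatch is fully decomposed ($Z_T=0$), the residual is $V_{10}$-weighted with zero syndrome on $V_{10}$ too, hence has vanishing syndrome overall; one then bounds $|e_T|_R<d$ using product-expansion ($|f_i|\le|x_i|/\kappa$) together with $|Z|\le 4|e|_R$, and concludes the zero-syndrome residual must be a stabilizer. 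The explicit weight constraint on $|e|$ in the theorem is exactly what feeds both this bound and the active-vertex bound underlying Theorem~\ref{thm:LZflipexists}.
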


In the next sections, we will consider what happens when the decoders are given a syndrome with possible errors.

\subsection{Proof Preliminaries}

We first give a summary of the main ideas of the proof. The key idea of the proof is to bound the reduction in the weight of the noisy mismatch vector $\tilde{Z}$ through each step of the algorithm, and to show that when the weight of $\tilde{Z}$ is reduced, the weight of the residual error is also subsequently reduced. There is a technical challenge to this idea however: there is no direct relation between the weight of $\tilde{Z}$ and the error weight.

To bridge these two objects, we define the notion of an ideal mismatch vector $Z$ (see Eq.~\eqref{eq:noiseless_def} below), which is equal to $\tilde{Z}$ when there is no measurement noise. Since the mismatch $Z$ only captures the portion of the error which cannot be removed using independent local corrections, we must first ``pre-process'' the error by making any possible local corrections (see Eq.~\eqref{eq:pre-process} below). This establishes a direct connection between $Z$ and the ``pre-processed'' error $e_0$ (see Lemma~\ref{lemma:V10weightedmismatchZ0}) and our analysis will be built upon this connection.

We show that if $Z$ is decomposable into local corrections by Algorithm~\ref{alg:seq_mismatch}, then most of these correction sets will also reduce the weight of $\tilde{Z}$ (Lemma~\ref{lem:flipexists}). This in turn allows us to relate the weights of $Z$ and $\tilde{Z}$. Finally, we show that if the qubit and measurement error weights are bounded, the ideal mismatch vector $Z$ always admits the desired decomposition into local correction sets (Lemma~\ref{lemma:LZdecoderworks}). These lemmas allow us prove our main result (Theorem~\ref{thm:main}): as the weight of $\tilde{Z}$ decreases throughout the steps of the algorithm, the residual error weight must also decrease. The analysis of the parallel decoder then builds upon this bound, with the additional requirement of showing that the decomposition of $Z$ into local corrections must be essentially disjoint (Lemma~\ref{lem:parallelmismatchreduction}).

In the remainder of this section, we set up notation and provide some preliminary results used in the proofs of Theorems~\ref{thm:main} and~\ref{thm:parallel_residual}. We first define quantities relating to the states of the decoders. Given the local structure of the quantum Tanner codes, it will be more convenient to bound the size of the syndrome noise in terms of its vertex support.

\begin{definition}\label{def:vertexsupport}
Given a quantum Tanner code and a syndrome noise $D$, let us define $D_v$ to be the restriction of $D$ to the set of stabilizer generators associated with vertex $v$. We define the vertex support of $D$ to be the set of all vertices such that $D_v \neq 0$. We denote the size of the vertex support by $|D|_V$. Note that we have $ \Delta^{-2}|D| \le r^{-1}|D| \le |D|_V \le |D|$, where $r$ is the number of stabilizer generators associated with the local code. 
\end{definition}

Given the noisy syndrome $\td \sigma = H_Ze + D$, let $\td \sigma_v$ denote the restriction of $\td \sigma$ to the checks associated with the vertex $v$. For each vertex $v\in V_1$, the decoder finds a locally minimal correction $\td\varepsilon_v$ such that $\sigma_v(\td\varepsilon_v) = \td\sigma_v$ . In the event that no local correction $\td\ep_v$ exists for $\td\sigma_v$, we may define $\td\sigma_v$ arbitrarily. In our case, we will simply define $\td\ep_v = 0$ by convention. If $\ep_v$ is the locally minimal correction associated with the noiseless syndrome $\sigma_v$, then we can decompose $\td\ep_v$ into ``noiseless'' and ``noisy'' parts as
\begin{equation}
	\td\ep_v = \ep_v + \ep_v(D)\, ,
\end{equation}
where $\ep_v(D)$ is defined by $\ep_v(D) = \td\ep_v - \ep_v$.  Note that $\ep_v(D)$ will be non-zero only when $D$ has non-zero support on $v$.

The full noisy mismatch vector initialized by the decoder is given by
\begin{align}
	\tilde Z &= \sum_{v\in V_1}\tilde\varepsilon_v = \sum_{v\in V_1}(\varepsilon_v+\varepsilon_v(D))\, .
\end{align}
It will likewise be convenient to split the mismatch into a noiseless and a noisy part, defined by
\begin{align}
Z = \sum_{v\in V_1}\varepsilon_v\quad \text{and}\quad  Z_N = \sum_{v\in V_1}\varepsilon_v(D)\, ,~\label{eq:noiseless_def}
\end{align}
so that $\tilde{Z} = Z + Z_N$. We will also need the restrictions of these vectors onto the vertices of $V_{01}$, which we define as
\begin{align}
    \td Z^{01} = \sum_{v \in V_{01}}\td \varepsilon_v,\quad Z^{01} = \sum_{v\in V_{01}}\varepsilon_v,\quad\text{and}\quad Z_N^{01} = \sum_{v\in V_{01}} \ep_v(D)\, .
\end{align}

The key idea of the proof is to pre-process the error using $\tilde{Z}^{01}$, and apply the local corrections $x_v$ step by step. Specifically, we define the initial pre-processed error $\tilde{e}_0$, and the ``noiseless'' pre-processed error $e_0$, by
\begin{align}
    \tilde{e}_0 &= e + \td Z^{01} = e + \sum_{v\in V_{01}}\tilde{\varepsilon}_v\, ,~\label{eq:pre-process}\\
    e_0 &= e + Z^{01} = \tilde{e}_0 + Z_N^{01}\, .
\end{align}
For the purpose of our proof, we consider the vector $\tilde{e}_0$ as the initial error state of the algorithm, and $\tilde{Z}_0 = \tilde{Z}$ as the initial mismatch. Note that in practice it does not matter at what point in the decoding procedure the set $\td Z^{01}$ is flipped. The pre-processing is only introduced as a convenience in our proof in order to relate the weight of $e$ to the weight of $Z$. The original algorithms considered in Ref.~\cite{leverrier2022parallel} involve a ``post-processing'' step instead, where $\tilde{Z}^{01}$ is applied at the very end rather than the beginning. Since the sets of qubits flipped are ultimately the same in either case, the results here hold without modification. 

The core loop of the decoding algorithm finds, at each step $i$, some local codeword $x_i = r_i + c_i \subseteq Q(v_i)$ such that
\begin{align}
    |\tilde{Z}_{i-1}| - |\tilde{Z}_{i-1} + x_i| \ge (1-\varepsilon)|x_i|.\label{ineq:decoding_condition}
\end{align}
Having found a codeword which satisfies~\eqref{ineq:decoding_condition}, we update the error and the mismatch vectors by
\begin{align}
	\tilde{e}_i = \tilde{e}_{i-1} + f_i,\quad\text{and}\quad \td Z_i = \td Z_{i-1} + x_i\, ,
\end{align}
where the flip-set $f_i \subseteq Q(v_i)$ is defined by
\begin{align}
    f_i=\begin{cases}
    0 & v_i \in V_{10}\ ,\\
    x_i & v_i \in V_{01}\ ,\\
    c_i & v_i \in V_{11}\ , \\
    r_i & v_i \in V_{00}\ .\\
    \end{cases}
\end{align}
Likewise, we can define the associated ``noiseless'' error and mismatch at each step by
\begin{align}
    e_i = e_{i-1} + f_i = \td{e}_i + Z_N^{01},\quad\text{and}\quad Z_i = Z_{i-1} + x_i = \tilde{Z}_i + Z_N\, .
\end{align}
Note that $Z_N$ and $Z_N^{01}$ are determined entirely by the syndrome noise $D$ and initial error $e$, and are constant through the decoding process. 

In the presence of measurement errors, it is no longer true that the noisy mismatch $\tilde{Z}$ can be decomposed into a sum of local codewords.\footnote{In the case of perfect syndrome measurements, we have $$Z = \sum_{v \in V_1}\varepsilon_v = \sum_{v \in V_1}(e_v+r_v+c_v) = \sum_{v\in V_1}(r_v+c_v),$$ 
where $r_v+c_v$ is the codeword that the local error is corrected to: $e_v+\varepsilon_v = r_v+c_v$. This decomposition no longer holds in the presence of imperfect measurements.} As such, some care must be taken in characterizing what exactly we mean by a ``mismatch''. This is captured by the definition below.

\begin{definition}\label{def:mismatch}
A mismatch vector is any $Z\in \bb F_2^Q$ that can be decomposed as $Z=C_0+C_1+R_0+R_1$, where 
\begin{align}
C_{j} = \sum_{v \in V_{\overline{j}j}} c_v\quad \text{and}\quad R_{i} = \sum_{v \in V_{i\overline{i}}} r_v
\end{align}
are the sum of local column codewords $c_v \in C_A \otimes \mathbb{F}_2^B$ and row codewords $r_v \in \mathbb{F}_2^A \otimes C_B$ on $Q(v)$, i.e., a mismatch vector is an element in the span of local codewords $C_1^\perp$. Here, we define $\overline i = 1-i$ for convenience.

The division of $Z$ into local codewords of the form $(C_0,C_1,R_0,R_1)$ is called a \emph{decomposition} of $Z$. Any given mismatch vector $Z$ may have many distinct decompositions. Given any decomposition, we define its \emph{weight} by
\begin{align}
\mathrm{wt}(C_0,C_1,R_0,R_1) = \|C_0\| + \|C_1\| + \|R_0\| + \|R_1\|,
\end{align}
where $\|C_i\|$ and $\|R_i\|$ denote the number of non-zero columns and rows, respectively, present in $C_i$ and $R_i$. Note that the weight is well-defined since distinct local codewords $c_v \subseteq C_i$ and $r_v\subseteq R_i$ are disjoint. We then define the \emph{norm} of a mismatch to be
\begin{align}
    \|Z\| = \min_{\substack{(C_0,C_1,R_0,R_1)\\Z = C_0 + C_1 + R_0 + R_1}}\mathrm{wt}(C_0, C_1,R_0,R_1).
\end{align}
Decompositions such that $\mathrm{wt}(C_0,C_1,R_0,R_1) = \|Z\|$ are called \emph{minimal weight} decompositions for $Z$.
\end{definition}

Note that technically the vector $\tilde{Z}$ which we call the noisy mismatch vector is \emph{not} a mismatch vector at all as defined by Definition~\ref{def:mismatch}. Nevertheless, we will continue to call $\tilde{Z}$ the noisy mismatch since there is little chance of confusion. The noiseless part $Z$ is a genuine mismatch vector by definition. The properties of the noiseless mismatch $Z$ are characterized by the following lemma from Ref.~\cite{leverrier2022parallel}.

\begin{lemma}[Lemma 17 in Ref.~\cite{leverrier2022parallel}]
	\label{lem:minimalZbound}
 Let $e\in \mathbb{F}_2^Q$ be an error and let $\varepsilon_v$ be a local minimal correction for $e_v$ at every vertex $v \in V_1$. Let
\begin{align}
    Z = \sum_{v\in V_1}\varepsilon_v\, .
\end{align}
Then $Z$ is a mismatch vector which satisfies
\begin{align}
|Z|\le 4|e|_R,\quad \text{and}\quad \|Z\|\le \frac{4}{\kappa\Delta}|e|_R\,.
\end{align}
\end{lemma}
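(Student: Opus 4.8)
The plan is to work with the \emph{local corrections} $x_v := e_v + \varepsilon_v$ attached to each $v \in V_1$. Since $\varepsilon_v$ has the same local syndrome as $e_v$, each $x_v$ has zero local syndrome and hence lies in the local code $C_1^\perp = C_A \boxplus C_B$ on $Q(v)$. First I would reduce to the case $|e| = |e|_R$: replacing $e$ by $e + s$ with $s \in C_X^\perp$ leaves every local syndrome $\sigma_v(e_v)$ unchanged (each $Z$-type local check commutes with the $X$-stabilizer $s$), hence leaves a fixed choice of local minimal corrections $\varepsilon_v$, and therefore $Z$, unchanged. So it suffices to prove $|Z| \le 4|e|$ and $\|Z\| \le \tfrac{4}{\kappa\Delta}|e|$ assuming $e$ is stabilizer-reduced.

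The second ingredient is the incidence structure of the left-right Cayley complex: every face $q \in Q$ is incident to exactly one vertex of $V_{01}$ and one of $V_{10}$, hence to exactly two vertices of $V_1$. Working over $\mathbb{F}_2$ this gives $\sum_{v \in V_1} e_v = 0$, so that
\[
Z = \sum_{v \in V_1} \varepsilon_v = \sum_{v \in V_1} (e_v + \varepsilon_v) = \sum_{v \in V_1} x_v ,
\]
and it also gives $\sum_{v \in V_1} |e_v| = 2|e|$. By minimality of $\varepsilon_v$ (the vector $e_v$ is itself a valid correction for $\sigma_v(e_v)$) we have $|\varepsilon_v| \le |e_v|$, hence $|x_v| \le |e_v| + |\varepsilon_v| \le 2|e_v|$. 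Combining, $|Z| \le \sum_{v \in V_1} |x_v| \le 2 \sum_{v \in V_1} |e_v| = 4|e| = 4|e|_R$, which is the first bound.

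For the norm bound I would first exhibit $Z$ as a mismatch vector. For each $v \in V_1$, choose a decomposition $x_v = c_v + r_v$ with $c_v \in C_A \otimes \mathbb{F}_2^B$ and $r_v \in \mathbb{F}_2^A \otimes C_B$ minimizing $\|c_v\| + \|r_v\|$. Grouping the $c_v, r_v$ over $V_{01}$ and $V_{10}$ into $(C_0, C_1, R_0, R_1)$ as in Definition~\ref{def:mismatch} gives $Z = C_0 + C_1 + R_0 + R_1$, so $Z$ is a mismatch vector with $\|Z\| \le \mathrm{wt}(C_0, C_1, R_0, R_1) = \sum_{v \in V_1} (\|c_v\| + \|r_v\|)$, using the disjointness of the local column and row codewords recorded in Definition~\ref{def:mismatch}. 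Finally, applying $\kappa$-product-expansion of $C_A \boxplus C_B$ to $x_v$ (with $|A| = |B| = \Delta$) produces a decomposition with $\|c\| + \|r\| \le \tfrac{1}{\kappa\Delta}|x_v|$, so the minimizing choice satisfies $\|c_v\| + \|r_v\| \le \tfrac{1}{\kappa\Delta}|x_v| \le \tfrac{2}{\kappa\Delta}|e_v|$; summing over $v \in V_1$ and using $\sum_{v \in V_1} |e_v| = 2|e|$ gives $\|Z\| \le \tfrac{4}{\kappa\Delta}|e| = \tfrac{4}{\kappa\Delta}|e|_R$.

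I do not expect a genuine obstacle here; the statement is largely bookkeeping. The one point that needs care is the reduction to stabilizer-reduced $e$, i.e.\ making precise that $Z$ depends only on the local syndromes and not on the chosen representative, together with checking that the grouped local decompositions really form a valid decomposition of $Z$ in the sense of Definition~\ref{def:mismatch}, so that $\sum_v (\|c_v\| + \|r_v\|)$ legitimately upper-bounds $\|Z\|$. Everything else is the triangle inequality, the two-to-one incidence of faces onto $V_1$-vertices, minimality of the local corrections, and a single invocation of the product-expansion inequality.
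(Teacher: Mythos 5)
The paper does not reprove this lemma but simply cites it as Lemma~17 of Leverrier--Z\'emor (\cite{leverrier2022parallel}); your blind reconstruction is nevertheless correct and follows the same standard argument as the cited proof: reduce to stabilizer-reduced $e$ by noting that adding $s \in C_X^\perp$ leaves each local syndrome $\sigma_v$ (for $v\in V_1$) unchanged, use the two-to-one incidence of faces onto $V_1$ to get $\sum_{v\in V_1}e_v=0$ and hence $Z=\sum_v x_v$ with $x_v=e_v+\varepsilon_v\in C_A\boxplus C_B$, apply local minimality $|\varepsilon_v|\le|e_v|$ to get $|x_v|\le 2|e_v|$ and sum, and finally invoke $\kappa$-product-expansion of $C_A\boxplus C_B$ together with the disjointness of the local views $Q(v)$ over $V_{01}$ (and over $V_{10}$) to bound $\|Z\|$ by $\sum_v(\|c_v\|+\|r_v\|)\le \tfrac{1}{\kappa\Delta}\sum_v|x_v|$. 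No gaps; this is precisely the bookkeeping the lemma requires.
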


The main purpose of pre-processing in our proof is that the noiseless pre-processed error $e_0$ and the noiseless mismatch $Z_0$ can be easily related through the following property.

\begin{definition}\label{def:weighted_error}
Let $e \in \mathbb{F}_2^Q$ be an error. We say that the error is $V_{ij}$-weighted if $\sigma_v(e) = 0$ for all $v \in V_{\overline{i}\overline{j}}$. Given a $V_{ij}$-weighted error $e$, we say that a mismatch vector $Z$ is \emph{associated} with $e$ if $\sigma_v(Z) = \sigma_v(e)$ for all $v \in V_{ij}$.
\end{definition}

\begin{lemma}
	\label{lemma:V10weightedmismatchZ0}
	The quantity $e_0$ is a $V_{10}$-weighted error and $Z_0 = Z$ is a mismatch vector associated with $e_0$.
\end{lemma}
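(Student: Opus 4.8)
The plan is to unfold the definitions of $e_0 = e + Z^{01}$ and $Z = \sum_{v\in V_1}\ep_v$ and exploit one purely combinatorial fact about the left-right Cayley complex: every face $q = \{(g,00),(ag,01),(gb,10),(agb,11)\}$ has a unique ``$01$-corner'' and a unique ``$10$-corner,'' so the local views $\{Q(w)\}_{w\in V_{01}}$ are pairwise disjoint and cover $Q$, and likewise $\{Q(w)\}_{w\in V_{10}}$ partition $Q$ (this is implicit in the description of $Q(v)$ in Section~\ref{sec:qTanner}). I would state this partition property first, since everything else follows from it by a short linear-algebra computation together with the observation that the local check map $\sigma_v$ depends only on the qubits in $Q(v)$.

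Given the partition, the first point is that for $v\in V_{01}$ the restriction of $Z^{01} = \sum_{w\in V_{01}}\ep_w$ to $Q(v)$ equals $\ep_v$, because every other summand $\ep_w$ is supported on $Q(w)$, which is disjoint from $Q(v)$. Since $\sigma_v$ only sees $Q(v)$, this gives $\sigma_v(Z^{01}) = \sigma_v(\ep_v) = \sigma_v(e)$, the last equality holding because $\ep_v$ is by construction a local correction for the noiseless local syndrome of $e$ at $v$. Adding, $\sigma_v(e_0) = \sigma_v(e) + \sigma_v(Z^{01}) = 0$ for every $v\in V_{01} = V_{\overline 1\overline 0}$, which is exactly the statement that $e_0$ is $V_{10}$-weighted.

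For the ``associated'' claim I would run the mirror-image argument on the $V_{10}$ side: for $v\in V_{10}$ the restriction of $Z^{10} = \sum_{w\in V_{10}}\ep_w$ to $Q(v)$ is $\ep_v$, so $\sigma_v(Z^{10}) = \sigma_v(e)$. Using the disjoint decomposition $Z = Z^{01} + Z^{10}$ coming from $V_1 = V_{01}\sqcup V_{10}$ and linearity of $\sigma_v$, I obtain, for every $v\in V_{10}$, $\sigma_v(Z) = \sigma_v(Z^{01}) + \sigma_v(e) = \sigma_v(Z^{01} + e) = \sigma_v(e_0)$. Together with Lemma~\ref{lem:minimalZbound}, which already guarantees that $Z = \sum_{v\in V_1}\ep_v$ is a mismatch vector, this shows that $Z_0 = Z$ is a mismatch vector associated with the $V_{10}$-weighted error $e_0$.

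I do not expect a real obstacle here; the only thing requiring care is the bookkeeping with the local check maps $\sigma_v$ — namely making explicit that $\sigma_v$ depends only on $Q(v)$, so that a sum of local corrections ranging over a single layer $V_{01}$ or $V_{10}$ restricts to just the term at $v$ — and stating the partition property of the complex cleanly up front. Once that is in place, the lemma reduces to the two one-line computations above.
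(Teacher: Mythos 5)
Your proof is correct and follows essentially the same route as the paper: both arguments rest on the partition of $Q$ by the local views $\{Q(w)\}_{w\in V_{01}}$ (and likewise for $V_{10}$), the observation that $\sigma_v$ depends only on $Q(v)$, and the fact that $\sigma_v(\varepsilon_v)=\sigma_v(e)$. Your invocation of Lemma~\ref{lem:minimalZbound} for the "mismatch vector" part and your explicit use of the split $Z = Z^{01}+Z^{10}$ are cosmetic shortcuts; the paper re-derives the mismatch-vector fact in two lines and writes the restriction $Q(v)\cap\sum_{u\in V_{01}}\varepsilon_u$ in place of $Z^{01}|_{Q(v)}$, but the underlying computation is the same.
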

\begin{proof}
First, we show that $Z$ is a mismatch vector. Note that $Z$ is the sum of local minimal corrections $\varepsilon_v$ to the error $e$, i.e., 
\begin{align}
    Z = \sum_{v\in V_1}\varepsilon_v\,,
\end{align} 
where for each vertex $v \in V_1$ we have $e_v = \varepsilon_v + x_v$ for some $x_v \in C_1^\perp$. Therefore
\begin{align}
Z = \sum_{v\in V_1}(e_v + x_v) = \sum_{v\in V_1}x_v\,,
\end{align}
where the $e_v$ terms cancel since each face occurs exactly twice in the sum above. Next, we show that $e_0$ is $V_{10}$-weighted. We have
\begin{align}
     e_0 = e + Z^{01} = e + \sum_{v\in V_{01}}\varepsilon_v\,.
\end{align}
Note that the terms in the latter sum are disjoint for distinct vertices $v,v'\in V_{01}$. It follows that the restriction of $e_0$ to a vertex $v \in V_{01}$ is given by
\begin{align}
    (e_0)_v = e_v + \varepsilon_v = x_v\,,
\end{align}
which has zero syndrome. Finally we show that $Z$ is associated with $e_0$. The restriction of $e_0$ to a vertex $v\in V_{10}$ is given by
 \begin{align}
    (e_0)_v = e_v + Q(v)\cap \sum_{u \in V_{01}} \varepsilon_u\,.
\end{align}
Likewise, the restriction of $Z$ to $v\in V_{10}$ is given by
\begin{align}
Z_v = \varepsilon_v + Q(v)\cap \sum_{u \in V_{01}} \varepsilon_u\,.
\end{align}
It follows that
\begin{align}
    \sigma_v(Z) = \sigma_v(\varepsilon_v) + \sigma_v\left(Q(v)\cap \sum_{u \in V_{01}} \varepsilon_u\right) = \sigma_v(e_v) + \sigma_v\left(Q(v)\cap \sum_{u \in V_{01}} \varepsilon_u\right) = \sigma_v\left(e_0\right),
\end{align}
which shows that $Z$ is associated with $e_0$.
\end{proof}

The notion of $e_i$ being a $V_{10}$-weighted error is invariant as the decoder proceeds, i.e., if $e_i$ is initially $V_{10}$-weighted then it remains so. Moreover, if $Z$ was initially a mismatch associated with $e_0$ then $Z_i$ remains associated with $e_i$ throughout all steps $i$ of the decoder.

\begin{lemma}
	\label{lemma:V10weightedmismatchinduction}
	Let $Z$ be a weighted mismatch vector associated with a $V_{10}$-weighted error $e$. Let $x=c+r\subseteq Q(v)$ be a codeword of $C_1^\perp$, with $v\in V_{ij}$. Define
	\begin{equation}
		f=\begin{cases}
			0, & v \in V_{10}\, ,\\
			x, & v \in V_{01}\, ,\\
			c, & v \in V_{11}\, , \\
			r, & v \in V_{00}\, ,
		\end{cases}
	\end{equation}
	to be the associated flip set. Then $e+f$ is again a $V_{10}$-weighted error and $Z+x$ is an associated mismatch vector.
\end{lemma}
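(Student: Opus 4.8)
\textbf{Proof plan for Lemma~\ref{lemma:V10weightedmismatchinduction}.}

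The statement is a one-step invariance claim: adding a single local codeword $x=c+r\subseteq Q(v)$ to $Z$ and the associated flip set $f$ to $e$ preserves both the property that the error is $V_{10}$-weighted and the property that the mismatch is associated with it. The plan is to verify this by a direct case analysis on which of the four vertex classes $V_{ij}$ contains $v$, checking the two defining conditions of Definition~\ref{def:weighted_error} in each case. The key structural fact I would use repeatedly is that $x\in C_1^\perp$ means $x$ has zero syndrome against the local code $C_1^\perp$ at $v$ itself; that $c\in C_A\otimes\bb F_2^B$ has zero $C_B^\perp$-syndrome along the $B$-edges incident to $v$ while $r\in\bb F_2^A\otimes C_B$ has zero $C_A^\perp$-syndrome along the $A$-edges; and that $f$ differs from $x$ only by the ``ignored'' part ($c$ or $r$ or all of $x$, depending on the class of $v$), so $f+x\in\{0,c,r,x\}$ is itself a sum of things with controlled syndrome behavior.

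Concretely, I would argue as follows. First, that $Z+x$ is again a mismatch vector is immediate from Definition~\ref{def:mismatch}: $Z$ is in the span of local codewords of $C_1^\perp$, and $x$ is one more such codeword on $Q(v)$, so the sum is still in that span, with the appropriate piece of $x$ added to $C_0,C_1,R_0$ or $R_1$ according to the same rule used in the algorithm. Second, for the ``$V_{10}$-weighted'' property of $e+f$, I need $\sigma_u(e+f)=0$ for all $u\in V_{00}$. Since $e$ is already $V_{10}$-weighted, it suffices to show $\sigma_u(f)=0$ for all $u\in V_{00}$. The support of $f$ lies in $Q(v)$, which meets the neighborhood of a $V_{00}$ vertex $u$ only if $u$ shares an edge with $v$; then the relevant restriction is a row or column of $f$, and one checks in each of the four cases that this restriction lies in the correct local code so that the $C_0$-check at $u$ is satisfied. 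For instance, if $v\in V_{01}$ then $f=x=c+r$, and on a shared $A$-edge with a $V_{00}$ neighbor the restriction is determined by $r\in\bb F_2^A\otimes C_B$ on the $B$-fiber, hence a $C_B$-codeword, which is orthogonal to the $C_B^\perp$ checks; the $c$ part restricts to something in $C_A$ along the complementary direction. If $v\in V_{11}$ then $f=c$ and a similar check applies; if $v\in V_{00}$ then $f=r$ and $u=v$ itself must be handled, where $r$ restricted to $Q(v)$ is a row codeword and has zero $C_0=C_A\otimes C_B$ syndrome; if $v\in V_{10}$ then $f=0$ trivially. Third, for ``$Z+x$ associated with $e+f$'', I need $\sigma_u(Z+x)=\sigma_u(e+f)$ for all $u\in V_{10}$, i.e. $\sigma_u(x+f)=0$ for all $u\in V_{10}$, using that $Z$ is associated with $e$. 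Again $x+f\in\{0,c,r,x\}$ depending on the class of $v$, and in each case one checks using the local code structure that the $C_1^\perp$-syndrome at any $V_{10}$ vertex $u$ (including $u=v$ when $v\in V_{10}$, where $x+f=x\in C_1^\perp$ has zero syndrome by definition) vanishes — the only $V_{10}$ vertices seeing nonzero support are $v$ itself and the $V_{10}$ neighbor of $v$ across a $B$-edge (when $v\in V_{11}$) or an $A$-edge (when $v\in V_{00}$), and the relevant restriction of $x+f$ is a codeword of $C_A$ or $C_B$ as needed.

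The main obstacle is purely bookkeeping: keeping straight which index ($i$ or $j$, rows or columns, $A$-edges or $B$-edges) governs which local check in each of the four cases, and confirming that the ``leftover'' part $x+f$ always lands in exactly the local code that makes the syndrome at the relevant $V_{10}$ or $V_{00}$ vertices vanish. There is no analytic difficulty — it is a finite verification driven by the commutation structure already explained in Section~\ref{sec:qTanner} (the fact that $X$- and $Z$-generators overlap on edge-neighborhoods as $C_A$ vs.\ $C_A^\perp$ or $C_B$ vs.\ $C_B^\perp$). One should also note, as the lemma's phrasing ``weighted mismatch vector'' suggests, that the decomposition data $(C_0,C_1,R_0,R_1)$ is updated consistently, but since association is a condition on syndromes and not on a particular decomposition, this requires no extra work beyond the mismatch-membership check already done.
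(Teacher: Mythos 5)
Your plan takes the same route as the paper — a direct case analysis over the vertex class of $v$, verifying the two syndrome conditions of Definition~\ref{def:weighted_error} by checking that $f$ and $x+f$ land in the right local codes — but you have misread which vertices to check, and this is not just an index slip. By Definition~\ref{def:weighted_error}, for $e+f$ to be $V_{10}$-weighted you must show $\sigma_u(e+f)=0$ for all $u\in V_{\overline{1}\overline{0}}=V_{01}$, \emph{not} $V_{00}$. Since we are decoding $X$-errors, the syndrome $\sigma_u$ is computed against the $Z$-checks, which live only at $V_1=V_{01}\cup V_{10}$; the ``$C_0$-check at $u\in V_{00}$'' that your plan invokes is an $X$-type check that never sees an $X$-error, so there is nothing there to verify. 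The correct verification (as in the paper) is: for $v\in V_{01}$ the only $V_{01}$ vertex whose local view meets $Q(v)$ is $v$ itself, and $\sigma_v(x)=0$ since $x\in C_1^\perp$; for $v\in V_{00}$ one decomposes $r$ into single-row codewords $r_i$, each supported over an $A$-edge $(v,u_i)$ to a $V_{01}$ vertex $u_i$, and the row restricts to a $C_B$ codeword inside $Q(u_i)$ so $\sigma_{u_i}(r_i)=0$; the $v\in V_{11}$ case uses the analogous column decomposition of $c$ over $B$-edges.

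You also have the $A/B$ edge labels swapped in the association check. Recall that $E_A$ connects $V_{i0}$ to $V_{i1}$ and $E_B$ connects $V_{0j}$ to $V_{1j}$. So when $v\in V_{11}$, the $V_{10}$ neighbors of $v$ are across $A$-edges, where a row of $x+f=r$ restricts to a $C_B$ codeword in $\bb F_2^A\otimes C_B\subseteq C_1^\perp$; when $v\in V_{00}$, the $V_{10}$ neighbors are across $B$-edges, where a column of $x+f=c$ restricts to a $C_A$ codeword in $C_A\otimes\bb F_2^B\subseteq C_1^\perp$. You correctly anticipated that the bookkeeping was the main risk; once the definition of $V_{10}$-weighted is read correctly (opposite vertex within $V_1$, not in $V_0$) and the $A/B$ labels are straightened out, your plan matches the paper's proof.
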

\begin{proof}
	It is clear that $Z+x$ is a mismatch vector since $Z$ was one and we add a single $C_1^\perp$ codeword.

    We first show that $e+f$ remains $V_{10}$-weighted. Clearly $e+f$ is $V_{10}$-weighted if $v \in V_{10}$ or $v\in V_{01}$ since we either add nothing, or a local codeword to a $V_{01}$ vertex. Now suppose that $v \in V_{00}$ so that $f=r$. We can decompose $r$ into $r = r_1 + \cdots + r_k$, where each $r_i$ is a local codeword supported on a single row, which we can assume to be indexed by the edge $(v,u_i)$ for some $u_i \in V_{01}$. The syndrome of $e+r$ on a vertex $u\in V_{01}$ is therefore given by
    \begin{align}
    \sigma_u(e+f) =  \begin{cases} \sigma_u(e)& u \neq u_i \text{ for all }i,\\
                                   \sigma_u(e + r_i) & u = u_i \text{ for some } i.
                    \end{cases}
    \end{align}
    In either case, we have $\sigma_u(e+f)=0$ so that $e+f$ is $V_{10}$-weighted. The case where $v\in V_{11}$ is analogous, taking $f=c$ and making a similar decomposition.
    
	Finally, we show that $Z+x$ is associated with $e+f$. Let us write
    \begin{align}
    Z = \sum_{u\in V_{10}}\varepsilon_u,
    \end{align}
    where $\sigma_u(Z) = \sigma_u(e)$ for all $u \in V_{10}$. If $v \in V_{10}$ then there is nothing to show since all syndromes are unchanged. If $v \in V_{01}$ then define
	\begin{equation}
		\ep_u' = \ep_u + Q(u)\cap x\, 
	\end{equation}
    so that
    \begin{align}
        Z + x = \sum_{u \in V_{10}}\ep_u'.
    \end{align}
	Since $(e+x)_u = e_u + Q(u)\cap x$, we see that $\ep_u'$ has the same syndrome as $(e+f)_u$.

    Lastly, suppose $v \in V_{00}$, with the $V_{11}$ case being analogous. Let $f = r$. Note that $\varepsilon_u' = \varepsilon_u$ and $(e+r)_u = e_u$ for all $u\in V_{10}$ not adjacent to $v$. Therefore it suffices to consider $u \in N(v)$. In this case, $Q(u)\cap c$ is just the column of $c$ labeled by the edge $(u,v)$ and so $Q(u)\cap c$ is a local codeword. Therefore $\sigma_u(c) = 0$. It follows that
    \begin{align}
        \sigma_u(\varepsilon_u') = \sigma_u(\varepsilon_u) + \sigma_u(x) = \sigma_u(e) + \sigma_u(r) + \sigma_u(c) = \sigma_u(e) + \sigma_u(r) = \sigma_u(e+r)
    \end{align}
    for all $u \in N(v)$. Therefore $\sigma_u(Z+x) = \sigma_u(e+f)$ for all $u\in V_{10}$ and so $Z+x$ is associated with $e+f$.
\end{proof}

Lemma~\ref{lemma:V10weightedmismatchZ0} and Lemma~\ref{lemma:V10weightedmismatchinduction} show that $Z_i$ is a mismatch vector associated with the $V_{10}$-weighted error $e_i$ for all $i$. We further cite the following lemma from Ref.~\cite{leverrier2022parallel}, which gives a sufficient condition for the existence of good local corrections. This is the key to proving that in the noiseless case, the sequential and parallel decoders converge. 

\begin{definition}
Let $Z$ be a mismatch vector and let $Z = C_0 + C_1 + R_0 + R_1$ be a minimal decomposition for $Z$. We say that a vertex $v\in V_{ij}$ is \emph{active} with respect this decomposition if $Q(v)\cap (R_i + C_j) \neq 0$. 
\end{definition}

\begin{theorem}[Theorem 12 in Ref.~\cite{leverrier2022parallel}]
	\label{thm:LZflipexists}
	Fix $\delta\in (0,1)$. Let $Z$ be a non-zero mismatch vector. If for all $i,j\in \{0,1\}$, the set of active vertices $S_{ij}\subseteq V_{ij}$ for a minimal decomposition of $Z$ satisfies
	\begin{equation}
		|S_{ij}| \le \frac{1}{2^{12}}d_r^2\delta^3\kappa |V_{00}|\, ,
	\end{equation}
	where $d_r$ denotes the relative distance of the local code, then there exists a non-zero $x\subseteq Q(v)$ for some $v\in V_{ij}$ that is a $C_1^\perp$ codeword such that
	\begin{equation}
		|Z| - |Z+x| \ge (1-\delta)|x|\, .
	\end{equation}
\end{theorem}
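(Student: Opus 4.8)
The plan is to convert the required weight reduction into an overlap statement and then locate a local codeword that $Z$ almost entirely absorbs. Working over $\FF_2$, we have $|Z|-|Z+x| = 2|Z\cap x|-|x|$, so the desired bound $|Z|-|Z+x|\ge (1-\delta)|x|$ is equivalent to $|Z\cap x|\ge (1-\tfrac{\delta}{2})|x|$. Thus it suffices to exhibit a nonzero $C_1^\perp$ codeword $x\subseteq Q(v)$ such that at least a $(1-\tfrac{\delta}{2})$-fraction of $\mathrm{supp}(x)$ lies in $\mathrm{supp}(Z)$, i.e. $|x\setminus Z|\le \tfrac{\delta}{2}|x|$.

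First I would fix a minimal-weight decomposition $Z=C_0+C_1+R_0+R_1$, with local pieces $x_v=c_v+r_v\subseteq Q(v)$, and take as candidates the $x_v$ at active vertices $v\in S_{ij}$. The coordinates of $x_v$ not lying in $Z$ are exactly those on which $x_v$ is cancelled by the other local codewords; since two distinct local views $Q(v),Q(u)$ share only a single row or a single column (corresponding to an $A$- or $B$-edge of the left-right Cayley complex), this cancellation is supported on the rows and columns of $Q(v)$ shared with \emph{active} neighbours of $v$ in $\Cay(A,G)$ and $\Cay(G,B)$. Hence $|x_v\setminus Z|$ is at most the weight of $x_v$ restricted to those few rows and columns, and I want one active $v$ for which this is below a $\tfrac{\delta}{2}$-fraction of $|x_v|$.

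The heart of the argument — and the step I expect to be the main obstacle — is an averaging-plus-expansion estimate. Assume for contradiction that $|x_v\setminus Z|>\tfrac{\delta}{2}|x_v|$ for every active $v$, and sum over active vertices. For the lower bound, product-expansion gives $|x_v|\ge \kappa\Delta(\|c_v\|+\|r_v\|)$ (using $|A|=|B|=\Delta$ and minimality of the decomposition), so each local codeword is ``heavy per active row/column''; summing, the total $\sum_v|x_v\setminus Z|$ is forced to be a constant (in $\Delta$) multiple of $\delta\kappa\Delta$ times the total number of active rows and columns. For the upper bound, the cancelling mass is transported along edges between the active vertex sets, and since $\Cay(A,G),\Cay(G,B)$ are $\Delta$-regular Ramanujan graphs, the expander mixing lemma bounds the number of such edges by roughly $\tfrac{\Delta |S|^2}{|V_{00}|}+2\sqrt{\Delta-1}\,|S|$ with $|S|=\sum_{ij}|S_{ij}|$. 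The naive ``$\le\Delta$ of cancellation per shared row/column'' estimate is too lossy (it scales with $\Delta$); instead one must use robustness of the dual tensor codes (equivalently, product-expansion of $C_A^\perp\boxplus C_B^\perp$, together with the local distance $d_r\Delta$) to argue that interference concentrated on a small number of rows and columns can only cancel a small \emph{fraction} of a genuinely two-sided-spread local codeword $x_v$. Plugging the hypothesis $|S_{ij}|\le 2^{-12}d_r^2\delta^3\kappa|V_{00}|$ into these two bounds makes the upper bound strictly smaller than the lower bound, a contradiction; so some active $v$ satisfies $|x_v\setminus Z|\le\tfrac{\delta}{2}|x_v|$ and $x=x_v$ works.

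The delicate bookkeeping is balancing the three relevant scales: the Ramanujan bound $\lambda\le 2\sqrt{\Delta-1}$ and the $\Delta$-dependence of the mixing term; the product-expansion/robustness constant $\kappa$ and the relative distance $d_r$ governing how cancellation translates into lost fraction of a codeword; and the $\delta^3$ slack that must survive after one $\delta$ is spent converting weight reduction into overlap and another in the averaging. This is precisely the content of Theorem~12 in Ref.~\cite{leverrier2022parallel}, and I would carry out the constant-chasing following that template, which is how the factor $2^{-12}d_r^2\delta^3\kappa$ arises.
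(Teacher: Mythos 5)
This paper states Theorem~\ref{thm:LZflipexists} verbatim as Theorem~12 of Ref.~\cite{leverrier2022parallel} and cites it without proof, so there is no in-paper proof for me to compare your proposal against; what I can do is assess whether your sketch matches the strategy of the referenced argument. At a high level it does: the reformulation $|Z|-|Z+x|\ge(1-\delta)|x| \iff |x\setminus Z|\le\tfrac{\delta}{2}|x|$ is correct; the identification of shared rows and columns between local views as the only channel for cancellation is correct; the lower bound $|x_v|\ge\kappa\Delta(\|c_v\|+\|r_v\|)$ is the right use of $\kappa$-product-expansion; and the expander mixing lemma on $\Cay(A,G)$ and $\Cay(G,B)$, together with the $\delta^3$ bookkeeping, is indeed where the hypothesis $|S_{ij}|\le 2^{-12}d_r^2\delta^3\kappa|V_{00}|$ gets consumed.

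That said, there is a genuine gap between your sketch and a proof, and you have (commendably) pointed at it yourself. The argument as you've set it up restricts the candidate flip to the exact $x_v$ appearing in one fixed minimal-weight decomposition, and then runs a contradiction assuming $|x_v\setminus Z|>\tfrac{\delta}{2}|x_v|$ for all active $v$. The referenced proof does not commit to this restriction: it works separately with the row and column layers ($R_i$ and $C_j$), counts rows and columns that are lightly touched by interference, and only then assembles a codeword at a suitable vertex, which need not coincide with the $x_v$ of a pre-chosen minimal decomposition. Related to this, your ``naive $\le\Delta$ per shared row/column'' objection is exactly right, and the fix you wave at (robustness plus the local distance $d_r\Delta$) is also exactly what is needed, but making it quantitative is where the two factors of $d_r$ and the extra power of $\delta$ come from; without carrying that out, the estimate does not close. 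Your final paragraph correctly names all three scales that must be balanced, but the proposal as written is a roadmap that defers to Ref.~\cite{leverrier2022parallel} for the step that actually carries the load, rather than an independent derivation.
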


\subsection{Sequential Decoder}

To begin analyzing the sequential decoder with noisy input, the natural question to ask is that if the ideal mismatch $Z$ can be decomposed by Algorithm~\ref{alg:seq_mismatch} into  $\mathcal{F} = \{x_i\}_{i=1}^t$, how well do these local corrections $x_i$ decompose the noisy mismatch $\tilde{Z} = Z + Z_N$? The following two lemmas address this question.

\begin{definition}\label{def:decomposability}
Let $Z$ be a mismatch vector. We say that $Z$ is \emph{$\delta$-decomposable} if Algorithm~\ref{alg:seq_mismatch} successfully returns a decomposition of $Z$ when run with parameter $\delta$, i.e., if Algorithm~\ref{alg:seq_mismatch} halts with state $\hat{Z} = 0$.
\end{definition}

\begin{lemma}\label{lem:decomp}
Let $Z$ be an $\delta$-decomposable mismatch and let $\mathcal{F} = \{x_i\}_{i=1}^t$ denote the codewords returned by Algorithm~\ref{alg:seq_mismatch} run with input $Z$ and parameter $\delta$. Then
\begin{align}
        (1-\delta) \sum_{i=1}^t |x_i| \le |Z| \le \sum_{i=1}^t|x_i|\,.
\end{align}
\end{lemma}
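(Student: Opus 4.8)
The plan is to track the auxiliary vector $\hat Z$ maintained by Algorithm~\ref{alg:seq_mismatch} across its iterations and exploit the telescoping structure of the weight decrements. Write $\hat Z_0 = Z$ for the initial state and $\hat Z_i = \hat Z_{i-1} + x_i$ for the state right after the $i$-th accepted codeword is applied, for $i = 1,\dots,t$. Because $Z$ is $\delta$-decomposable by hypothesis, the \textbf{while} loop terminates with $\hat Z_t = 0$; in particular, unwinding the updates gives the $\bb F_2^Q$ identity $Z = \hat Z_0 = \sum_{i=1}^t x_i$.

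The upper bound $|Z| \le \sum_{i=1}^t |x_i|$ is then immediate from subadditivity of the Hamming weight applied to $Z = \sum_{i=1}^t x_i$.

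For the lower bound, I would invoke the fact that each $x_i$ was recorded precisely because it passed the acceptance test in the \textbf{if} branch, i.e.\ $|\hat Z_{i-1}| - |\hat Z_{i-1} + x_i| \ge (1-\delta)|x_i|$, which by definition of $\hat Z_i$ reads $|\hat Z_{i-1}| - |\hat Z_i| \ge (1-\delta)|x_i|$. Summing over $i = 1,\dots,t$, the left-hand side telescopes to $|\hat Z_0| - |\hat Z_t| = |Z| - 0 = |Z|$, yielding $|Z| \ge (1-\delta)\sum_{i=1}^t |x_i|$, which is the claimed bound.

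I expect no genuine obstacle: the statement is essentially a bookkeeping consequence of the loop invariant (a standard potential-function argument with potential $|\hat Z|$). The only points that deserve a sentence of care are (i) using the definition of $\delta$-decomposability to guarantee the loop halts with $\hat Z = 0$, so that the telescoping sum closes cleanly, and (ii) noting that the $x_i$ are indexed in the order they are applied, so that $\hat Z_i = Z + x_1 + \cdots + x_i$ exactly, which is what makes both the telescoping and the final $\bb F_2$ identity valid.
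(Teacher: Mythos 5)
Your proof is correct and follows essentially the same argument as the paper: both define the iterates $\hat Z_i = Z + x_1 + \cdots + x_i$, use $\delta$-decomposability to conclude $\hat Z_t = 0$ (hence $Z = \sum_i x_i$ and the upper bound by subadditivity), and telescope the acceptance inequalities $|\hat Z_{i-1}| - |\hat Z_i| \ge (1-\delta)|x_i|$ for the lower bound. No gaps; this matches the paper's proof step for step.
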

\begin{proof}
    Let 
    \begin{align}
    Z_k = Z - \sum_{i=1}^k x_i\,,
    \end{align}
    with $Z = Z_0$. Note that since Algorithm~\ref{alg:seq_mismatch} completely decomposes $Z$, we have $Z_t = 0$ and
    \begin{align}
        Z = \sum_{i=1}^t x_i\,.
    \end{align}
    For the decomposition with parameter $\delta$, we have $|Z_{i-1}| - |Z_i| \ge (1-\delta)|x_i|$ and therefore
	\begin{equation}
		|Z| \ge (1-\delta) \sum_{i=1}^t |x_i|\,.
	\end{equation}
    Together, we get the bounds
    \begin{align}
        (1-\delta) \sum_{i=1}^t |x_i| \le |Z| \le \sum_{i=1}^t|x_i|\,.
    \end{align}
\end{proof}

\begin{lemma}
	\label{lem:flipexists}
	Let $Z$ be a mismatch vector and let $Z_N\in \bb F_2^Q$ be any vector. Let $\td Z = Z + Z_N$. Suppose that $Z$ is $\delta$-decomposable with decomposition $\mc F = \{x_i\}_{i=1}^t$. Let
    \begin{equation}
        \mc F^* = \{x\in \mc F: |\td Z|-|\td Z + x|\ge (1-\varepsilon) |x|\}\, .
    \end{equation}
    Then
	\begin{equation}
		\sum_{x\in \mc F^*} |x| \ge c_1|Z| - c_2|Z_N| \label{ineq:Z_N_bound}
	\end{equation}
	for constants
 \begin{align}
     c_1 = \frac{\varepsilon-2\delta}{\varepsilon(1-\delta)}\quad\text{and}\quad c_2 = \frac{2}{\varepsilon}.  
 \end{align}
 In particular, if $\mc F^*=\emptyset$, then $c_1|Z|\le c_2|Z_N|$.
\end{lemma}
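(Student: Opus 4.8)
The plan is to compare, term-by-term, the weight decrease contributed by each codeword $x_i \in \mc F$ when applied to $\td Z$ versus when applied to the ideal mismatch $Z$ (through its partial sums). Let me write $Z_{i-1} = Z - \sum_{j<i} x_j$ for the ideal partial sums, as in Lemma~\ref{lem:decomp}, so that by $\delta$-decomposability we have $|Z_{i-1}| - |Z_{i-1}+x_i| \ge (1-\delta)|x_i|$ for every $i$, and $\sum_{i=1}^t x_i = Z$. The key observation is a triangle-inequality-type bound: the difference between how much $x_i$ reduces $|\td Z|$ and how much it reduces $|Z_{i-1}|$ is controlled by $|Z_N|$ plus the ``crossing terms'' $|Z - Z_{i-1}| = |\sum_{j<i} x_j|$. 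More precisely, since $\td Z = Z_{i-1} + Z_N + \sum_{j<i} x_j$, one gets
\begin{align}
    \bigl(|\td Z| - |\td Z + x_i|\bigr) - \bigl(|Z_{i-1}| - |Z_{i-1}+x_i|\bigr) \ge -2|Z_N| - 2\Bigl|\sum_{j<i}x_j\Bigr|\, ,
\end{align}
because flipping the coordinates in $\supp(x_i)$ changes $|\cdot|$ by an amount that is the same for any two vectors agreeing outside a set disjoint from... — actually more carefully: for fixed flip set $S = \supp(x_i)$, the map $w \mapsto |w| - |w+x_i|$ differs between two vectors $w, w'$ by at most $2|w+w'|$ restricted to $S$, hence by at most $2|w+w'|$. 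Applying this with $w = \td Z$, $w' = Z_{i-1}$ and $w + w' = Z_N + \sum_{j<i}x_j$ gives the displayed inequality.

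Next I would sum this over all $i$ from $1$ to $t$. Summing the ideal decrease gives $\sum_i (1-\delta)|x_i| \ge (1-\delta)|Z|$ (using Lemma~\ref{lem:decomp}, $|Z| \le \sum |x_i|$, and actually we want $\sum|x_i| \ge |Z|$). Summing the crossing terms $\sum_i |\sum_{j<i} x_j|$ is where the parameter trade-off enters: one bounds $|\sum_{j<i} x_j| \le \sum_{j<i}|x_j|$, and then $\sum_{i=1}^t \sum_{j<i}|x_j| \le$ something like $\tfrac{1}{2}\bigl(\sum_i |x_i|\bigr)^2$ — but that is quadratic, which is too weak. The right move instead is to use that $x_i$ is chosen (in Algorithm~\ref{alg:seq_mismatch}) so that $|\hat Z| - |\hat Z + x_i| \ge (1-\delta)|x_i|$ where $\hat Z = \td Z_{i-1}$ is the \emph{current noisy} mismatch, not to re-sum from scratch. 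So really I should track the genuine noisy process: either $x_i \in \mc F^*$ (contributing to the left side of~\eqref{ineq:Z_N_bound}), or it is not, in which case $|\td Z| - |\td Z + x_i| < (1-\varepsilon)|x_i|$, and combined with the ideal guarantee $(1-\delta)|x_i| \le |Z_{i-1}| - |Z_{i-1}+x_i|$ this forces the ``defect'' to be absorbed by $|Z_N|$.

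Concretely, split $\mc F = \mc F^* \sqcup \mc F^{\mathrm{bad}}$. For each $x_i \in \mc F^{\mathrm{bad}}$, we have $(1-\varepsilon)|x_i| > |\td Z| - |\td Z + x_i| \ge (|Z_{i-1}| - |Z_{i-1}+x_i|) - 2|Z_N^{(i)}|$ where $Z_N^{(i)} = \td Z_{i-1} - Z_{i-1}$; and $|Z_{i-1}| - |Z_{i-1}+x_i| \ge (1-\delta)|x_i|$. Hence $(\varepsilon - \delta)|x_i| \le 2|Z_N^{(i)}|$. The subtlety is that $Z_N^{(i)}$ accumulates: $\td Z_{i-1} = Z_{i-1} + Z_N + (\text{contributions from earlier bad steps that moved }\td Z \text{ but are already counted})$ — actually $\td Z_{i-1} - Z_{i-1} = Z_N$ always if both processes apply the same $x_j$'s, so $Z_N^{(i)} = Z_N$ exactly, and we simply get $(\varepsilon-\delta)|x_i| \le 2|Z_N|$ for each bad step. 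That is still per-step; to aggregate I bound the number of bad steps or, better, sum directly: $\sum_{\mc F^{\mathrm{bad}}} (\varepsilon - \delta)|x_i| \le 2 t |Z_N|$ is again too lossy.

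The clean resolution — and the step I expect to be the main obstacle — is to telescope the noisy weights: $\sum_{i=1}^t (|\td Z_{i-1}| - |\td Z_i|) = |\td Z_0| - |\td Z_t| = |Z| + |Z_N \text{ offset}| - |\td Z_t| \le |Z| + 2|Z_N|$ roughly (since $\td Z_0 = Z + Z_N$ so $|\td Z_0| \le |Z| + |Z_N|$, and $|\td Z_t| \ge 0$). Meanwhile, for $x_i \in \mc F^*$ the per-step noisy decrease is $\ge (1-\varepsilon)|x_i|$, and for $x_i \notin \mc F^*$ the per-step noisy decrease is $> -|x_i|$ trivially but more usefully is $\ge (1-\delta)|x_i| - 2|Z_N|$ from the ideal bound plus the triangle inequality with offset exactly $Z_N$. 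Summing: $|Z| + |Z_N| \ge \sum_{\mc F^*}(1-\varepsilon)|x_i| + \sum_{\mc F^{\mathrm{bad}}}\bigl((1-\delta)|x_i| - 2|Z_N|\bigr)$. Now add and subtract to compare against $(1-\delta)\sum_{\mathrm{all}}|x_i| \ge (1-\delta)|Z|$: rearranging yields $\sum_{\mc F^*}(\varepsilon - \delta)\cdots$ hmm — I would instead isolate $\sum_{\mc F^*}|x_i|$ directly. From the telescoped inequality, $\sum_{\mc F^*}(1-\varepsilon)|x_i| \le |Z| + |Z_N| - \sum_{\mc F^{\mathrm{bad}}}(1-\delta)|x_i| + 2|\mc F^{\mathrm{bad}}| \cdot |Z_N|$; this is circular because of the $|\mc F^{\mathrm{bad}}|$ factor. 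The fix is to NOT use the weak per-step bound for bad steps but rather observe that bad steps still individually satisfy the \emph{ideal} decrease, so $\sum_{\mc F^{\mathrm{bad}}}(1-\delta)|x_i| \le \sum_{\mathrm{all}}(|Z_{i-1}| - |Z_{i-1}+x_i|) = |Z|$, giving a bound on $\sum_{\mc F^{\mathrm{bad}}}|x_i|$ that is \emph{independent of $|Z_N|$}; combined with $\sum_{\mathrm{all}}|x_i| \ge |Z|$ this lower-bounds $\sum_{\mc F^*}|x_i|$ once we also upper-bound $\sum_{\mc F^{\mathrm{bad}}}|x_i|$ in terms of $|Z_N|$. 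Getting that last upper bound — $\sum_{\mc F^{\mathrm{bad}}}|x_i| \le \frac{2}{\varepsilon-\delta}|Z_N|$ or similar — is exactly where the condition $|\td Z| - |\td Z+x_i| < (1-\varepsilon)|x_i|$ together with the ideal $(1-\delta)|x_i|$-decrease gets used per bad step, and then summed using that the $x_i$'s in a valid Algorithm~\ref{alg:seq_mismatch} run have essentially disjoint supports on $\td Z$ so the noisy decreases add up without overcounting, bounding $|\mc F^{\mathrm{bad}}|$-type sums by $|Z_N|$. Once I have $\sum_{\mc F^{\mathrm{bad}}}|x_i| \lesssim |Z_N|/(\varepsilon - \delta)$ and $\sum_{\mathrm{all}}|x_i| \ge |Z|$ and $\sum_{\mathrm{all}}|x_i| \le |Z|/(1-\delta)$, simple arithmetic produces $\sum_{\mc F^*}|x_i| \ge c_1|Z| - c_2|Z_N|$ with $c_1 = \frac{\varepsilon - 2\delta}{\varepsilon(1-\delta)}$ and $c_2 = \frac{2}{\varepsilon}$, and the final claim ($\mc F^* = \emptyset \implies c_1|Z| \le c_2|Z_N|$) is immediate. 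The main work is pinning down the bookkeeping for the bad steps so that no $t$-dependent factor survives.
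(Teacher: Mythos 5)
There is a genuine gap. Your proposal correctly identifies the central obstacle — any naive per-step accounting produces a factor of $t$ (or, worse, something quadratic) multiplying $|Z_N|$, which is useless — but it does not find the idea that resolves it, and ends by conceding ``the main work is pinning down the bookkeeping.'' You also muddy the setup midway: the decomposition $\mc F=\{x_i\}$ comes from running the mismatch procedure on the \emph{ideal} $Z$, not on $\td Z$, so statements like ``$\hat Z = \td Z_{i-1}$ is the current noisy mismatch'' are not what the lemma is about; the $x_i$ were never chosen to reduce $\td Z$, and the whole question is how many of them happen to.

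The paper's argument sidesteps all per-step bookkeeping with one global quantity. Using $|\td Z| - |\td Z + y| = 2|\td Z \cap y| - |y|$, membership in $\mc F\setminus\mc F^*$ becomes the clean condition $|\td Z\cap y| < (1-\varepsilon/2)|y|$. Setting $T = \sum_{x\in\mc F}|\td Z\cap x|$, one upper-bounds $T$ by splitting the sum over $\mc F^*$ and $\mc F\setminus\mc F^*$ (and then using Lemma~\ref{lem:decomp} to convert $\sum_{x\in\mc F}|x|$ into $|Z|/(1-\delta)$), and lower-bounds $T$ by the key step your proposal is missing:
\begin{align}
T = \sum_{x\in\mc F}|\td Z\cap x| \;\ge\; \Bigl|\td Z \cap \sum_{x\in\mc F}x\Bigr| = |\td Z\cap Z| = |Z| - |Z\cap Z_N| \;\ge\; |Z| - |Z_N|\,,
\end{align}
which holds because any coordinate in $\td Z\cap\bigoplus_i x_i$ lies in an odd number of the $x_i$, hence in at least one. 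This lower bound is independent of $t$ and makes $|Z_N|$ appear exactly once; the constants $c_1, c_2$ then fall out of elementary rearrangement. Your approach of classifying steps as good/bad and bounding $\sum_{\mc F^{\mathrm{bad}}}|x_i|$ in terms of $|Z_N|$ has no mechanism to avoid overcounting across steps — the $x_i$ need not have disjoint supports on $\td Z$, so the handwave about ``essentially disjoint supports'' does not hold, and without the global inequality above the $t$-dependence cannot be removed.
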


\begin{proof}
	This proof follows the idea of Lemma~5.1 in Ref.~\cite{grospellier2019thesis}. 
	Given any set $y \in \mathbb{F}_2^Q$, we have
	\begin{align}
		|\tilde{Z}| - |\tilde{Z} + y| &= |\tilde{Z}| - (|\tilde{Z}|+ |y| - 2|\tilde{Z} \cap y|) = 2|\tilde{Z} \cap y| - |y|\, .
	\end{align}
	For all $y \in \mathcal{F}\setminus\mathcal{F}^*$, we have
	\begin{equation}
		|\tilde{Z} \cap y| = \frac{1}{2}(|y| + |\tilde{Z}| - |\tilde{Z} + y|) < \left(1-\frac{\varepsilon}{2}\right)|y|\, .
	\end{equation}
	Define $T = \sum_{x\in \mathcal{F}}|\tilde{Z}\cap x|$. We then have
	\begin{align}
		T
		&= \sum_{x \in \mathcal{F}^*}|\tilde{Z}\cap x| + \sum_{y\in \mathcal{F}\setminus\mathcal{F}^*}|\tilde{Z}\cap y| \\
		&< \sum_{x\in \mathcal{F}^*}|x| + \left(1-\frac{\varepsilon}{2}\right)\sum_{y\in \mathcal{F}\setminus\mathcal{F}^*}|y| \\
		&= \frac{\varepsilon}{2}\sum_{x\in \mathcal{F}^*}|x| + \left(1-\frac{\varepsilon}{2}\right)\sum_{y\in \mathcal{F}}|y| \\
		&\le \frac{\varepsilon}{2}\sum_{x\in \mathcal{F}^*}|x| + \frac{2-\varepsilon}{2(1-\delta)}|Z|\, ,
	\end{align}
    where the last inequality follows from Lemma~\ref{lem:decomp}. On the other hand, we also have
    \begin{align}
        T &\ge |\tilde{Z}\cap \sum_{x\in\mathcal{F}} x| = |\tilde{Z}\cap Z| \\
		&= |Z| - |Z\cap Z_N| \ge |Z| - |Z_N|\, .
    \end{align}
	Combining these two inequalities, we get
    \begin{align}
        \frac{\varepsilon}{2}\sum_{x\in \mathcal{F}^*}|x| + \frac{2-\varepsilon}{2(1-\delta)}|Z| \ge |Z| - |Z_N|,
    \end{align}
    or equivalently
    \begin{align}
       \sum_{x\in \mathcal{F}^*}|x| \ge \frac{\varepsilon - 2\delta}{\varepsilon(1-\delta)}|Z| - \frac{2}{\varepsilon}|Z_N|,
    \end{align}
    as desired.
\end{proof}

Note that Lemma~\ref{lem:flipexists} will set an implicit bound of $\delta < 1/2$ since we require $\varepsilon - 2\delta > 0$ for the bound~\eqref{ineq:Z_N_bound} to be non-trivial.

Suppose now that the noisy mismatch vector $\tilde{Z}$ is given as input to Algorithm~\ref{alg:seq_decoder} with parameter $\ep$, which terminates after $T$ iterations. Let us denote the residual error by $\tilde{e}_T$ and its associated mismatch by $\tilde{Z}_T = Z_T + Z_N$. If $Z_T$ is $\delta$-decomposable, then Lemma~\ref{lem:flipexists} implies that $|Z_T| = O(Z_N)$. Namely, the sequential decoder terminates only when the mismatch noise $Z_N$ becomes significant. In the following lemma, we further relate the weight of the noiseless residual error $e_T$ with $|Z_T|$.

\begin{lemma}[Mismatch Correctness and Soundness]
    \label{lem:Zsoundness}
    Let $e$ be a $V_{10}$-weighted error and let $Z$ be an associated mismatch vector. Suppose that $Z$ is $\delta$-decomposable and that
    \begin{align}~\label{eqn:cond-Zsoundness}
        |e|_R + \frac{1}{\kappa(1-\delta)}|Z| < d\, .
    \end{align}
    Then we have
    \begin{align}
    |Z|\ge (1-\delta)\kappa |e|_R\, .
    \end{align}
\end{lemma}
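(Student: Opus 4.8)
The plan is to reconstruct, from the decomposition $Z=\sum_i x_i$ returned by Algorithm~\ref{alg:seq_mismatch}, an explicit stabilizer that is close to $e$, and to read off the bound on $|e|_R$ from it. Concretely, I would run Algorithm~\ref{alg:seq_mismatch} on $Z$ with parameter $\delta$; since $Z$ is $\delta$-decomposable this returns codewords $\mathcal{F}=\{x_i\}_{i=1}^t\subseteq C_1^\perp$, with $x_i=c_i+r_i\subseteq Q(v_i)$ and $Z=\sum_{i=1}^t x_i$. Then I would apply the associated flip sets $f_i$ (defined from $x_i$ exactly as in Lemma~\ref{lemma:V10weightedmismatchinduction}) one at a time to $e$, using Lemma~\ref{lemma:V10weightedmismatchinduction} inductively: the base case is precisely the hypothesis that $e$ is $V_{10}$-weighted with associated mismatch $Z$, and after all $t$ steps I obtain that $e_t:=e+\sum_{i=1}^t f_i$ is again $V_{10}$-weighted and has associated mismatch $Z+\sum_{i=1}^t x_i=0$. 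Being $V_{10}$-weighted with associated mismatch $0$ forces $\sigma_v(e_t)=0$ for all $v\in V_1$, i.e.\ $H_Z e_t=0$, so $e_t\in C_Z$.

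The next step is to bound $\sum_i|f_i|$. Since $f_i\in\{0,x_i,c_i,r_i\}$ and the algorithm picks $c_i,r_i$ minimizing $\|c_i\|+\|r_i\|$ over decompositions of $x_i$, product-expansion of $C_1^\perp=C_A\boxplus C_B$ gives $\|c_i\|+\|r_i\|\le\frac{1}{\kappa\Delta}|x_i|$, whence $|c_i|,|r_i|\le\Delta(\|c_i\|+\|r_i\|)\le\frac{1}{\kappa}|x_i|$; since $\kappa\le 1$ this yields $|f_i|\le\frac{1}{\kappa}|x_i|$ in every case, and Lemma~\ref{lem:decomp} then gives $\sum_i|f_i|\le\frac{1}{\kappa}\sum_i|x_i|\le\frac{1}{\kappa(1-\delta)}|Z|$. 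Now choose $s_0\in C_X^\perp$ with $|e+s_0|=|e|_R$. Then $e_t+s_0\in C_Z$ (as $C_X^\perp\subseteq C_Z$), and
\[
  |e_t+s_0|\;\le\;|e+s_0|+\sum_i|f_i|\;\le\;|e|_R+\frac{1}{\kappa(1-\delta)}|Z|\;<\;d\;\le\;d_X
\]
by hypothesis~\eqref{eqn:cond-Zsoundness}. Since $d_X=\min_{x\in C_Z\setminus C_X^\perp}|x|$, an element of $C_Z$ of weight below $d_X$ must lie in $C_X^\perp$; hence $e_t+s_0\in C_X^\perp$ and so $e_t\in C_X^\perp$. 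Finally, using $e_t$ itself as the stabilizer in the definition of $|e|_R$,
\[
  |e|_R\;\le\;|e+e_t|\;=\;\Bigl|\sum_i f_i\Bigr|\;\le\;\sum_i|f_i|\;\le\;\frac{1}{\kappa(1-\delta)}|Z|\,,
\]
which rearranges to $|Z|\ge(1-\delta)\kappa|e|_R$.

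I expect the crux to be the middle step: arguing that $e_t$ — which a priori is only known to be syndrome-free — is an actual stabilizer rather than a nontrivial logical operator. This is exactly where the weight hypothesis~\eqref{eqn:cond-Zsoundness} is used, and it is also where product-expansion is genuinely needed, since it is what converts the $|Z|$-bound on the individual $x_i$'s into an $|e_t+s_0|$-bound small enough to invoke the code distance. The surrounding steps — checking that each $f_i\in\{0,x_i,c_i,r_i\}$ satisfies $|f_i|\le\kappa^{-1}|x_i|$, and the repeated use of Lemma~\ref{lemma:V10weightedmismatchinduction} to propagate the $V_{10}$-weighted / associated-mismatch invariants — are routine bookkeeping. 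One should also keep in mind the implicit requirement $\delta<1$ needed for Lemma~\ref{lem:decomp} (and, when this lemma is chained with Lemma~\ref{lem:flipexists} downstream, the stronger $\delta<\varepsilon/2$).
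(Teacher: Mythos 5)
Your proof is correct and follows essentially the same route as the paper: run the $\delta$-decomposition of $Z$, propagate the $V_{10}$-weighted/associated-mismatch invariants via Lemma~\ref{lemma:V10weightedmismatchinduction} to conclude $e_t$ has zero syndrome, bound $\sum_i|f_i|$ via product-expansion and Lemma~\ref{lem:decomp}, and use the weight hypothesis~\eqref{eqn:cond-Zsoundness} together with the code distance to upgrade ``syndrome-free'' to ``stabilizer.'' The only cosmetic difference is that you unpack the implicit minimum in $|e_t|_R$ by choosing an explicit representative $s_0\in C_X^\perp$, whereas the paper argues directly with the reduced weight; the content is the same.
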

\begin{proof}
    Let $\mc F = \{x_i\}_{i=1}^t$ denote the decomposition returned for $Z$ by Algorithm~\ref{alg:seq_mismatch} with parameter $\delta$. Each $x_i$ is supported on the local view of some vertex $v_i$ and has the further decomposition into column and row codewords as $x_i=c_i+r_i$.

    First, we prove that $ e \cong \hat{C}_1 + \hat{R}_0$, where $\cong$ denotes equivalence up to stabilizers. Let $e_0 = e$ and define $ e_{i} =  e_{i-1} + f_{i}$ where
    \begin{equation}
		f_i=\begin{cases}
			0, & v \in V_{10}\, ,\\
			x_i, & v \in V_{01}\, ,\\
			c_i, & v \in V_{11}\, , \\
			r_i, & v \in V_{00}\, .\\
		\end{cases}
    \end{equation}
    Note that by construction we have
    \begin{align}
        e_t = e_0 + \hat{C}_1 + \hat{R}_0.
    \end{align}
    By Lemma~\ref{lemma:V10weightedmismatchinduction}, the errors $e_i$ are all $V_{10}$-weighted, and the vector $Z_k = Z + \sum_{i=1}^kx_i$ is a mismatch vector associated with $e_i$ at each step. It follows by the $V_{10}$-weighting of $e_t$ that
    \begin{align}
        \forall v\in V_{01}:\quad \sigma_v(e_t) = 0.
    \end{align}
    Since $Z_t = 0$, it follows by the association of $Z_t$ and $e_t$ that 
    \begin{align}
        \forall v\in V_{10}:\quad \sigma_v(e_t) = \sigma_v(Z_t) = 0.
    \end{align}
    It follows that $e_t$ has zero syndrome. It remains to show that $e_t$ is a stabilizer, which we can do by bounding its weight. For each flip-set $f_i$, we have
	\begin{align}
		|f_i| \le |r_i|+|c_i| \le \Delta (\|r_i\|+\|c_i\|) \le |x_i|/\kappa\, ,\label{eq:flip_bound}
	\end{align}
    where we use the robustness of the local code in the last inequality. Using Lemma~\ref{lem:decomp}, we then have
	\begin{align}
		|Z| \ge (1-\delta)\sum_{i=1}^t|x_i| \ge (1-\delta)\kappa \sum_{i=1}^t|f_i|.
	\end{align}
    It follows that
    \begin{align}
        |e_t|_R = \left|e + \sum_{i=1}^tf_i\right|_R \le |e|_R + \sum_{i=1}^t|f_i| \le |e|_R + \frac{1}{\kappa(1-\delta)}|Z| < d.
    \end{align}
    Therefore $e_t \cong 0$ and hence $e \cong \hat{C}_1 + \hat{R}_0$. Finally, we have
    \begin{align}
        |e|_R = \left|\hat{C}_1 + \hat{R}_0\right|_R \le \left|\hat{C}_1 + \hat{R}_0\right| \le \sum_{i=1}^t|f_i| \le \frac{1}{\kappa(1-\delta)}|Z|. 
    \end{align}
\end{proof}

Now we show that, without surprise, $Z_T$ is $\delta$-decomposable. Let us define the constants
\begin{align}
    A_{\varepsilon} &= \frac{24}{\kappa\Delta(1-\varepsilon)},\quad B_{\varepsilon} = \frac{3\Delta}{\kappa(1-\varepsilon)},\quad \text{and}\quad C_\delta = \frac{1}{2^{12}}d_r^2\delta^3\kappa\Delta^{-2}\, .
\end{align}

For the purposes of the parallel decoder, it will be convenient to consider a generalized mismatch decomposition procedure which initially starts the decomposition with some weight parameter $\varepsilon$ and then switches to some other parameter $\varepsilon'$ partway through (see Lemma~\ref{lem:parallelmismatchreduction}). We state the generalized result below in Lemma~\ref{lemma:LZdecoderworks}, although we will only need the special case where $\varepsilon = \varepsilon'$ for the analysis of the sequential decoder.

\begin{lemma}
	\label{lemma:LZdecoderworks}
    Let $e$ be an error and $D$ a syndrome noise. Let $\tilde{Z} \equiv Z + Z_N$ denote the initial noisy mismatch vector assigned to $e$ and $D$. 
    
    Let $\varepsilon,\varepsilon' \in (0,1)$ be constants such that $\varepsilon' \le \varepsilon$. Consider a modified Algorithm~\ref{alg:seq_mismatch} which takes input $\tilde{Z}$ and runs with parameter $\varepsilon$ for the first $t$ steps and then switches to parameter $\varepsilon'$ until it halts at step $T \ge t$. Let $\tilde{Z}_T \equiv Z_T + Z_N$ denote the final output of this process.
    
    If $A_{\varepsilon}|e|_R+B_{\varepsilon}|D|_V\le C_\delta n$, then $Z_T$ is $\delta$-decomposable.
\end{lemma}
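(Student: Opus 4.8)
The plan is to show that the ideal part $Z_T$ of the final noisy mismatch has small mismatch norm $\|Z_T\|$, and then to feed this into Theorem~\ref{thm:LZflipexists} to certify that running Algorithm~\ref{alg:seq_mismatch} with parameter $\delta$ on $Z_T$ never gets stuck, so that it decomposes $Z_T$ completely. The only conceptual inputs are Lemma~\ref{lem:minimalZbound}, product-expansion (as used in Eq.~\eqref{eq:flip_bound}), and Theorem~\ref{thm:LZflipexists}; the actual work is in chaining the constants.

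\textbf{Step 1: bound $\|Z_T\|$.} Write $Z_T = Z + \sum_{i=1}^T x_i$, where $Z = \sum_{v\in V_1}\varepsilon_v$ is the ideal mismatch and each $x_i = c_i + r_i \subseteq Q(v_i)$ is a local $C_1^\perp$ codeword applied by the modified procedure. Since the procedure uses parameter $\varepsilon$ and then $\varepsilon'\le\varepsilon$, every step satisfies $|\tilde Z_{i-1}| - |\tilde Z_i| \ge (1-\varepsilon)|x_i|$, so telescoping over $i=1,\dots,T$ gives $\sum_i |x_i| \le |\tilde Z|/(1-\varepsilon) = (|Z| + |Z_N|)/(1-\varepsilon)$. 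By Lemma~\ref{lem:minimalZbound}, $|Z|\le 4|e|_R$ and $\|Z\|\le \frac{4}{\kappa\Delta}|e|_R$; a short observation---each $\varepsilon_v(D)$ is supported on $Q(v)$ and vanishes unless $v$ lies in the vertex support of $D$---gives $|Z_N|\le \Delta^2|D|_V$. Decomposing each $x_i$ minimally and using product-expansion ($\|c_i\| + \|r_i\| \le |x_i|/(\kappa\Delta)$, as in Eq.~\eqref{eq:flip_bound}) yields
\[
\|Z_T\| \;\le\; \|Z\| + \sum_{i=1}^T \bigl(\|c_i\| + \|r_i\|\bigr) \;\le\; \frac{4}{\kappa\Delta}|e|_R + \frac{1}{\kappa\Delta(1-\varepsilon)}\bigl(4|e|_R + \Delta^2|D|_V\bigr).
\]
The constants $A_\varepsilon$ and $B_\varepsilon$ are defined precisely so that the right-hand side is a small fraction of $A_\varepsilon|e|_R + B_\varepsilon|D|_V$, hence under the hypothesis at most a small fraction of $C_\delta n$.

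\textbf{Step 2: run the $\delta$-decomposition.} The vector $Z_T$ is a genuine mismatch (a sum of the mismatch $Z$ with local $C_1^\perp$ codewords), so Algorithm~\ref{alg:seq_mismatch} with parameter $\delta$ may be run on it. The Hamming weight of the running state $\hat Z$ strictly decreases, so the procedure terminates. If it terminated at some $\hat Z^\ast\neq 0$, then the contrapositive of Theorem~\ref{thm:LZflipexists} would force a minimal decomposition of $\hat Z^\ast$ to have active-vertex count exceeding $\frac{1}{2^{12}}d_r^2\delta^3\kappa|V_{00}| = C_\delta n$ (this equality is exactly why $C_\delta$ carries the factor $\Delta^{-2}$, since $|V_{00}| = n/\Delta^2$). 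But every intermediate state has the form $\hat Z_k = Z_T + \sum_{l\le k}x'_l$; telescoping again gives $\sum_l |x'_l| \le |Z_T|/(1-\delta)$, and product-expansion bounds the number of rows and columns peeled off by $O\bigl(|Z_T|/(\kappa\Delta)\bigr)$. Hence $\|\hat Z_k\|$ is at most $\|Z_T\|$ plus this amount, and since the active-vertex count of a minimal decomposition is $O(\|\hat Z_k\|)$---each nonzero row or column is incident to $O(1)$ vertices---it stays below $C_\delta n$ for every $k$, contradicting termination at a nonzero state. Therefore Algorithm~\ref{alg:seq_mismatch} with parameter $\delta$ terminates with $\hat Z = 0$, i.e., $Z_T$ is $\delta$-decomposable.

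\textbf{Main obstacle.} The content is arithmetic rather than conceptual: the bound on $\|Z_T\|$ from Step~1 must be a small enough fraction of $C_\delta n$ that it survives (i) the blow-up $O\bigl(|Z_T|/(\kappa\Delta)\bigr)$ incurred while decomposing with parameter $\delta$, and (ii) the passage from mismatch norm to active-vertex count in Theorem~\ref{thm:LZflipexists}. This is what dictates the precise forms of $A_\varepsilon$ and $B_\varepsilon$, together with a mild constraint that $\delta$ not exceed $\varepsilon$ (so that $\frac{1-\varepsilon}{1-\delta}\le 1$ in the telescoping estimates). The two small technical inputs---$|Z_N|\le \Delta^2|D|_V$ and the combinatorial bound on active vertices in terms of the mismatch norm---are routine but must be stated with care, since they fix the constants.
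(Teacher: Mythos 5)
Your proposal is correct and follows essentially the same route as the paper's proof: bound the mismatch norm of each intermediate state of the $\delta$-decomposition of $Z_T$ via subadditivity of $\|\cdot\|$, product-expansion, telescoping (with factor $1/(1-\varepsilon)$ for the $x_i$ and $1/(1-\delta)$ for the $y_\ell$), Lemma~\ref{lem:minimalZbound}, and $|Z_N|\le\Delta^2|D|_V$, then invoke Theorem~\ref{thm:LZflipexists} on every intermediate state. The one small inaccuracy is your parenthetical remark that $\delta\le\varepsilon$ is needed so that $(1-\varepsilon)/(1-\delta)\le1$; in fact the two telescoping factors never need to be compared, and the constraint the paper actually uses when packaging the bound into $A_\varepsilon|e|_R+B_\varepsilon|D|_V$ is $\delta<1/2$, so that $(2-\delta)/(1-\delta)\le3$.
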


\begin{proof} 
Consider the process of running the modified Algorithm~\ref{alg:seq_mismatch} with input $\tilde{Z}$ and parameter $\varepsilon$ for $t$ steps, and then switching the parameter to $\varepsilon'$ until the algorithm finally halts at step $T$. Let $\{x_1,\dots,x_t\}$ be local codewords obtained with parameter $\varepsilon$, and $\{x_{t+1}, \dots, x_T\}$ the codewords obtained with parameter $\varepsilon'$. Denoting $\td Z_i$ the mismatch vector at iteration $i$, we have
\begin{align}
|\td Z_{i-1}| - |\td Z_i| \geq \begin{cases}
    (1-\varepsilon) |x_i|, & i \in \{1, \dots, t\}\, , \\
    (1-\varepsilon') |x_i|, & i \in \{t+1, \dots, T\} \ . \label{eq:alphatobeta}
\end{cases}
\end{align}
We wish to show that $Z_T$ is $\delta$-decomposable. Suppose that Algorithm~\ref{alg:seq_mismatch} returns local codewords $\{y_1, \dots, y_K\}$ when given input $Z_T$ with parameter $\delta$. Let $S_{T+k,ij}$ denote a set of active vertices in $V_{ij}$ for the mismatch 
\begin{align}
Z_{T+k} \equiv Z_T + \sum_{\ell=1}^k y_\ell\, .
\end{align}
For all $k \in [K]$, we have
\begin{align}
    |S_{T+k,ij}| &\le \|Z_{T+k}\| \le  \|Z\| + \sum_{i=1}^T\|x_i\| + \sum_{\ell=1}^k\|y_\ell\|\, ,\label{ineq:1}
\end{align}
where the first inequality holds since there exists at least one non-zero row or column for each active vertex. By robustness of the local code, we have $\kappa\Delta\|x_i\| \le |x_i|$. Continuing the chain of inequalities, we have
\begin{align}
    \eqref{ineq:1} &\le \|Z\| + \frac{1}{\kappa\Delta}\sum_{i=1}^T|x_i| + \frac{1}{\kappa\Delta}\sum_{\ell=1}^k|y_\ell|\\
    &\le \|Z\| + \frac{1}{\kappa\Delta}\sum_{i=1}^T|x_i| + \frac{1}{\kappa\Delta(1-\delta)}|Z_T|\, ,\label{ineq:2}
\end{align}
where the first inequality follows by robustness and the second by the fact that $|Z_{T+\ell-1}| - |Z_{T+\ell-1} + y_\ell| \ge (1-\delta)|y_\ell|$. Using inequality~\eqref{eq:alphatobeta}, we get
    \begin{align}
        |Z_T| = \left|Z + \sum_{i=1}^T x_i\right| \le |Z| + \sum_{i=1}^T|x_i| \le |Z| + \frac{1}{1-\varepsilon}\left(|\tilde{Z}|-|\tilde{Z}_t|\right) + \frac{1}{1-\varepsilon'}\left(|\tilde{Z}_t|-|\tilde{Z}_T|\right).
    \end{align}
Since $\varepsilon' \le \varepsilon$, it follows that
\begin{align}
    |Z_T| \le |Z| + \frac{1}{1-\varepsilon}|\tilde{Z}|.\label{ineq:Z_T}
\end{align}
Inserting \eqref{ineq:Z_T} into \eqref{ineq:2}, we get
 \begin{align}
     |S_{T+k,ij}| &\le \|Z\| + \frac{1}{\kappa\Delta(1-\delta)}|Z| + \frac{1}{\kappa\Delta(1-\varepsilon)}\left(\frac{2-\delta}{1-\delta}\right)|\td Z|\label{ineq:4}\\
		&\le  \|Z\| + \frac{1}{\kappa\Delta(1-\delta)}|Z| + \frac{1}{\kappa\Delta(1-\varepsilon)}\left(\frac{2-\delta}{1-\delta}\right)(|Z|+|Z_N|)\\
		&\le \frac{4}{\kappa\Delta}|e|_R + \frac{4}{\kappa\Delta(1-\delta)}|e|_R + \frac{1}{\kappa\Delta(1-\varepsilon)}\left(\frac{2-\delta}{1-\delta}\right)(4|e|_R+\Delta^2|D|_V)\, ,\label{ineq:5}
 \end{align}
where the last inequality follows by applying Lemma~\ref{lem:minimalZbound}, together with the fact that $\varepsilon_v(D)$ can be non-zero only when $v$ is in the support of $D$ and hence
\begin{align}\label{eq:boundonZN}
    |Z_N| = \left|\sum_{v\in V_1}\varepsilon_v(D) \right| \le \sum_{v\in V_1}\left|\varepsilon_v(D) \right| \le |D|_V\max_{v \in V_1}|\varepsilon_v(D)| \le |D|_V\Delta^2.
\end{align}
Simplifying, we finally get
\begin{align}
    |S_{T+k,ij}| &\le \frac{4}{\kappa\Delta}\left(\frac{2-\delta}{1-\delta}\right)\left(\frac{2-\varepsilon}{1-\varepsilon}\right)|e|_R + \frac{\Delta}{\kappa(1-\varepsilon)}\left(\frac{2-\delta}{1-\delta}\right)|D|_V \\
    &\le \frac{12}{\kappa\Delta}\left(\frac{2}{1-\varepsilon}\right)|e|_R + \frac{3\Delta}{\kappa(1-\varepsilon)}|D|_V\\ 
    &\equiv A_{\varepsilon}|e|_R + B_{\varepsilon}|D|_V,
\end{align}
where we use the fact that $(2-\delta)/(1-\delta) \le 3$ for $\delta \in (0,1/2)$. It follows that if we have $A_{\varepsilon}|e|_R + B_{\varepsilon}|D|_V \le C_\delta n$, then the active vertex condition of Theorem~\ref{thm:LZflipexists} is always satisfied so that Algorithm~\ref{alg:seq_mismatch} must be able to completely decompose $Z_T$.
\end{proof}

It remains for us to check that~\eqref{eqn:cond-Zsoundness} in Lemma~\ref{lem:Zsoundness} holds.
\begin{lemma}\label{lem:Z_Tsoundness}
Assume the hypotheses of Lemma~\ref{lemma:LZdecoderworks}, and furthermore that 
\begin{align}
A_{\varepsilon}|e|_R+B_{\varepsilon}|D|_V \le \frac{d}{\Delta}.
\end{align}
Then
\begin{align}
    |Z_T|\ge (1-\delta)\kappa |e_T|_R.
    \end{align}
\end{lemma}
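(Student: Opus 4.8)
The plan is to reduce the statement to the Mismatch Correctness and Soundness lemma (Lemma~\ref{lem:Zsoundness}), applied to the noiseless residual error $e_T$ and its associated mismatch $Z_T$. By Lemma~\ref{lemma:V10weightedmismatchZ0} the initial pre-processed error $e_0$ is $V_{10}$-weighted with $Z_0 = Z$ an associated mismatch, and by Lemma~\ref{lemma:V10weightedmismatchinduction} each decoding step — which replaces $e_{i-1}$ by $e_{i-1}+f_i$ for a flip set $f_i$ of the prescribed form and $Z_{i-1}$ by $Z_{i-1}+x_i$ — preserves both properties; hence $e_T$ is $V_{10}$-weighted and $Z_T$ is an associated mismatch vector. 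Since we assume the hypotheses of Lemma~\ref{lemma:LZdecoderworks}, we also know $Z_T$ is $\delta$-decomposable. Thus the only thing left to verify is the size hypothesis~\eqref{eqn:cond-Zsoundness} of Lemma~\ref{lem:Zsoundness}, namely that $|e_T|_R + \tfrac{1}{\kappa(1-\delta)}|Z_T| < d$.

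To check this I would bound $|Z_T|$ and $|e_T|_R$ separately in terms of $|e|_R$ and $|D|_V$, reusing estimates from the proof of Lemma~\ref{lemma:LZdecoderworks}. For the mismatch, inequality~\eqref{ineq:Z_T} gives $|Z_T| \le |Z| + \tfrac{1}{1-\varepsilon}|\td Z|$; combining this with $|Z| \le 4|e|_R$ from Lemma~\ref{lem:minimalZbound}, $|Z_N| \le \Delta^2|D|_V$ from~\eqref{eq:boundonZN}, and $|\td Z| \le |Z| + |Z_N|$, one gets $|Z_T| \le 4\tfrac{2-\varepsilon}{1-\varepsilon}|e|_R + \tfrac{\Delta^2}{1-\varepsilon}|D|_V$. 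For the error, write $e_T = e + \sum_{i=1}^T f_i$; robustness of the local code gives $|f_i| \le |x_i|/\kappa$ (the same estimate as~\eqref{eq:flip_bound}), while summing the step-wise reductions~\eqref{eq:alphatobeta} and using $\varepsilon' \le \varepsilon$ shows $\sum_{i=1}^T |x_i| \le \tfrac{1}{1-\varepsilon}|\td Z|$, so $|e_T|_R \le |e|_R + \tfrac{1}{\kappa(1-\varepsilon)}\bigl(4|e|_R + \Delta^2|D|_V\bigr)$.

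Feeding both bounds into the left-hand side of~\eqref{eqn:cond-Zsoundness} and simplifying using $\kappa(1-\varepsilon) \le 1$, $2-\varepsilon < 2$, and $(1-\delta)^{-1} < 2$ (valid since $\delta < 1/2$), each term becomes a constant multiple of $A_\varepsilon|e|_R + B_\varepsilon|D|_V$: concretely $|e_T|_R \le \tfrac{\Delta}{3}\bigl(A_\varepsilon|e|_R + B_\varepsilon|D|_V\bigr)$ and $\tfrac{1}{\kappa(1-\delta)}|Z_T| < \tfrac{2\Delta}{3}\bigl(A_\varepsilon|e|_R + B_\varepsilon|D|_V\bigr)$, so their sum is strictly less than $\Delta\bigl(A_\varepsilon|e|_R + B_\varepsilon|D|_V\bigr) \le d$ by the extra hypothesis. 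Hence~\eqref{eqn:cond-Zsoundness} holds, and Lemma~\ref{lem:Zsoundness} yields $|Z_T| \ge (1-\delta)\kappa|e_T|_R$, as claimed. The only real work is this last constant-chasing step; there is no conceptual obstacle, but the one subtlety to watch is that the constants $A_\varepsilon$ and $B_\varepsilon$ were calibrated in Lemma~\ref{lemma:LZdecoderworks} to make the active-vertex count small rather than to make this inequality hold, so one must confirm — as the estimates above do, with room to spare — that those same constants are still comfortably large enough for the soundness condition.
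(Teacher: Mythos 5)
Your plan is the same as the paper's: invoke Lemma~\ref{lemma:V10weightedmismatchZ0} and Lemma~\ref{lemma:V10weightedmismatchinduction} to confirm $e_T$ is $V_{10}$-weighted with $Z_T$ an associated mismatch, use the hypotheses of Lemma~\ref{lemma:LZdecoderworks} for $\delta$-decomposability, and then verify the size condition~\eqref{eqn:cond-Zsoundness} by bounding $|Z_T|$ and $|e_T|_R$ separately in terms of $|e|_R$ and $|D|_V$ and constant-chasing against $A_\varepsilon|e|_R + B_\varepsilon|D|_V \le d/\Delta$. Your bound on $|Z_T|$ is correct, and your final arithmetic closes.

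There is one slip in the $|e_T|_R$ bound. You write $e_T = e + \sum_{i=1}^T f_i$, but the decoder's error state is defined by pre-processing first ($e_0 = e + Z^{01}$, where $Z^{01} = \sum_{v\in V_{01}}\varepsilon_v$) and only then applying the flip sets, so in fact $e_T = e + Z^{01} + \sum_{i=1}^T f_i$. The paper handles the omitted term by assuming without loss of generality that $e$ is reduced, and using minimality of the local corrections $\varepsilon_v$ together with disjointness of the $e_v$ across $V_{01}$ to get $|e_0|_R \le |e| + \sum_{v\in V_{01}}|\varepsilon_v| \le |e| + \sum_{v\in V_{01}}|e_v| = 2|e|_R$; its bound therefore begins $2|e|_R + \cdots$ rather than your $|e|_R + \cdots$. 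This does not endanger the result — the corrected coefficient still satisfies $2 + \tfrac{4}{\kappa(1-\varepsilon)} \le \tfrac{8}{\kappa(1-\varepsilon)} = \tfrac{\Delta}{3}A_\varepsilon$ whenever $\kappa(1-\varepsilon) \le 2$, which always holds — but as written your derivation skips the pre-processing term and the $|e_0|_R \le 2|e|_R$ estimate, so you should patch that step.
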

\begin{proof}
By Lemmas~\ref{lemma:V10weightedmismatchZ0} and~\ref{lemma:V10weightedmismatchinduction}, the error $e_T$ is $V_{10}$-weighted and $Z_T$ is an associated mismatch vector. Applying Lemma~\ref{lem:Zsoundness}, it suffices to prove
\begin{align}
        |e_T|_R + \frac{1}{\kappa(1-\delta)}|Z_T| < d\ .\label{ineq:distance_bound1}
\end{align}
We have
\begin{align}
    |e_T|_R &= \left|e_0 + \sum_{i=1}^Tf_i\right|_R \le  |e_0|_R + \sum_{i=1}^T|f_i|\le |e_0|_R + \frac{1}{\kappa}\sum_{i=1}^T|x_i| \le |e_0|_R + \frac{1}{\kappa(1-\varepsilon)}|\tilde{Z}|,
\end{align}
where the second inequality follows from~\eqref{eq:flip_bound}. We then get
\begin{align}
    |e_T|_R + \frac{1}{\kappa(1-\delta)}|Z_T| &\le |e_0|_R + \frac{1}{\kappa(1-\varepsilon)}|\tilde{Z}| + \frac{1}{\kappa(1-\delta)}|Z_T|\\
    &\le |e_0|_R + \frac{1}{\kappa(1-\varepsilon)}|\tilde{Z}| + \frac{1}{\kappa(1-\delta)}\left(|Z| + \frac{1}{1-\varepsilon}|\tilde{Z}|\right)\\
    &=|e_0|_R + \frac{1}{\kappa(1-\delta)}|Z| + \frac{1}{\kappa(1-\varepsilon)}\left(1+\frac{1}{1-\delta}\right)|\tilde{Z}|\, ,
\end{align}
where we use~\eqref{ineq:Z_T} in the second inequality. Next, we may assume without loss of generality that $e$ is a reduced error. Then we have
\begin{equation}
		|e_0|_R = |\tilde{e}_0 + Z_N^{01}|_R = \big|e + \sum_{v\in V_{01}} \ep_v\big|_R \le |e| + \sum_{v\in V_{01}}|e_v| = 2|e| = 2|e|_R\, ,
\end{equation}
where we use the fact that $\varepsilon_v$ are minimum weight corrections in the inequality above and the fact that $e_u\cap e_{v} = \emptyset$ for distinct vertices $u, v \in V_{01}$ in the second last equality. Following the same steps as from \eqref{ineq:4} to \eqref{ineq:5}, we therefore get
\begin{align}
    |e_T|_R + \frac{1}{\kappa(1-\delta)}|Z_T| &\le 2|e|_R + \frac{4}{\kappa(1-\delta)}|e|_R + \frac{1}{\kappa(1-\varepsilon)}\left(1+\frac{1}{1-\delta}\right)(4|e|_R+\Delta^2|D|_V)\\
    &\le \left[2 + \frac{4}{\kappa(1-\varepsilon)}\left(1 + \frac{2-\varepsilon}{1-\delta}\right)\right]|e|_R + \Delta B_{\varepsilon} |D|_V.
\end{align}
We can simplify the inequality above by noting that $\kappa \le d_r \le 1$~\cite{KPTwoSided}. Then we have
\begin{align}
    2 + \frac{4}{\kappa(1-\varepsilon)}\left(1 + \frac{2-\varepsilon}{1-\delta}\right) &= \frac{1}{\kappa}\left[2\kappa + \frac{4}{1-\varepsilon}\left(1 + \frac{2-\varepsilon}{1-\delta}\right)\right]\\
    &\le \frac{1}{\kappa}\left[4 + \frac{4}{1-\varepsilon}\left(1 + \frac{2-\varepsilon}{1-\delta}\right)\right]\\
    &= \frac{4}{\kappa}\left(\frac{2-\delta}{1-\delta}\right)\left(\frac{2-\varepsilon}{1-\varepsilon}\right)\\
    &\le \frac{24}{\kappa(1-\varepsilon)} =\Delta A_{\varepsilon}.
\end{align}
Therefore it suffices to require
\begin{align}
A_{\varepsilon}|e|_R + B_{\varepsilon}|D|_V \le \frac{d}{\Delta}
\end{align}
in order that inequality~\eqref{ineq:distance_bound1} holds.
\end{proof}

Combining the inequalities, we obtain the main result for sequential decoder.
\begin{theorem}[Main Theorem for the Sequential Decoder]
	\label{thm:main}
	Let $e$ be an error and let $D$ be a syndrome error. Suppose that
    \begin{align}
A_{\varepsilon}|e|_R+B_{\varepsilon}|D|_V &\le \min\left(C_\delta n,d/\Delta\right)\, .
    \end{align}
    Let $\tilde{\sigma} = \sigma(e) + D$. Then Algorithm~\ref{alg:seq_decoder} with input $\tilde{\sigma}$ and parameter $\varepsilon$ will output a correction $\hat{f}$ satisfying 
    \begin{align}
    |e+\hat{f}|_R\le \left(1+\frac{2c_2}{\kappa c_1}\right)\Delta^2|D|_V.
    \end{align}
\end{theorem}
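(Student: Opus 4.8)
\medskip
The plan is to chain the lemmas of this subsection together, tracking how the noiseless mismatch $Z_T$ at the decoder's termination is controlled by the syndrome noise, and then translating this into a bound on the residual error. Throughout, the single hypothesis $A_{\varepsilon}|e|_R + B_{\varepsilon}|D|_V \le \min(C_\delta n, d/\Delta)$ is precisely what feeds all three of Lemmas~\ref{lemma:LZdecoderworks},~\ref{lem:flipexists} (together with the standing assumptions $\varepsilon > 2\delta$ and $\delta < 1/2$, which make $c_1 > 0$), and~\ref{lem:Z_Tsoundness}, so the first thing I would do is record these three premises.

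First I would apply Lemma~\ref{lemma:LZdecoderworks} with $\varepsilon' = \varepsilon$: since $A_{\varepsilon}|e|_R + B_{\varepsilon}|D|_V \le C_\delta n$, the noiseless mismatch $Z_T$ associated to the state of Algorithm~\ref{alg:seq_decoder} at its halting step $T$ is $\delta$-decomposable. Let $\mc F$ be the decomposition of $Z_T$ returned by Algorithm~\ref{alg:seq_mismatch} run with parameter $\delta$. The key observation is that Algorithm~\ref{alg:seq_decoder} halting at step $T$ with parameter $\varepsilon$ says exactly that no local $C_1^\perp$ codeword $x$ on any vertex view $Q(v)$ satisfies $|\td Z_T| - |\td Z_T + x| \ge (1-\varepsilon)|x|$; since every codeword in $\mc F$ is such a local codeword, the set $\mc F^*$ of Lemma~\ref{lem:flipexists} (applied with $Z = Z_T$, $Z_N$ the mismatch noise, and $\td Z = \td Z_T = Z_T + Z_N$) is empty. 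Lemma~\ref{lem:flipexists} then gives $c_1|Z_T| \le c_2|Z_N|$, and combining with $|Z_N| \le \Delta^2|D|_V$ from~\eqref{eq:boundonZN} yields $|Z_T| \le \tfrac{c_2}{c_1}\Delta^2|D|_V$.

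Next I would invoke Lemma~\ref{lem:Z_Tsoundness}, whose extra hypothesis $A_{\varepsilon}|e|_R + B_{\varepsilon}|D|_V \le d/\Delta$ is also supplied, to get $|Z_T| \ge (1-\delta)\kappa|e_T|_R$ for the noiseless residual error $e_T$. Hence $|e_T|_R \le \tfrac{1}{(1-\delta)\kappa}|Z_T| \le \tfrac{c_2}{(1-\delta)\kappa c_1}\Delta^2|D|_V \le \tfrac{2c_2}{\kappa c_1}\Delta^2|D|_V$, using $\delta < 1/2$. Finally I would pass from $e_T$ to the actual residual: by construction $e + \hat{f}$ equals the tracked noisy residual $\td e_T = e_T + Z_N^{01}$, and $|Z_N^{01}| \le \Delta^2|D|_V$ by the same argument as in~\eqref{eq:boundonZN}, so the triangle inequality gives $|e+\hat{f}|_R \le |e_T|_R + |Z_N^{01}| \le \bigl(1 + \tfrac{2c_2}{\kappa c_1}\bigr)\Delta^2|D|_V$, the claimed bound.

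The main obstacle is really the bookkeeping in the last two steps rather than any new estimate. One must check carefully that $e + \hat f = \td e_T$ — this is where the pre-processing-versus-post-processing remark matters, since the decomposition updates $(\hat C_0,\hat C_1,\hat R_0,\hat R_1)$ of Algorithm~\ref{alg:seq_mismatch} must be seen to reassemble exactly into $\sum_i f_i$ — and that the halting configuration genuinely certifies $\mc F^*=\emptyset$ (i.e., that the codewords the $\delta$-decomposition of $Z_T$ produces are among the moves the $\varepsilon$-decoder has already rejected). Once those identifications are in place, the theorem is a direct concatenation of Lemmas~\ref{lemma:LZdecoderworks},~\ref{lem:flipexists}, and~\ref{lem:Z_Tsoundness}.
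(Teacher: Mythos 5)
Your proposal is correct and follows essentially the same chain as the paper's proof: Lemma~\ref{lemma:LZdecoderworks} (with $\varepsilon'=\varepsilon$) to get $\delta$-decomposability of $Z_T$, the termination condition of Algorithm~\ref{alg:seq_decoder} to certify $\mc F^*=\emptyset$ in Lemma~\ref{lem:flipexists}, Lemma~\ref{lem:Z_Tsoundness} to convert $|Z_T|$ into a bound on $|e_T|_R$, and the identity $e+\hat f=\td e_T=e_T+Z_N^{01}$ with $|Z_N^{01}|\le\Delta^2|D|_V$ to close. The bookkeeping points you flag (that $\hat C_1+\hat R_0 = \sum_i f_i$ and that the $\delta$-decomposition codewords are among the moves rejected by the $\varepsilon$-decoder) are exactly the checks the paper relies on as well.
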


\begin{proof}
	Suppose that Algorithm~\ref{alg:seq_decoder} with parameter $\varepsilon$ terminates after $T$ steps with output $\hat{f}$. Let $Z_T$ denote the state of the mismatch after the algorithm terminates. By Lemma~\ref{lemma:LZdecoderworks}, $Z_T$ is $\delta$-decomposable. This allows us to apply Lemma~\ref{lem:flipexists}, giving
	\begin{equation}
		0 \ge c_1|Z_T| - c_2|Z_N|\, ,\label{ineq:6}
	\end{equation}
    since the set $\mathcal{F}^*$ must be empty when Algorithm~\ref{alg:seq_decoder} with parameter $\varepsilon$ terminates. By Lemma~\ref{lem:Z_Tsoundness}, we get
	\begin{equation}
		|Z_T|\ge (1-\delta)\kappa |e_T|_R\, .\label{ineq:7}
	\end{equation}
	But we know
	\begin{align}
		|e_T|_R &= |\tilde{e}_T + Z_N^{01}|_R \ge |\tilde{e}_T|_R - |Z_N^{01}|_R \ge |\tilde{e}_T|_R - \Delta^2|D|_V\, .\label{ineq:8}
	\end{align}
	Combining the inequalities~\eqref{ineq:6}, \eqref{ineq:7}, and \eqref{ineq:8} finally gives
	\begin{align}
		|e + \hat{f}|_R &= |\tilde{e}_T|_R\\ 
        &\le |e_T|_R + \Delta^2|D|_V\\
		&\le \frac{1}{(1-\delta)\kappa}|Z_T| + \Delta^2|D|_V\\
		&\le \frac{c_2}{c_1(1-\delta)\kappa}|Z_N| + \Delta^2|D|_V\\
		&\le \left(1+\frac{c_2}{c_1(1-\delta)\kappa}\right)\Delta^2|D|_V\, . \label{eq:seqresidualerrorbound}
	\end{align}
Note that the restriction $\delta < 1/2$, as required by Lemma~\ref{lem:flipexists}, implies that $(1-\delta)^{-1} \le 2$.
\end{proof}

This completes our proof of the main theorem for the sequential decoder.

\subsection{Parallel Decoder}

The key idea in analyzing the parallel decoder is to compare the performance of one iteration of parallel decoding to that of a full execution of the sequential decoder. Our convention in this section will be that superscript indices will denote the parallel decoding iteration (always with parameter $1/2$), while subscript indices will denote the sequential decoding iteration. For example, $\tilde{Z}^{(k)}_j$ denotes the mismatch obtained after $k$ iterations of parallel decoding and then $j$ iterations of sequential decoding.

For convenience, we will fix some parameters in this section. Throughout, we will take $\varepsilon = 1/2$ for the parallel decoder. We will write $A = A_{\varepsilon=1/2}$ and $B = B_{\varepsilon = 1/2}$.

\begin{lemma}\label{lem:parallelmismatchreduction}
Let $\varepsilon' \in (0, 1/6)$. Let $\tilde{Z}^{(k)}$ denote the current state of the (noisy) mismatch vector. Let $\tilde{Z}_T^{(k)}$ denote the residual mismatch after running the sequential decoder with input $\tilde{Z}^{(k)}$ and parameter $\varepsilon'$. Then after one iteration of parallel decoding, the weight of the mismatch is reduced by at least
\begin{align}
    |\tilde{Z}^{(k)}| - |\tilde{Z}^{(k+1)}| \ge \frac{1}{16}(1-6\varepsilon')\left(|\tilde{Z}^{(k)}| - |\tilde{Z}_T^{(k)}|\right).
\end{align}
\end{lemma}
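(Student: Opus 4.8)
The plan is to compare a single parallel iteration with the sequential decoder's progress on the same mismatch $\tilde Z^{(k)}$, exploiting the fact that the parallel algorithm processes all four vertex classes $V_{ij}$ in sequence within one iteration. First I would run the sequential decoder (with parameter $\varepsilon'$) on $\tilde Z^{(k)}$, obtaining a decomposition $\{x_1,\dots,x_T\}$ of local codewords and the residual $\tilde Z^{(k)}_T$; by the telescoping estimate $|\tilde Z^{(k)}| - |\tilde Z^{(k)}_T| \ge (1-\varepsilon')\sum_i |x_i|$, so it suffices to show that one parallel iteration reduces the mismatch weight by at least a constant fraction of $\sum_i |x_i|$. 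Each $x_i$ lives on some $Q(v_i)$ with $v_i \in V_{i_0 j_0}$ and satisfies $|\tilde Z^{(k)}| - |\tilde Z^{(k)} + x_i| \ge (1-\varepsilon')|x_i| \ge \tfrac12|x_i|$ at the moment it was picked by the sequential run; the subtlety is that the parallel iteration sees the \emph{original} $\tilde Z^{(k)}$, not the partially-decoded intermediate states.

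The key step is to argue that, for the \emph{original} $\tilde Z^{(k)}$, each class $V_{ij}$ still contains enough good local flips. Concretely, for a fixed pair $(i,j)$ let $\mathcal F_{ij}$ be the sub-collection of sequential codewords supported on $V_{ij}$ vertices. One shows that $|\tilde Z^{(k)}| - |\tilde Z^{(k)} + \sum_{x_i \in \mathcal F_{ij}} x_i|$ is comparable to $\sum_{x_i \in \mathcal F_{ij}}|x_i|$, using that the $x_i$ in a fixed class live on disjoint local views (their overlaps with $\tilde Z^{(k)}$ are essentially additive up to cross-terms), together with the $\varepsilon'$-progress guarantee and the bound $\varepsilon' < 1/6$. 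This is the analogue of the argument in Lemma~\ref{lem:flipexists}: the $1-6\varepsilon'$ factor in the statement signals that one loses a factor roughly $1-2\varepsilon'$ from converting intermediate progress to progress against the fixed $\tilde Z^{(k)}$, and another factor from splitting into the four classes. Then, since the parallel iteration greedily picks, at each active vertex in $V_{ij}$, the \emph{maximal-weight} local codeword achieving the $|x_v|/2$ progress threshold, its total reduction over class $(i,j)$ dominates (up to a constant) the reduction achievable by the sequential $x_i$'s restricted to that class — here one must also handle the fact that flips in earlier classes within the same iteration change $\hat Z$, but since flips in distinct classes touch faces incident to distinct vertex types, the interference only costs another constant factor. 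Summing the four classes and dividing by the number of classes and the accumulated constants gives the $\tfrac{1}{16}(1-6\varepsilon')$ factor.

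The main obstacle will be the bookkeeping of cross-terms: within one parallel iteration the mismatch $\hat Z$ is updated after each of the four $(i,j)$ loops (and, within a loop, the parallel-for-each vertices act on disjoint local views so they commute), so one must control how a flip chosen against $\tilde Z^{(k)}$ behaves against the already-modified $\hat Z$. The clean way to do this is to lower-bound the per-class reduction directly in terms of $|\tilde Z^{(k)} \cap x_v|$ for the maximal $x_v$ at each active vertex, show this is at least $\tfrac12|x_v|$ whenever the class contains a useful sequential codeword at that vertex, and then absorb all interference into the single constant $1/16$ rather than tracking it loop-by-loop. I expect the rest — the telescoping bound, disjointness of same-class local views, and the $\varepsilon'<1/6$ arithmetic — to be routine given Lemmas~\ref{lem:decomp} and~\ref{lem:flipexists} and Theorem~\ref{thm:LZflipexists}.
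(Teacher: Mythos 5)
Your high-level strategy --- comparing one parallel iteration against a sequential run on the same $\tilde{Z}^{(k)}$ and lower-bounding the parallel reduction via the sequential decoder's progress --- matches the paper's. But the two concrete mechanisms you invoke for the key step are not correct as stated. The claim that each class $V_{ij}$ "still contains enough good local flips" against the \emph{original} $\tilde{Z}^{(k)}$ does not follow from disjointness of same-class local views plus "additivity up to cross-terms." Disjointness within a class makes overlaps with $\tilde{Z}^{(k)}$ additive, but says nothing about whether each individual $x_i$ has large overlap with $\tilde{Z}^{(k)}$: $x_i$ is $\varepsilon'$-useful against the \emph{intermediate} state $\tilde{Z}_{i-1}$, and the qubits it removes may have been created by earlier flips $x_1,\dots,x_{i-1}$ from \emph{other} classes whose local views intersect $Q(v_i)$. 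So $|x_i\cap\tilde{Z}^{(k)}|$ can be small even when $|x_i\cap\tilde{Z}_{i-1}|$ is near $|x_i|$. The paper handles this with a cross-class (not class-by-class) bookkeeping: define the pairwise disjoint sets $x_i' = (\tilde{Z}\cap x_i)\setminus\bigcup_{j<i}x_j$, split indices into "good" ($|x_i'|\ge(1-\tfrac32\varepsilon')|x_i|$) and "bad," and control the discrepancy $A_j = \tilde{Z}_j\setminus\tilde{Z}_j'$ inductively to conclude $\sum_{j\in B}|x_j|\le\sum_{j\in G}|x_j|$ and hence $|\tilde{Z}| - |\tilde{Z}_T|\le 2\sum_{j\in G}|x_j'|$. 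There is no "divide by the number of classes" step, and the $1/16$ and $1-6\varepsilon'$ arise differently than you conjecture.

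Second, your argument that the parallel iteration's greedy maximal-weight choice "dominates up to a constant" the sequential codewords is underspecified; "absorb all interference into the single constant $1/16$" is a promise, not a proof. The parallel decoder at vertex $v$ decides against the already-modified $\overline{Z}$, so one must explain why $x_j$ is still useful there. The paper's mechanism is to introduce the execution support $u$ (the union of all local codewords flipped in the current iteration) and prove by contradiction that for every good $j$, $|x_j'\setminus u| < \tfrac34|x_j|$: if not, the untouched set $x_j'\setminus u$ survives into both $\overline{Z}$ and $\overline{Z}+z_v$, from which one derives that $z_v + x_j$ would satisfy the $\tfrac12$-threshold with $|z_v+x_j|>|z_v|$, contradicting the decoder's maximal choice. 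Disjointness of the $x_j'$ then gives $|u| > (\tfrac14-\tfrac32\varepsilon')\sum_{j\in G}|x_j|$, and $|\tilde{Z}^{(k)}|-|\tilde{Z}^{(k+1)}|\ge\tfrac12|u|$ closes the bound. Without the execution-support set $u$ and the maximality contradiction, your claim that the parallel reduction dominates the sequential one is unsupported, and the proof as proposed does not go through.
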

\begin{proof}

The proof closely follows the ideas of Lemma 18 in Ref.~\cite{leverrier2022parallel}. For ease of notation we write $\tilde{Z}^{(k)}$ as $\tilde{Z}$ throughout this proof. Suppose that Algorithm~\ref{alg:seq_mismatch} runs with input $\tilde{Z}$ and parameter $\varepsilon'$ returns local codewords $\{x_i\}_{i=1}^T$ and residual mismatch $\tilde{Z}_T$. Therefore we can write
\begin{align}
    \tilde{Z} = \sum_{i=1}^T x_i + \tilde{Z}_T\,.
\end{align}

We will analyze the overlap among the sets $x_i$, and argue that the parallel decoder's output will intersect non-trivially with the sequential decoder's output. Let us define the sets
\begin{align}
    x_i' = \left(\tilde{Z}\cap x_i\right)\setminus \bigcup_{j<i}x_j\,.
\end{align}
Note that the sets $x_i'$ are disjoint, and that they satisfy
\begin{align}
    \bigcup_{i=1}^T x_i' = \tilde{Z} \cap \bigcup_{i=1}^T x_i \supseteq \tilde{Z} \cap \sum_{i=1}^Tx_i\, ,
\end{align}
which implies
\begin{align}
    \left|\tilde{Z}\setminus \bigcup_{i=1}^Tx_i'\right| \le \left|\tilde{Z}\setminus \left(\tilde{Z} \cap \sum_{i=1}^T x_i\right)\right|=\left|\tilde{Z}\setminus \sum_{i=1}^T x_i\right|\le \left|\tilde{Z} + \sum_{i=1}^Tx_i\right| = |\tilde{Z}_T|\,.
\end{align}
Next, we define the set of ``good'' indices $G \subseteq [T]$ such that $i \in G$ if and only if
\begin{align}
    |x_i'| \ge \left(1 - \frac{3}{2}\varepsilon'\right)|x_i|\,.
\end{align}
Let $B = [T]\setminus G$ denote the remaining set of ``bad'' indices. For each $j \in [T]$, let us define
\begin{align}
    \td Z'_j = \tilde{Z}\setminus \bigcup_{i\le j} x_i' = \td Z'_{j-1}\setminus x_j'\,.
\end{align}
We wish to bound the difference between $\tilde{Z}_j$ and $\tilde{Z}_j'$. Let us denote this difference by
\begin{align}
A_j = \tilde{Z}_j\setminus \td Z_j' \,. 
\end{align}
To bound the size of $A_j$, we examine how the size of $\tilde{Z}$ changes as we update it by adding codewords $x_j$. Since the $x_j$'s were obtained by running the decoder with parameter $\varepsilon'$, it follows that
\begin{align}
    |\tilde{Z}_{j-1}\cap x_j| \ge (1-\varepsilon'/2)|x_j|\,.
\end{align}

\begin{figure}
{\centering

\begin{tikzpicture}[x=0.9pt,y=0.9pt,yscale=-1,xscale=1]

\draw  [color={rgb, 255:red, 74; green, 144; blue, 226 }  ,draw opacity=1, very thick] (186.5,129.5) .. controls (186.5,85.04) and (222.54,49) .. (267,49) .. controls (311.46,49) and (347.5,85.04) .. (347.5,129.5) .. controls (347.5,173.96) and (311.46,210) .. (267,210) .. controls (222.54,210) and (186.5,173.96) .. (186.5,129.5) -- cycle ;
\draw  [color={rgb, 255:red, 208; green, 2; blue, 27 }  ,draw opacity=1, very thick] (282,129.5) .. controls (282,93.88) and (310.88,65) .. (346.5,65) .. controls (382.12,65) and (411,93.88) .. (411,129.5) .. controls (411,165.12) and (382.12,194) .. (346.5,194) .. controls (310.88,194) and (282,165.12) .. (282,129.5) -- cycle ;
\draw[very thick]    (186.5,129.5) -- (347.5,129.5) ;

\draw (164.4,55.6) node [anchor=north west][inner sep=0.75pt]    {$\textcolor[rgb]{0.29,0.56,0.89}{\tilde{Z}}\textcolor[rgb]{0.29,0.56,0.89}{_{j-1}}$};
\draw (396.4,60.6) node [anchor=north west][inner sep=0.75pt]    {$\textcolor[rgb]{0.82,0.01,0.11}{x}\textcolor[rgb]{0.82,0.01,0.11}{_{j}}$};
\draw (252,158) node [anchor=north][inner sep=0.75pt]   [align=left] {$\mathrm{II}$};
\draw (249,84) node [anchor=north west][inner sep=0.75pt]   [align=left] {$\mathrm{I}$};
\draw (308,95) node [anchor=north west][inner sep=0.75pt]   [align=left] {$\mathrm{III}$};
\draw (308,147) node [anchor=north west][inner sep=0.75pt]   [align=left] {$\mathrm{IV}$};
\draw (367,122) node [anchor=north west][inner sep=0.75pt]   [align=left] {$\mathrm{V}$};
\end{tikzpicture}\par
}

\caption{Reference for sets involved in proof of Lemma~\ref{lem:parallelmismatchreduction}. The regions indicated are: $\mathrm{II}\cup \mathrm{IV} = \tilde{Z}'_{j-1}$,  $\mathrm{I}\cup \mathrm{III} = A_{j-1}$, $\mathrm{II} = \tilde{Z}'_{j}$, $\mathrm{IV} = x_j'$, $\mathrm{III}\cup \mathrm{IV} = \tilde{Z}_{j-1}\cap x_j$, $\mathrm{I}\cup \mathrm{V} = A_{j}$, and $\mathrm{I}\cup \mathrm{II}\cup \mathrm{V} = \tilde{Z}_{j}.$}~\label{fig:sets}
\end{figure}
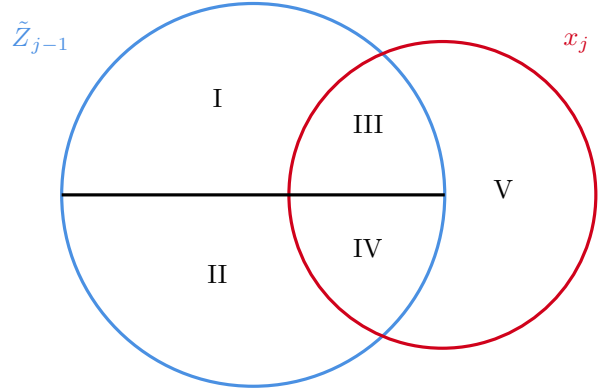

Referring to Figure~\ref{fig:sets}, we have $A_j\setminus A_{j-1} = x_j\setminus \tilde{Z}_{j-1}$, and hence
\begin{align}
    |A_j\setminus A_{j-1}| = |x_j\setminus \tilde{Z}_{j-1}| = |x_j| - |x_j \cap \tilde{Z}_{j-1}| \le |x_j| - \left(1-\frac{\varepsilon'}{2}\right)|x_j| = \frac{\varepsilon'}{2}|x_j|\,.\label{ineq:A_bound1}
\end{align}
We also have
\begin{align}
    (A_{j-1}\setminus A_j) \sqcup x_j' = \tilde{Z}_{j-1} \cap x_j\, ,
\end{align}
corresponding to the unions of regions $\mathrm{III}$ and $\mathrm{IV}$ in Figure~\ref{fig:sets}. If $j \in B$ is a ``bad'' index, then we have
\begin{align}
|A_{j-1}\setminus A_j|+\left(1-\frac{3}{2}\varepsilon'\right)|x_j| > |A_{j-1}\setminus A_j| + |x_j'| = |\tilde{Z}_{j-1} \cap x_j| \ge \left(1-\frac{\varepsilon'}{2}\right)|x_j|\,,
\end{align}
where the first inequality follows from the fact that $j\in B$ and the last from the decoding condition with parameter $\varepsilon'$. It follows that 
\begin{align}
    |A_{j-1}\setminus A_j| \ge \varepsilon'|x_j|,\label{ineq:A_bound2}
\end{align}
and hence
\begin{align}
    |A_{j-1}|-|A_j| = |A_{j-1}\setminus A_j| - |A_{j}\setminus A_{j-1}| \ge \varepsilon'|x_j| - \frac{\varepsilon'}{2}|x_j| = \frac{\varepsilon'}{2}|x_j|\,,
\end{align}
where we use inequalities~\eqref{ineq:A_bound1} and~\eqref{ineq:A_bound2} above. It follows that we have
\begin{equation}
\begin{cases}
    |A_j| - |A_{j-1}| \le \ \ \varepsilon'|x_j|/2, & \forall j\in G\,, \\
    |A_j| - |A_{j-1}| \le -\varepsilon'|x_j|/2, &\forall j \in B\,.
\end{cases}
\end{equation}
Summing the inequalities above, we get
\begin{align}
    0 \le |A_T| - |A_0| = \sum_{j=1}^T\left(|A_j| - |A_{j-1}|\right) \le \frac{\varepsilon'}{2}\left(\sum_{j\in G}|x_j| - \sum_{j \in B}|x_j|\right)\,,
\end{align}
where $|A_0| = 0$ by definition. Therefore
\begin{align}
    \sum_{j \in B}|x_j| \le \sum_{j\in G}|x_j|\,.
\end{align}
We have
\begin{align}
\sum_{j \in B}|x_j'| \le \left(1-\frac{3}{2}\varepsilon'\right)\sum_{j\in B}|x_j| \le \left(1-\frac{3}{2}\varepsilon'\right)\sum_{j\in G}|x_j| \le \sum_{j\in G}|x_j'|\,,
\end{align}
and hence
\begin{align}
|\tilde{Z}| - |\tilde{Z}_T| \le  \left|\bigcup_{j=1}^Tx_j'\right|= \sum_{j=1}^T|x_j'| = \sum_{j\in B}|x_j'| + \sum_{j\in G}|x_j'| \le 2\sum_{j\in G}|x_j'|\,.\label{ineq:claim1}
\end{align}
Now, consider the iteration of parallel decoding beginning with input $\tilde{Z}\equiv \tilde{Z}^{(k)}$. Let $u \in \mathbb{F}_2^Q$ denote the set of all qubits which have been acted on by the parallel decoder, i.e.,
\begin{align}
u = \bigcup_{z_v\in \mathcal{F}}z_v\,,
\end{align}
where $\mathcal{F} = \{z_v\}$ is the collection of all local codewords found by the decoder in the current iteration. We now prove that for all $j \in G$, we have $|x_j \cap u| \ge c|x_j|$ for some constant $c > 0$.

Fix some $x_j$ and let $v$ denote its anchoring vertex. Let us write $y = |x_j'\cap u|$. First, let us show that we must have 
\begin{align}
    |x_j'\setminus u| < \frac{3}{4}|x_j|\,.
\end{align}
Suppose otherwise. Then let $z_v$ denote the codeword (possibly zero) that the parallel decoder assigns to vertex $v$. Note that we have $z_v \subseteq u$ by definition, as well as
\begin{align}
    |\overline{Z}| - |\overline{Z} + z_v| \ge \frac{1}{2}|z_v|\,,\label{ineq:parallel_decoder_condition}
\end{align}
where $\overline{Z}$ denotes the current state of the noisy mismatch in the parallel decoder. By definition of $u$ as the execution support of the decoder, the qubits of $x_j'\setminus u$ are untouched by the algorithm. Therefore, since $x_j' \subseteq \tilde{Z}$, it follows that $x_j'\setminus u \subseteq \overline{Z}$ and $x_j'\setminus u \subseteq \overline{Z} + z_v$. Therefore we have
\begin{align}
    x_j' \setminus u = x_j'\setminus u \cap (\overline{Z} + z_v) \subseteq x_j \cap (\overline{Z} + z_v)\,.
\end{align}
The addition of $x_j$ to $\overline{Z}+z_v$ therefore removes at least $|x_j'\setminus u| \ge \frac{3}{4}|x_j|$ qubits from $\overline{Z}$. Consequently, the addition of $x_j$ to $\overline{Z}+z_v$ can add at most $|x_j|/4$ qubits, so that
\begin{align}
    |\overline{Z}+z_v| - |\overline{Z}+z_v+x_j| \ge \frac{1}{2}|x_j|\,.
\end{align}
Adding this inequality to~\eqref{ineq:parallel_decoder_condition}, we get
\begin{align}
    |\overline{Z}| - |\overline{Z} + z_v + x_j| \ge \frac{1}{2}(|x_j|+|z_v|) \ge \frac{1}{2}|z_v + x_j|\,.
\end{align}
Similar to the argument above, the addition of $x_j$ to $z_v$ adds at least $|x_j\setminus z_v| \ge |x_j'\setminus u| \ge 3|x_j|/4$ qubits, and hence removes at most $|x_j|/4$ qubits. Therefore
\begin{align}
    |z_v + x_j| - |z_v| \ge \frac{1}{2}|x_j|\,.
\end{align}
Since $|z_v+x_j| > |z_v|$, this contradicts the assumption that $z_v$ is the local codeword selected by the decoder, since the decoder will choose to maximize the Hamming weight of its local codewords. It follows that we've established the inequality
\begin{align}
    |x_j'\setminus u|  < \frac{3}{4}|x_j|\,.
\end{align}
This then implies that for all $j\in G$, we have
\begin{align}
    |x_j'\cap u|  >  |x_j'| - \frac{3}{4}|x_j| \ge  \left(1-\frac{3}{2}\varepsilon'\right)|x_j| - \frac{3}{4}|x_j| = \left(\frac{1}{4} - \frac{3}{2}\varepsilon'\right)|x_j|\,.
\end{align}
Since the $x_j'$ are disjoint, we get
\begin{align}
    |u| \ge \sum_{j \in G}|x_j' \cap u| > \left(\frac{1}{4}-\frac{3}{2}\varepsilon'\right)\sum_{j\in G}|x_j| \ge \left(\frac{1}{4}-\frac{3}{2}\varepsilon'\right)\sum_{j\in G}|x_j'| \ge \frac{1}{8}\left(1 - 6\varepsilon'\right)\left(|\tilde{Z}| - |\tilde{Z}_T|\right)\,,
\end{align}
where the last inequality follows from~\eqref{ineq:claim1}. Finally, by the decoding criterion~\eqref{ineq:parallel_decoder_condition}, the total decrease in mismatch weight is
\begin{align}
    |\tilde{Z}^{(k)}| - |\tilde{Z}^{(k+1)}| \ge \frac{1}{2}\sum_{z_v \in \mathcal{F}}|z_v| \ge \frac{1}{2}|u| \ge \frac{1}{16}(1-6\varepsilon')\left(|\tilde{Z}^{(k)}| - |\tilde{Z}_T^{(k)}|\right),
\end{align}
where we restore the superscript $(k)$ in this last inequality for clarity.
\end{proof}

Now, as in the sequential case, we bound the weight of the residual mismatch by the weight of measurement noise. 
\begin{lemma}\label{lem:parallelresidualmismatchbound}
Let $e$ be an error and $D$ be a syndrome noise. Let $\tilde{Z}$ be the initial mismatch vector assigned to $e$ and $D$. Let $\tilde{Z}^{(k)}$ denote the state of the mismatch vector after $k$ iterations of parallel decoding. Let $\tilde{Z}^{(k)}_{T}$ denote the residual mismatch vector obtained by running the sequential decoder with input $\tilde{Z}^{(k)}$ and parameter $\varepsilon'$.

Suppose that $A|e|_R + B|D|_V \le C_\delta n$. Then for all $k \in \mathbb{N}^+$ we have
\begin{align}
    |\tilde{Z}_T^{(k)}| \le \left(1+\frac{2(1-\delta)}{\varepsilon' - 2\delta}\right)\Delta^2|D|_V\equiv (1+\zeta)\Delta^2|D|_V\, .\label{eq:beta_bound}
\end{align}
\end{lemma}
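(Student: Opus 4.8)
The plan is to reduce the claim entirely to the sequential-decoder machinery of Lemmas~\ref{lemma:LZdecoderworks} and~\ref{lem:flipexists}, by observing that $k$ rounds of parallel decoding form a legitimate initial segment of a run of Algorithm~\ref{alg:seq_mismatch} with parameter $\varepsilon=1/2$. First I would note that within one parallel iteration the algorithm processes the four vertex classes $V_{ij}$ in some fixed order, and within each class it updates pairwise-disjoint local views $Q(v)$ (every face carries exactly one vertex of each of the four types). Hence those updates commute, and each of them satisfies $|\hat Z| - |\hat Z + x_v| \ge |x_v|/2$ against the running mismatch $\hat Z$ — which is precisely the sequential acceptance condition with $\varepsilon = 1/2$. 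Since the parallel decoder's particular choices (maximal-weight local codeword, minimal-weight column/row split) are all permissible choices for Algorithm~\ref{alg:seq_mismatch}, any serialization of the $k$ parallel iterations is a valid initial segment of a sequential run on $\tilde Z$ with parameter $1/2$, ending in state $\hat Z = \tilde Z^{(k)}$.

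Continuing this serialized run by switching the parameter to $\varepsilon'$ and iterating until it halts produces, by definition, the state $\tilde Z_T^{(k)} = Z_T^{(k)} + Z_N$. This is exactly the setting of Lemma~\ref{lemma:LZdecoderworks} with $\varepsilon = 1/2$, the first $t$ steps being the serialized parallel iterations (here $\varepsilon' < 1/2$ as required). Since the hypothesis $A|e|_R + B|D|_V \le C_\delta n$ is, by the conventions $A = A_{\varepsilon=1/2}$, $B = B_{\varepsilon=1/2}$, precisely $A_{\varepsilon}|e|_R + B_{\varepsilon}|D|_V \le C_\delta n$, Lemma~\ref{lemma:LZdecoderworks} gives that the noiseless part $Z_T^{(k)}$ is $\delta$-decomposable. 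Now I would invoke Lemma~\ref{lem:flipexists} with decomposability parameter $\delta$, decoder parameter $\varepsilon'$, and $Z = Z_T^{(k)}$, $Z_N = Z_N$, $\td Z = \tilde Z_T^{(k)}$: because the $\varepsilon'$-run has halted, no nonzero local $C_1^\perp$ codeword $x$ on any $Q(v)$ satisfies $|\tilde Z_T^{(k)}| - |\tilde Z_T^{(k)}+x| \ge (1-\varepsilon')|x|$, so the set $\mc F^*$ drawn from the $\delta$-decomposition of $Z_T^{(k)}$ is empty. Lemma~\ref{lem:flipexists} then yields $c_1 |Z_T^{(k)}| \le c_2 |Z_N|$ with $c_1 = (\varepsilon'-2\delta)/(\varepsilon'(1-\delta))$ and $c_2 = 2/\varepsilon'$, and since $\varepsilon' > 2\delta$ makes $c_1 > 0$ this gives $|Z_T^{(k)}| \le \frac{2(1-\delta)}{\varepsilon'-2\delta}|Z_N| = \zeta|Z_N|$. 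Finally $|\tilde Z_T^{(k)}| \le |Z_T^{(k)}| + |Z_N| \le (1+\zeta)|Z_N|$, and the elementary bound $|Z_N| \le \Delta^2|D|_V$ from~\eqref{eq:boundonZN} closes the proof.

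The step I expect to require the most care is the first one: certifying that the genuinely parallel dynamics can be faithfully serialized into sequential steps with parameter $1/2$, so that Lemma~\ref{lemma:LZdecoderworks} applies verbatim. This hinges on the disjointness of local views within each class $V_{ij}$ (so that the order of updates inside a class is irrelevant and acceptance conditions coincide with the sequential ones) and on the observation that the parallel decoder never performs a move outside the sequential decoder's allowed set of moves. Once this is in place, the remainder is a direct chaining of Lemma~\ref{lemma:LZdecoderworks}, the empty-$\mc F^*$ case of Lemma~\ref{lem:flipexists}, and the bound on $|Z_N|$.
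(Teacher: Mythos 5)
Your proposal is correct and follows essentially the same route as the paper's own proof: serialize the $k$ parallel iterations into an initial segment of a sequential run with parameter $1/2$, continue with parameter $\varepsilon'$, invoke Lemma~\ref{lemma:LZdecoderworks} to get $\delta$-decomposability of $Z^{(k)}_T$, and then apply the empty-$\mathcal{F}^*$ case of Lemma~\ref{lem:flipexists} together with the bound $|Z_N|\le\Delta^2|D|_V$. The one place you go further than the paper is the serialization step: the paper simply asserts that the parallel sequence ``can equivalently be considered'' a sequential run with parameter $1/2$, whereas you supply the actual justification (pairwise disjointness of local views within each class $V_{ij}$, which makes the acceptance tests order-independent within a batch and reduces them to the sequential $\varepsilon=1/2$ condition) — a welcome addition of detail rather than a different approach.
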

\begin{proof}
Suppose that $\mathcal{F} = \{x_i\}_{i=1}^K$ are the codewords which have been found by the parallel decoder after $k$ iterations. Note that we can equivalently consider the same sequence to be obtained by running the sequential decoder with parameter $1/2$, i.e., we can consider $\tilde{Z}^{(k)}$ to be a state of the mismatch after $K$ iterations of sequential decoding with parameter $1/2$. It follows that $\tilde{Z}_{T}^{(k)}$ is a mismatch obtained by first running the sequential decoder with input $\tilde{Z}$ and parameter $1/2$ for $K$ iterations, and then switching to parameter $\varepsilon'$ for the remaining iterations. 

Applying Lemma~\ref{lemma:LZdecoderworks} with $\varepsilon = 1/2$, our assumptions on $|e|_R$ and $|D|_V$ imply that $Z^{(k)}_T$ is $\delta$-decomposable. Next, applying Lemma~\ref{lem:flipexists} (with $\varepsilon'$ as $\varepsilon$), it follows that
\begin{align}
    |Z_T^{(k)}| \le \frac{2(1-\delta)}{\varepsilon' - 2\delta}|Z_N|\,.
\end{align}
We then have
\begin{align}
    |\tilde{Z}_T^{(k)}| = |Z_T^{(k)} + Z_N| \le |Z_T^{(k)}| + |Z_N| \le \left(1 + \frac{2(1-\delta)}{\varepsilon' - 2\delta}\right)|Z_N| \le \left(1 + \frac{2(1-\delta)}{\varepsilon' - 2\delta}\right)\Delta^2|D|_V.
\end{align}
\end{proof}

For simplicity, we take $\varepsilon' = 3\delta$ in the following theorem. Note that this sets an upper bound on $\delta$ so that $\delta < 1/18$.

\begin{theorem}[Main Theorem for the Parallel Decoder]\label{thm:parallel_residual}
    Let $e$ be an error and $D$ be a syndrome error. Let $\tilde{Z}$ be the initial (noisy) mismatch associated with $e$ and $D$. Assume that
    \begin{align}
        A|e|_R+B|D|_V &\le \min\left(C_\delta n, d/\Delta\right)\, .
    \end{align}
    Then after $k$ iterations of parallel decoding, the decoder returns a correction $\hat{f}^{(k)}$ such that
    \begin{align}
    |e+\hat{f}^{(k)}|_R \le  \alpha_k |e|_R +  \beta |D|_V\,,
    \end{align}
    where
    \begin{align}
        \alpha_k = \frac{24}{5\kappa}(1-\gamma)^k,\quad \beta = \frac{6}{\kappa\delta}\Delta^2,\quad\text{and}\quad \gamma = (1-18\delta)/16\, .
    \end{align}
\end{theorem}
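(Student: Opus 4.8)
The plan is to reduce everything to the sequential analysis already established, exploiting the observation used in Lemma~\ref{lem:parallelresidualmismatchbound} that $k$ iterations of the parallel decoder coincide with some number $K$ of steps of the sequential mismatch decomposition run with parameter $1/2$ (within a fixed part $V_{ij}$ the local updates act on disjoint local views, so they commute and can be serialized without changing the decoding conditions). Throughout I fix $\varepsilon' = 3\delta$, so that $\gamma = \tfrac{1}{16}(1-6\varepsilon') = (1-18\delta)/16$ and the constraint $\delta < 1/18$ guarantees $\varepsilon' \in (0,1/6)$ and $\varepsilon'-2\delta = \delta > 0$, meeting the hypotheses of Lemmas~\ref{lem:parallelmismatchreduction},~\ref{lem:flipexists}, and~\ref{lem:parallelresidualmismatchbound}. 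The standing assumption $A|e|_R+B|D|_V \le \min(C_\delta n, d/\Delta)$ simultaneously supplies the decomposability hypothesis of Lemma~\ref{lemma:LZdecoderworks} (taken with $\varepsilon=1/2$) and the distance hypothesis of Lemma~\ref{lem:Z_Tsoundness}.

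\emph{Step 1: geometric decay of the noisy mismatch.} Lemma~\ref{lem:parallelmismatchreduction} with $\varepsilon'=3\delta$ gives $|\tilde{Z}^{(k)}|-|\tilde{Z}^{(k+1)}| \ge \gamma\big(|\tilde{Z}^{(k)}|-|\tilde{Z}^{(k)}_T|\big)$, while Lemma~\ref{lem:parallelresidualmismatchbound} gives $|\tilde{Z}^{(k)}_T| \le (1+\zeta)\Delta^2|D|_V$ with $\zeta = 2(1-\delta)/\delta$. Together these yield the affine recursion $|\tilde{Z}^{(k+1)}| \le (1-\gamma)|\tilde{Z}^{(k)}| + \gamma(1+\zeta)\Delta^2|D|_V$. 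Summing the geometric series and bounding the initial value via $\tilde{Z}^{(0)} = \tilde{Z} = Z+Z_N$, Lemma~\ref{lem:minimalZbound} ($|Z|\le 4|e|_R$), and $|Z_N| \le \Delta^2|D|_V$, one obtains
\[
|\tilde{Z}^{(k)}| \le 4(1-\gamma)^k|e|_R + \big((1-\gamma)^k+1+\zeta\big)\Delta^2|D|_V \le 4(1-\gamma)^k|e|_R + \tfrac{2}{\delta}\Delta^2|D|_V,
\]
using $(1-\gamma)^k\le 1$ and the identity $2+\zeta = 2/\delta$.

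\emph{Step 2: from the mismatch back to the error, and assembly.} The correction returned after $k$ iterations satisfies $e+\hat f^{(k)} = \tilde{e}^{(k)}$, which differs from the noiseless state $e^{(k)}$ only by $Z_N^{01}$, so $|e+\hat f^{(k)}|_R \le |e^{(k)}|_R + \Delta^2|D|_V$. I cannot apply soundness directly to $Z^{(k)}$ since it need not be $\delta$-decomposable at a non-halting state, so I continue the equivalent sequential run from $\tilde{Z}^{(k)}$ with parameter $\varepsilon'=3\delta$ until it halts at a step $T$, producing $\tilde{Z}^{(k)}_T$ and error $e^{(k)}_T$. By Lemma~\ref{lemma:LZdecoderworks} (with $\varepsilon=1/2$, $t=K$), $Z^{(k)}_T$ is $\delta$-decomposable; since $e^{(k)}_T$ is $V_{10}$-weighted with associated mismatch $Z^{(k)}_T$ (Lemmas~\ref{lemma:V10weightedmismatchZ0} and~\ref{lemma:V10weightedmismatchinduction}), Lemma~\ref{lem:Z_Tsoundness} gives $|e^{(k)}_T|_R \le \tfrac{1}{(1-\delta)\kappa}|Z^{(k)}_T| \le \tfrac{\zeta}{(1-\delta)\kappa}\Delta^2|D|_V$, the last step using the internal bound $|Z^{(k)}_T|\le \zeta|Z_N|$ from the proof of Lemma~\ref{lem:parallelresidualmismatchbound}. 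The flip-sets applied between steps $K$ and $T$ have total weight at most $\tfrac{1}{\kappa}\sum_{i>K}|x_i| \le \tfrac{1}{\kappa(1-\varepsilon')}\big(|\tilde{Z}^{(k)}|-|\tilde{Z}^{(k)}_T|\big) \le \tfrac{1}{\kappa(1-3\delta)}|\tilde{Z}^{(k)}|$, using $|f_i|\le|x_i|/\kappa$ and the parameter-$\varepsilon'$ decoding condition, whence $|e^{(k)}|_R \le |e^{(k)}_T|_R + \tfrac{1}{\kappa(1-3\delta)}|\tilde{Z}^{(k)}|$. Substituting the Step~1 bound on $|\tilde{Z}^{(k)}|$ gives
\[
|e+\hat f^{(k)}|_R \le \frac{4(1-\gamma)^k}{\kappa(1-3\delta)}|e|_R + \Big(\frac{2}{\delta\kappa(1-3\delta)} + \frac{2}{\delta\kappa} + 1\Big)\Delta^2|D|_V,
\]
and $\delta<1/18$ gives $1-3\delta>5/6$, hence $\tfrac{4}{1-3\delta}<\tfrac{24}{5}$ (so $\alpha_k = \tfrac{24}{5\kappa}(1-\gamma)^k$) and a short estimate of the bracketed coefficient (using $\delta\kappa<1$) bounds it below $\tfrac{6}{\kappa\delta}$ (so $\beta = \tfrac{6}{\kappa\delta}\Delta^2$).

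The main obstacle is Step~2: the soundness guarantee (Lemma~\ref{lem:Z_Tsoundness}) is available only at halting states of the sequential process, whereas the parallel decoder is truncated after a fixed number $k$ of iterations at a generally non-halting state. Bridging this gap by running to a halting point with the \emph{smaller} parameter $\varepsilon'=3\delta$, and — crucially — bounding the additional flip weight by $\tfrac{1}{\kappa(1-\varepsilon')}|\tilde{Z}^{(k)}|$ rather than by any cumulative quantity, is exactly what lets the $|e|_R$-coefficient inherit the geometric factor $(1-\gamma)^k$ proved in Step~1; everything else is assembling already-proven lemmas and tracking constants.
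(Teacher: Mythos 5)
Your proof is correct and follows essentially the same route as the paper's: you use the geometric decay recursion (Lemmas~\ref{lem:parallelmismatchreduction} and~\ref{lem:parallelresidualmismatchbound}), extend the parallel run with the sequential decoder at parameter $\varepsilon'$ to a halting state so that Lemma~\ref{lemma:LZdecoderworks} and Lemma~\ref{lem:Z_Tsoundness} apply, and bound the extra flip weight by $\frac{1}{\kappa(1-\varepsilon')}|\tilde{Z}^{(k)}|$ so the $(1-\gamma)^k$ factor propagates into the $|e|_R$ coefficient. The only cosmetic difference is that you substitute $|\tilde{Z}|\le 4|e|_R+\Delta^2|D|_V$ into the decayed mismatch bound before assembling the final inequality, whereas the paper substitutes at the end; the resulting constants coincide.
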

\begin{proof}
    Applying Lemmas~\ref{lem:parallelmismatchreduction} and~\ref{lem:parallelresidualmismatchbound}, it follows that the mismatch after $k$ iterations of parallel decoding is bounded above as
    \begin{align}
         |\td Z^{(k)}| \leq \left(1-\gamma\right)|\td Z^{(k-1)}| + \gamma(1+\zeta)\Delta^2|D|_V \, .
    \end{align}
    Summing this inequality over $k$ gives
    \begin{align}
        |\td Z^{(k)}| &\leq (1-\gamma)^k|\td Z| + \gamma(1+\zeta)\Delta^2|D|_V\left(1+(1-\gamma)+(1-\gamma)^2+\ldots + (1-\gamma)^{k-1}\right)\\
        &\leq (1-\gamma)^k|\td Z| + (1+\zeta)\Delta^2|D|_V\, .\label{eq:parallel_weight_bound}
    \end{align}
    Next, let $\tilde{e}^{(k)}$ denote the state of the error after $k$ iterations of parallel decoding. Let $\tilde{e}^{(k)}_T$ denote the state of the error after $T$ additional iterations of sequential decoding with parameter $\varepsilon'$. Let us write
    \begin{align}
        \tilde{e}^{(k)}_T = \tilde{e}^{(k)}+\sum_{i=1}^Tf_i,
    \end{align}
    where $\{f_i\}_{i=1}^T$ are the associated flip-sets with parameter $\varepsilon'$. It follows from Lemma~\ref{lemma:V10weightedmismatchinduction} that ${e}^{(k)}_T$ is $V_{10}$-weighted with associated mismatch ${Z}^{(k)}_T$. Lemma~\ref{lem:Z_Tsoundness} then implies that
    \begin{align}
        |{e}^{(k)}_T|_R\le \frac{1}{(1-\delta)\kappa}|Z^{(k)}_T|\le \frac{\zeta}{(1-\delta)\kappa}|Z_N| \le \frac{\zeta}{(1-\delta)\kappa}\Delta^2|D|_V.\label{eq:line0}
    \end{align}
    It remains to bound the weight of $|\td e^{(k)}|_R$. We have
    \begin{align}
        |\tilde{e}^{(k)}|_R &\le |e^{(k)}|_R + \Delta^2|D|_V\label{eq:line1}\\
        &\le \left|e^{(k)}_T + \sum_{i = 1}^Tf_i \right|_R + \Delta^2|D|_V\\
        &\le |e^{(k)}_T|_R + \sum_{i = 1}^T|f_i| + \Delta^2|D|_V\\
        &\le \frac{\zeta}{(1-\delta)\kappa}\Delta^2|D|_V + \frac{1}{\kappa}\sum_{i=1}^T|x_i| + \Delta^2|D|_V\label{eq:line2}\\
        &\le \left(1+\frac{\zeta}{(1-\delta)\kappa}\right)\Delta^2|D|_V + \frac{1}{(1-\varepsilon')\kappa}\left(|\tilde{Z}^{(k)}| - |\tilde{Z}_T^{(k)}|\right)\label{eq:line3}\\
        &\le \left(1+\frac{\zeta}{(1-\delta)\kappa}\right)\Delta^2|D|_V + \frac{1}{(1-\varepsilon')\kappa}|\tilde{Z}^{(k)}|\\
        &\le \left(1+\frac{\zeta}{(1-\delta)\kappa}\right)\Delta^2|D|_V + \frac{1+\zeta}{(1-\varepsilon')\kappa}\Delta^2|D|_V + \frac{(1-\gamma)^k}{(1-\varepsilon')\kappa}|\tilde{Z}|\, .\label{eq:line4}
    \end{align}
In the above, the first inequality~\eqref{eq:line1} follows from~\eqref{ineq:8}. Inequality~\eqref{eq:line2} follows from~\eqref{eq:line0} and the $\kappa$-product-expansion of the local code. Inequality~\eqref{eq:line3} follows from the fact that each local codeword $x_i$ satisfies the decoding condition with parameter $\varepsilon'$. Finally, inequality~\eqref{eq:line4} follows from~\eqref{eq:parallel_weight_bound}.

Using the fact that $|\td Z| \le 4|e|_R + \Delta^2 |D|_V$, we can rewrite the inequality above in terms of $|e|_R$ and $|D|_V$  following the same steps used in \eqref{ineq:4} to \eqref{ineq:5}. This gives us
\begin{align}
        |\tilde{e}^{(k)}|_R \le \left(1+\frac{\zeta}{(1-\delta)\kappa} + \frac{1+\zeta}{(1-\varepsilon')\kappa} + \frac{(1-\gamma)^k}{(1-\varepsilon')\kappa}\right)\Delta^2|D|_V + \frac{4(1-\gamma)^k}{(1-\varepsilon')\kappa}|e|_R \, .
    \end{align}
Finally, setting $\varepsilon' = 3\delta$, and using the fact that $\kappa \le 1$~\cite{KPTwoSided}, we can relax the inequality above slightly to get $4/((1-\varepsilon')\kappa)\le 24/(5\kappa)$, as well as
\begin{align}
    1+\frac{\zeta}{(1-\delta)\kappa} + \frac{1+\zeta}{(1-\varepsilon')\kappa} + \frac{(1-\gamma)^k}{(1-\varepsilon')\kappa} &\le \frac{1}{\kappa}\left(1 + \frac{2}{\delta} + \frac{2-\delta}{1-3\delta}\cdot\frac{1}{\delta} + \frac{1}{1-3\delta}\right)\\
    &\le \frac{1}{\kappa}\left(1 + \frac{2}{\delta} + \frac{2}{1-3\delta}\cdot\frac{1}{\delta}\right)\\
    &\le \frac{6}{\kappa\delta}\, ,
\end{align}
which holds for $\delta \in (0, 1/18)$.
\end{proof}

\section{Discussion}\label{sec:discussion}
In our article, we have shown that quantum Tanner codes admit single-shot QEC. Given information from a single round of noisy measurements, the mismatch decomposition decoder~\cite{leverrier2022parallel} is able to output a correction that is close to the data error that occurred. For a variety of noise models, including adversarial or stochastic noise, the single-shot decoder is able to maintain the encoded quantum information for up to an exponential number of correction rounds. The parallelized version of the decoder can be run in constant time while keeping the residual error small. During readout, a logarithmic number of iterations suffices to recover the logical information. 

One may also ask about the possibility of single-shot QEC with other decoders for good QLDPC codes. Due to the close connection between the decoders analyzed here and the potential-based decoder for quantum Tanner codes in Ref.~\cite{gu2022efficient} (for example, the ability to map between candidate flip sets for both types of decoders), a corollary of the proofs presented here is that the potential-based decoder also has the single-shot property. Likewise, under the mapping of errors shown in Ref.~\cite{leverrier2022sequential}, the decoders considered here are applicable to the original good QLDPC codes by Panteleev and Kalachev~\cite{PK21}. Our analysis does not straightforwardly carry over to the code and decoder proposed in Ref.~\cite{dinur2022good}, and it remains to be seen whether that construction also admits single-shot decoding.

We further remark that all known constructions of asymptotically good QLDPC codes admit a property called small-set (co)boundary expansion~\cite{kaufman2013high}, which in the case of quantum Tanner codes, was used to prove the No Low-Energy Trivial States (NLTS) conjecture (see Property~1 of reference~\cite{anshu2022nlts}). 
Small-set (co)boundary expansion is also equivalent to the notion of soundness~\cite{PRXQuantum.2.020340}, which lower bounds the syndrome weight by some function of reduced error weight. Indeed, soundness is a strong indication of single-shot decodability. Similarly, quantum locally testable codes~\cite{aharonov2015quantum,hastings2016quantum,leverrier2022towards,cross2022quantum, wills2023general} admit analogous soundness properties, although decoders for such codes are unexplored. Note that in our proof, what we needed was a notion of soundness for the mismatch vector (see Lemma~\ref{lem:Zsoundness}), which is distinct from the usual notion of soundness for the syndrome. The weight of the mismatch is in general incomparable to the weight of the syndrome, so the precise relation between these two definitions of soundness is not well understood.

In conclusion, our results can be viewed as a step toward making general QLDPC codes more practical. While many challenges still remain, there have been promising developments in this direction~\cite{gottesman2014faulttolerant,Cohen2022,Tremblay2022,pattison2023hierarchical}. We believe that quantum LDPC codes, similar to classical LDPC codes, will constitute the gold standard for future quantum telecommunication technologies and form the backbone of resource-efficient quantum fault-tolerant protocols.

\begin{acknowledgements}
We would like to thank Robert K\"onig, Anthony Leverrier and Chris Pattison for inspiring discussions on single-shot QEC and QLDPC codes. S.G. acknowledges funding from the U.S. Department of Energy (DE-AC02-07CH11359), and the National Science Foundation (PHY-1733907). The Institute for Quantum Information and Matter is an NSF Physics Frontiers Center. E.T. acknowledges funding from the Sloan Foundation, DARPA 134371-5113608, and DOD KK2014. L.C. and S.C. gratefully acknowledge support by the European Research Council under grant agreement no.~101001976 (project EQUIPTNT). Z.H. would like to thank NSF grant CCF 1729369 for support.

\section*{Data availability}

Data sharing not applicable to this article as no datasets were generated or analyzed during the current study.
\end{acknowledgements}

\interlinepenalty=10000
\bibliography{bib}

\end{document}